\newcommand{\oomit}[1]{}
\newtheorem{example}{Example}
\newtheorem{definition}{Definition}
\newtheorem{proposition}{Proposition}
\newtheorem{lemma}{Lemma}
\newtheorem{theorem}{Theorem}
\newtheorem{corollary}{Corollary}
   \newtheorem{assumption}{Assumption}
\newtheorem{remark}{Remark}
\newcommand{\oomit}[1]{}
\crefname{hypothesis}{Hypothesis}{Hypotheses}
\Crefname{ALC@unique}{Line}{Lines}
\def\@makefnmark}
\def\@makefnmark{}\def\useless@macro}
\colorlet{texcscolor}{blue!50!black}
\colorlet{texemcolor}{red!70!black}
\colorlet{texpreamble}{red!70!black}
\colorlet{codebackground}{black!25!white!25}
\lstdefinestyle{siamlatex}{%
  style=tcblatex,
  texcsstyle=*\color{texcscolor},
  texcsstyle=[2]\color{texemcolor},
  keywordstyle=[2]\color{texemcolor},
  moretexcs={cref,Cref,maketitle,mathcal,text,headers,email,url},
}
\DeclareTotalTCBox{\code}{ v O{} }
{ %fontupper=\ttfamily\color{texemcolor},
  fontupper=\ttfamily\color{black},
  nobeforeafter,
  tcbox raise base,
  colback=codebackground,colframe=white,
  top=0pt,bottom=0pt,left=0mm,right=0mm,
  leftrule=0pt,rightrule=0pt,toprule=0mm,bottomrule=0mm,
  boxsep=0.5mm,
  #2}{#1}
\patchcmd\newpage{\vfil}{}{}{}
\def\thanks#1{\protected@xdef\@thanks{\@thanks
        \protect\footnotetext{#1}}}
\title{Synthesizing Robust Domains of Attraction for State-Constrained Perturbed Polynomial Systems}
\author{Bai Xue$^{1,2}$, Qiuye Wang$^{1,2}$, Naijun Zhan$^{1,2}$, Shijie Wang$^3$ and Zhikun She$^3$\thanks{1. State Key Lab. of Computer Science, Institute of Software Chinese Academy of Sciences, Beijing, China (\{xuebai,wangqye,znj\}@ios.ac.cn).\newline 
2. University of Chinese Academy of Sciences, Beijing, China \newline
3. School of Mathematics and Systems Science, Beihang University, Beijing, China (\{by1609109,zhikun.she\}@buaa.edu.cn)}}
\begin{document}
\maketitle

%% ------------------------------------------------------------------
%% ABSTRACT
%% ------------------------------------------------------------------
%\begin{tcbverbatimwrite}{tmp_\jobname_abstract.tex}
\begin{abstract}
In this paper we propose a novel semi-definite programming based method to compute robust domains of attraction for state-constrained perturbed polynomial systems. A robust domain of attraction is a set of states such that every trajectory starting from it will approach an equilibrium while never violating a specified state constraint, regardless of the actual perturbation. The semi-definite program is constructed by relaxing a generalized Zubov's equation. The existence of solutions to the constructed semi-definite program is guaranteed and there exists a sequence of solutions such that their strict one sub-level sets inner-approximate the interior of the maximal robust domain of attraction in measure under appropriate assumptions. Some illustrative examples demonstrate the performance of our method.
\end{abstract}
\begin{keywords}
 Robust Domains of Attraction; State-Constrained Perturbed Polynomial Systems; Semi-definite Programs.
\end{keywords}

%\begin{AMS}
%  00A20
%\end{AMS}
%\end{tcbverbatimwrite}
%\input{tmp_\jobname_abstract.tex}
%% ------------------------------------------------------------------
%% END HEADER
%% ------------------------------------------------------------------
\section{Introduction}
\label{sec:intro}
A robust domain of attraction of interest in this paper is a set of states from which the system will finally approach an equilibrium while never breaching a specified state constraint regardless of the actual perturbation. Computing it is a fundamental task in the analysis of dynamical systems such as power systems \cite{abu1981} and turbulence phenomena in fluid dynamics \cite{baggett1997}.

Existing approaches to approximating robust domains of attraction can be classified into non-Lyapunov and Lyapunov based categories. Non-Lyapunov based approaches include, but are not limited to, trajectory reversing methods \cite{genesio1985}, polynomial level-set methods \cite{wang2013} and reachable set computation based methods \cite{Korda13,henrion2014,el2017}. Contrasting with non-Lyapunov based methods, Lyapunov based methods are still dominant in estimating robust domains of attraction, e.g., \cite{vannelli1985,khalil1996,chesi2004estimating,Stefan2010,giesl2015}. Such methods are based on the search of a Lyapunov function $V(\bm{x})$ and  a positive scalar $b$ such that the first-order Lie derivative of $V(\bm{x})$ is negative over the sub-level set $C=\{\bm{x}\mid V(\bm{x})\leq b\}$. Given such $V(\bm{x})$ and $b$, it can be shown that the connected component of $C$ containing the equilibrium is a robust domain of attraction. Generally, the search for Lyapunov functions is non-trivial for nonlinear systems due to the non-constructive nature of the Lyapunov theory, apart from some cases where the Jacobian matrix of the linearized system associated with the nonlinear system of interest is Hurwitz. However, with the advance of real algebraic geometry \cite{parrilo2000,bochnak2013} and polynomial optimization  \cite{putinar93,chesi1999,henrion2005} in the last decades, especially the sum-of-squares (SOS) decomposition technique, finding a Lyapunov function which is decreasing over a given state constraint set can be reduced to a convex programming problem for polynomial systems. This results in a large amount of findings which adopt convex optimization based approaches to the search for polynomial Lyapunov functions, e.g., \cite{papachristodoulou2002,chesi2013,anderson2015} and the references therein. However, if we return to the problem of estimating robust domains of attraction, it resorts to addressing a bilinear semi-definite program, e.g., \cite{jarvis2003,tan2008,topcu2010,giesl2015}, which falls within the non-convex programming framework and is notoriously hard to solve \cite{boyd1997}. Moreover, the existence of solutions to (bilinear) semi-definite programs is not explored theoretically in the existing literature.

Another way to compute Lyapunov functions and estimate robust domains of attraction is based on solving the Zubov's equation \cite{zubov1964}, which is a Hamilton-Jacobi type partial differential equation. Zubov's equation was originally inferred to describe the maximal domain of attraction for nonlinear systems free of perturbation inputs and state constraints. Recently, it was extended to perturbed nonlinear systems in \cite{camilli2001} and further to state-constrained perturbed nonlinear systems in \cite{grune2015}. The appealing aspect of Zubov's method in \cite{grune2015} is that it touches upon the problem of computing the \emph{maximal} robust domain of attraction, whose interior is described exactly via the strict one sub-level set of the unique viscosity solution to a generalized Zubov's equation. However, it is well-known that it is notoriously hard to solve the Zubov type equation generally \cite{zubov1964,grune2015}. \oomit{Existing numerical methods for solving Hamilton-Jacobi type partial differential equations such as Zubov's equation require gridding state and perturbation spaces \cite{camilli2001,mitchell2007,bokanowski2013roc,grune2015}, thereby exhibiting exponential growth in computational complexity
with the number of state and perturbation variables and rendering Zubov's method suitable for systems with dimension less than six \cite{bansal2017,xue2019robust}.}

In this paper we propose a novel semi-definite programming based method to compute robust domains of attraction for state-constrained perturbed polynomial systems. We first customize the generalized Zubov's equation in \cite{grune2015} to characterize the maximal robust domain of attraction based on Kirszbraun's extension theorem for Lipschitz maps. Then we relax the customized Zubov's equation into a system of inequalities and further encode these inequalities in the form of sum-of-squares constraints such that a robust domain of attraction can be generated via solving a semi-definite program. The semi-definite program falls within the convex programming framework and can be solved efficiently in polynomial time via interior-point methods, consequently providing a practical method for computing robust domains of attraction. Under appropriate assumptions, the existence of solutions to the constructed semi-definite program is guaranteed and there exists a sequence of solutions such that their strict one sub-level sets inner-approximate the interior of the maximal robust domain of attraction in measure. Some illustrative examples demonstrate the performance of our method. 

The main contributions of this paper are summarized as follows.
\begin{enumerate}
\item A novel semi-definite programming based method is proposed to synthesize robust domains of attraction. The semi-definite program is constructed based on a customized Zubov's equation of the form in \cite{grune2015}. Unlike \cite{grune2015} which reduces the problem of computing robust domains of attraction to a problem of solving a Zubov type equation, our approach reduces the robust domains of attraction generation problem to a semi-definite programming problem. 
\item Under appropriate assumptions, the existence of solutions to the constructed semi-definite program is guaranteed and there exists a sequence of solutions such that their strict one sub-level sets inner-approximate the interior of the maximal robust domain of attraction in measure.
\end{enumerate}

This paper is structured as follows. In Section \ref{Pre} we introduce basic notations used throughout this paper and the robust domains of attraction generation problem of interest. In Section \ref{roa} we detail our semi-definite programming based method for computing robust domains of attraction for state-constrained perturbed polynomial systems. After evaluating this approach on five illustrative examples in Section \ref{ex}, we conclude our paper  in Section \ref{con}.
%Results on the existence of Lyapunov functions establishing stability of ODEs and estimates the ROA are classified as converse Lyapunov theorems. Among the class of converse Lyapunov results, there are two sub-types. The first, Massera-type establishes existence of smooth Lyapunov functions on bounded subsets of the ROA with quantitative upper and lower bounds. The second, maximal-type, establish Lyapunov functions which approach infinity at the boundary of the ROA and whose level sets form asymptotically accurate inner-approximations to the ROA. However, despite extensive literature on converse Lyapunov theory, there are very few results on the existence of polynomial Lyapunov functions and in particular, it has never been shown that fro any desired accuracy $\epsilon>0$, there exists a polynomial LF for which a sub-level set is within the Hausdorff $\epsilon-$distance of the ROA. Furthermore, there is reason to doubt the existence of such polynomial Lyapunov functions. For instance, in [] we find an example of a globally asymptotically stable polynomial system for which there exists no polynomial LF proving the global stability of the system. These counter examples motivates us to investigate the conditions under which the sub-level sets of polynomial LF can accurately approximate the ROA of nonlinear systems.

\section{Preliminaries}
\label{Pre}
In this section we first formulate the problem of generating robust domains of attraction in Subsection \ref{mrda}, and then introduce the concept of the maximal robust domain of uniform attraction in Subsection \ref{umrda}. The maximal robust domain of uniform attraction, which is the interior of the maximal robust domain of attraction, plays an important role for synthesizing robust domains of attraction.

The following basic notations will be used throughout the rest of this paper: $\mathbb{N}$ denotes the set of non-negative integers.  $\mathbb{R}^n$ denotes the set of $n$-dimensional real vectors. $\mathbb{R}[\cdot]$ denotes the ring of polynomials in variables given by the argument. $\mathbb{R}_k[\cdot]$ denotes the set of real polynomials of degree at most $k$ in variables given by the argument, $k\in\mathbb{N}$.
$C^1(\mathbb{R}^n)$ denotes the set of continuously differentiable functions over $\mathbb{R}^n$. $\|\bm{x}\|$ denotes the 2-norm, i.e., $\|\bm{x}\|=\sqrt{\sum_{i=1}^n x_i^2}$, where $\bm{x}=(x_1,\ldots,x_n)^{\top}$. Vectors are denoted by boldface letters. $\Delta^{\circ}$, $\Delta^c$ and $\overline{\Delta}$ denote the interior, the complement and the closure of a set $\Delta$, respectively. $B(\bm{x},r)$ denotes the closed ball around $\bm{x}$ with radius $r>0$ in $\mathbb{R}^n$.

\subsection{Robust Domains of Attraction}
\label{mrda}
A state-constrained perturbed dynamical system of interest in this paper is of the following form:
\begin{equation}
\label{systems}
\dot{\bm{x}}(t)=\bm{f}(\bm{x}(t),\bm{d}(t)),
\end{equation}
where $\bm{x}(\cdot):[0,\infty)\rightarrow \mathcal{X}$, $\bm{d}(\cdot):[0,\infty)\rightarrow D$, $\mathcal{X}\subset \mathbb{R}^n$ is a bounded open set, $D=\{\bm{d} \in \mathbb{R}^m \mid \bigwedge_{i=1}^{m_D} [h_i^D(\bm{d})-1\leq 0]\}$ is a compact set in $\mathbb{R}^m$ with $h_i^D(\bm{d})\in \mathbb{R}[\bm{d}]$, and $\bm{f}(\bm{x},\bm{d})\in \mathbb{R}[\bm{x},\bm{d}]$, thus satisfying the local Lipschitz condition. 

Denote the set of admissible perturbation inputs as
\begin{equation*}
\mathcal{D}=\{\bm{d}(\cdot)\mid \bm{d}(\cdot): [0,+\infty)\rightarrow D~is~measurable\}.
\end{equation*}
As a consequence, for $\bm{x}_0\in \mathcal{X}$ and $\bm{d}(\cdot)\in \mathcal{D}$, there exists a unique absolutely continuous trajectory $\bm{\phi}_{\bm{x}_0}^{\bm{d}}(t)$ satisfying \eqref{systems} a.e. with $\bm{\phi}_{\bm{x}_0}^{\bm{d}}(0)=\bm{x}_0$ for some time interval $[0,T]$ with $T>0$ \cite{thompson2013}.

Additionally, we have Assumption \ref{property} for system \eqref{systems} throughout this paper.% we assume that the origin $\bm{x}=\bm{0}$ is an equilibrium for the system \eqref{systems} and that we have the following local uniform asymptotic stability property
\begin{assumption}
\label{property}
\begin{enumerate}
\item[(1)] $\bm{f}(\bm{0},\bm{d})=\bm{0}, \forall \bm{d}\in D$, i.e., the fixed point $\bm{x}=\bm{0}$ is invariant under all perturbations. 
\item[(2)] there exist positive constants $C$, $\sigma$, $\overline{r}$ such that
\begin{equation}
\label{epsilon}
\|\bm{\phi}_{\bm{x}_0}^{\bm{d}}(t)\|\leq C e^{-\sigma t}\|\bm{x}_0\|
\end{equation}
for $\bm{x}_0\in B(\bm{0},\overline{r})$ and $\bm{d}(\cdot)\in \mathcal{D}$, i.e., the equilibrium state $\bm{0}$ is uniformly locally exponentially stable for system \eqref{systems}. 
%\item[(2)] There exists a $\mathcal{KL}-$function $\overline{\beta}$ and  $\overline{\gamma}>0$ such that
%\[\|\bm{\phi}_{\bm{x}_0}^{\bm{d}}(t)\|\leq \overline{\beta}(\|\bm{x}_0\|,t)~\forall \bm{x}\in B(\bm{0},\overline{r}),\forall t\geq 0,\forall \bm{d}\in \mathcal{D}.\]
%This assumption implies the existence of $\overline{\epsilon}>0$ such that
%\begin{equation}
%\label{epsilon}
%\bm{\phi}_{\bm{x}_0}^{\bm{d}}(t)\in B(\bm{0},\frac{\overline{r}}{2})~\forall \bm{x}\in B(\bm{0},\overline{\epsilon}),\forall t\geq 0,\forall \bm{d}\in \mathcal{D}.
%\end{equation}
\item[(3)] The equilibrium $\bm{x}=\bm{0}$ resides in 
the interior of the state constraint set $\mathcal{X}$, i.e., there exists $\overline{r}>0$ such that $B(\bm{0},\overline{r})\subset \mathcal{X}$. We may without loss of generality assume that this $\overline{r}$ is the same as that in (2). In addition, we also assume that $\overline{r}$ is sufficiently small such that every trajectory starting within $B(\bm{0},\overline{r})$  will never leave the set $\mathcal{X}$. 
\item[(4)] The set $\mathcal{X}$ is of the following form
\begin{equation*}
\label{h}
\mathcal{X}=
\left\{\bm{x} \in \mathbb{R}^n\middle|\; \bigwedge_{i=1}^{n_{\mathcal{X}}} [h_{i}(\bm{x})< 1]\right\},
\end{equation*}
where $h_{i}(\bm{x})\in \mathbb{R}[\bm{x}]$ with $h_i(\bm{x})\geq 0$ over $\mathbb{R}^n$ and $h_i(\bm{0})=0$. Note that this implies that $\partial \mathcal{X} \subseteq \cup_{i=1}^{n_{\mathcal{X}}}\{\bm{x}\in \mathcal{X}\mid h_i(\bm{x})=1 \}$.
\end{enumerate}
\end{assumption}

\begin{remark}
The assumption that the fixed point $\bm{x}=\bm{0}$ is invariant under all perturbations is relatively conservative. One of application scenarios of this assumption is that the steady state of the real system is fixed however, an uncertain (or, perturbed) model rather than a deterministic model was established to capture behaviors of the system due to incomplete information, e.g., \cite{camilli2001,topcu2010,chesi2013}.
\end{remark}

Systems with locally exponentially stable equilibria are widely studied in the existing literature, e.g., \cite{khalil1996}. Since it is not enough to know that the system will converge to an equilibrium eventually in many applications, there is a need to estimate how fast the system approaches $\bm{0}$. The concept of exponential stability can be used for this purpose \cite{slotine1991}. 

The goal of this paper is to synthesize robust domains of attraction of the origin for system \eqref{systems}. The maximal robust domain of attraction, which is the set of states such that every possible trajectory starting from it will approach the origin while never leaving the state constraint set $\mathcal{X}$, is formally formulated in Definition \ref{RDA}.
\begin{definition}[(Maximal) Robust Domain of Attraction]
\label{RDA}
Denote \[\mathcal{D}_{ad}(\bm{x}_0)=\{\bm{d}(\cdot)\in \mathcal{D}\mid \bm{\phi}_{\bm{x}_0}^{\bm{d}}(t)\in \mathcal{X}\text{ for } t\in[0,\infty)\}.\] The maximal robust domain of attraction $\mathcal{R}$ is defined as
\begin{eqnarray}
\mathcal{R}:= \{\bm{x}_0\in \mathbb{R}^n\mid \mathcal{D}_{ad}(\bm{x}_0)=\mathcal{D}\text{ and } \nonumber  \lim_{t\rightarrow \infty}\bm{\phi}_{\bm{x}_0}^{\bm{d}}(t)=0\text{ for }\bm{d}(\cdot)\in \mathcal{D}\}.
\end{eqnarray}
A robust domain of attraction $\Omega$ is a subset of the maximal robust domain of attraction $\mathcal{R}$, i.e., $\Omega\subseteq \mathcal{R}$.
\end{definition}

\subsection{Robust Domains of Uniform Attraction}
\label{umrda}
In order to relate robust domains of attraction to a Zubov type equation, a uniform version of the maximal robust domain of attraction is presented in \cite{grune2015}. In this subsection we introduce the maximal robust domain of uniform attraction. 

 To this end, we define the distance between a point $\bm{x}\in \mathbb{R}^n$ and a set $A\subset \mathbb{R}^n$ as
$\mathtt{dist}(\bm{x},A):=\inf_{\bm{y}\in A}\|\bm{x}-\bm{y}\|$. Then, for $\varrho\geq 0$, we define the set of $\varrho-$admissible perturbation inputs as
\begin{equation*}
\mathcal{D}_{ad,\varrho}(\bm{x}_0):=\{\bm{d}(\cdot)\in \mathcal{D}\mid \mathtt{dist}(\bm{\phi}_{\bm{x}_0}^{\bm{d}}(t),\mathcal{X}^c)>\varrho\text{ for } t\in [0,\infty)\}.
\end{equation*}
Note that $\mathcal{D}_{ad,0}(\bm{x}_0)=\mathcal{D}_{ad}(\bm{x}_0)$. The maximal robust domain of uniform attraction is then defined by
\begin{equation*}
\label{uni}
\mathcal{R}_0:=\left\{\bm{x}_0\in \mathbb{R}^n\middle|\;
\begin{aligned}
&\text{there exists }\varrho > 0 \text{ with~} \mathcal{D}_{ad,\varrho}(\bm{x}_0)=\mathcal{D} \text{~and}\\
 & \text{~there exists a function } \beta(t) \text{ satisfying }\\
 & \lim_{t\rightarrow \infty}\beta(t)=0 \text{ with }\|\bm{\phi}_{\bm{x}_0}^{\bm{d}}(t)\|\leq \beta(t) \text{ for }\\
&t\in [0,\infty) \text{ and }\bm{d}(\cdot)\in \mathcal{D}.
\end{aligned}
\right\}.
\end{equation*}
The set $\mathcal{R}_0$ is a uniform version of $\mathcal{R}$ in the sense that for every initial state $\bm{x}_0\in \mathcal{R}_0$ the trajectories have a positive distance of at least $\varrho$ to $\mathcal{X}^c$ and converge  towards $\bm{0}$ with a speed characterized by $\beta(t)$. Neither $\varrho$ or $\beta(t)$ depends on $\bm{d}(\cdot)\in \mathcal{D}$. The relation between $\mathcal{R}_0$ and $\mathcal{R}$ is uncovered in Lemma \ref{relation}.

\begin{lemma}\cite{grune2015}
\label{relation}
The following two statements hold:
\begin{enumerate}
\item $\mathcal{R}_0$ is open.
\item $\mathcal{R}_0=\mathcal{R}^{\circ}$.
\end{enumerate}
\end{lemma}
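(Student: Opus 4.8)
The plan is to prove both items in one sweep, via four facts in order: a uniformly attracted neighbourhood of the origin; openness of $\mathcal R_0$; the inclusion $\mathcal R_0\subseteq\mathcal R$ (hence $\mathcal R_0\subseteq\mathcal R^\circ$ once openness is known); and $\mathcal R^\circ\subseteq\mathcal R_0$. Two tools recur throughout. First, because $\bm f\in\mathbb R[\bm x,\bm d]$ is smooth on all of $\mathbb R^n\times D$ and $D,\overline{\mathcal X}$ are compact, $\bm f$ is bounded and Lipschitz in $\bm x$ on any fixed compact set with a Lipschitz constant $L$ independent of $\bm d$; hence Gr\"onwall's inequality gives $\|\bm\phi_{\bm x}^{\bm d}(t)-\bm\phi_{\bm y}^{\bm d}(t)\|\le e^{Lt}\|\bm x-\bm y\|$, and the analogous estimate backward in time, \emph{uniformly in} $\bm d\in\mathcal D$. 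Second, with $r_0:=\overline r/(2C)$ (notation of Assumption~\ref{property}(2)--(3)), any trajectory started in $B(\bm 0,r_0)$ stays in $B(\bm 0,\overline r/2)\subset\mathcal X$, keeps distance $\ge\overline r/2$ from $\mathcal X^c$ (every point of $\mathcal X^c$ has norm $\ge\overline r$), and is bounded by $\tfrac{\overline r}{2}e^{-\sigma t}$, all independently of $\bm d$; in particular $B(\bm 0,r_0)\subseteq\mathcal R_0$.

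For openness: let $\bm x_0\in\mathcal R_0$ with witnesses $(\delta,\beta)$, and choose $T$ with $\beta(t)<r_0/2$ for $t\ge T$, so $\bm\phi_{\bm x_0}^{\bm d}(T)\in B(\bm 0,r_0/2)$ for every $\bm d$. For $\bm x_0'$ within $\tfrac12\min\{\delta,r_0\}e^{-LT}$ of $\bm x_0$, the uniform Gr\"onwall bound keeps $\bm\phi_{\bm x_0'}^{\bm d}$ within $\tfrac12\min\{\delta,r_0\}$ of the safe trajectory $\bm\phi_{\bm x_0}^{\bm d}$ on $[0,T]$ --- in particular inside $\mathcal X$, so the estimate is self-consistent --- after which it lies in $B(\bm 0,r_0)$ and the first fact takes over for all later times; reassembling the safety margin and the decay bound from the two windows gives $\bm x_0'\in\mathcal R_0$. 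The inclusion $\mathcal R_0\subseteq\mathcal R$ is then immediate: a strictly positive distance to the closed set $\mathcal X^c$ means the trajectory never leaves $\mathcal X$, and $\beta(t)\to0$ forces convergence to $\bm 0$. With openness, $\mathcal R_0\subseteq\mathcal R^\circ$.

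For the converse, let $\bm x_0\in\mathcal R^\circ$, so $B(\bm x_0,2\rho)\subseteq\mathcal R$ for some $\rho>0$. Note first that $\mathcal R$ is \emph{robustly forward invariant}: concatenating a perturbation that steers $\bm y\in\mathcal R$ to $\bm\phi_{\bm y}^{\bm d}(t)$ with an arbitrary continuation shows $\bm\phi_{\bm y}^{\bm d}(t)\in\mathcal R$. It then suffices to establish two uniformities --- (a) a $\bm d$-independent safety margin on each fixed window $[0,T]$, and (b) a $\bm d$-independent time $T$ with $\bm\phi_{\bm x_0}^{\bm d}(T)\in B(\bm 0,r_0)$ --- and to invoke the first fact from time $T$ on. For (a): if the distance of the reachable tube $\{\bm\phi_{\bm z}^{\bm d}(t):\bm z\in\overline{B(\bm x_0,\rho)},\,\bm d\in\mathcal D,\,t\in[0,T]\}$ to $\mathcal X^c$ were not bounded below, pick $\bm z_k,\bm d_k,t_k\le T$ with $\bm\phi_{\bm z_k}^{\bm d_k}(t_k)$ approaching $\mathcal X^c$; but the backward uniform Gr\"onwall estimate makes the flow map $\bm z'\mapsto\bm\phi_{\bm z'}^{\bm d_k}(t_k)$ carry $B(\bm z_k,\rho)\subseteq\mathcal R$ onto a set containing $B(\bm\phi_{\bm z_k}^{\bm d_k}(t_k),\rho e^{-LT})$, which lies in $\mathcal R\subseteq\mathcal X$ by forward invariance, forcing $\mathtt{dist}(\bm\phi_{\bm z_k}^{\bm d_k}(t_k),\mathcal X^c)\ge\rho e^{-LT}$ --- a contradiction.

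\textbf{The main obstacle is (b), the uniform capture time.} Each single trajectory from $\bm x_0$ does reach $B(\bm 0,r_0)$, but promoting this to one $T$ that works for the whole infinite-dimensional, non-compact class $\mathcal D$ requires a compactness argument, and I would pass to relaxed controls so that, by Arzel\`a--Ascoli, a sequence violating (b) at times $k\to\infty$ subconverges, locally uniformly, to a solution of $\dot{\bm\psi}\in\overline{\mathrm{co}}\,\bm f(\bm\psi,D)$ issuing from $\bm x_0$ that fails to converge to $\bm 0$; one then uses a relaxation/density theorem to approximate this trajectory on long-but-finite horizons by genuine trajectories started at points of $\overline{B(\bm x_0,\rho)}$, and the backward-Gr\"onwall/open-flow-map device of (a) to derive a contradiction with $B(\bm x_0,2\rho)\subseteq\mathcal R$. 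The delicate point --- and the reason one really needs $\bm x_0\in\mathcal R^\circ$ rather than merely $\bm x_0\in\mathcal R$ --- is ensuring that the degenerate limiting behaviour that relaxation can create is genuinely ruled out by the robustness margin. Once (a) and (b) hold, $\bm x_0\in\mathcal R_0$, so $\mathcal R^\circ\subseteq\mathcal R_0\subseteq\mathcal R^\circ$; hence $\mathcal R_0=\mathcal R^\circ$ and $\mathcal R_0$ is open.
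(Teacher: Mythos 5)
The paper does not actually prove this lemma---it is imported verbatim from \cite{grune2015}---so there is no in-paper argument to measure yours against; your attempt has to stand on its own. Most of it does: the local fact $B(\bm 0,r_0)\subseteq\mathcal{R}_0$ drawn from Assumption~\ref{property}(2)--(3), the Gr\"onwall-based openness proof, the inclusion $\mathcal{R}_0\subseteq\mathcal{R}^\circ$, and the uniform safety margin (a) via the backward-Gr\"onwall/open-flow-map device are all sound (for (a) you also need, and correctly use, that $\mathcal{R}\subseteq\mathcal{X}$ and that $\mathcal{R}$ is robustly forward invariant).

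The genuine gap is exactly where you flag it: step (b), the uniform capture time, which is the entire content of the hard inclusion $\mathcal{R}^\circ\subseteq\mathcal{R}_0$. Your sketch extracts, via Arzel\`a--Ascoli, a relaxed trajectory $\psi$ issuing from $\bm x_0$ that never enters $B(\bm 0,r_0)$, and then proposes to contradict $B(\bm x_0,2\rho)\subseteq\mathcal{R}$ by approximating $\psi$ on long finite horizons by genuine trajectories. As written this yields no contradiction: the relaxation theorem gives, for each $N$, a genuine trajectory from $\bm x_0$ avoiding $B(\bm 0,r_0/2)$ up to time $N$, and that is perfectly consistent with $\bm x_0\in\mathcal{R}$, since membership in $\mathcal{R}$ permits each individual trajectory to converge arbitrarily late. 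Nor can you splice the finite-horizon approximations into a single non-convergent genuine trajectory: the tracking error compounds like $e^{Lt}$, and any genuine trajectory from $\bm x_0$ must converge anyway, so that route is closed in principle. The contradiction therefore has to be extracted from the limit object $\psi$ itself---for instance from its $\omega$-limit set combined with the quantitative interiority $B(\psi(t),2\rho e^{-Lt})\subseteq\mathcal{R}$ that your step (a) already provides---and that is precisely the argument you acknowledge you have not supplied. Until it is closed, what you have proved is that $\mathcal{R}_0$ is open and $\mathcal{R}_0\subseteq\mathcal{R}^\circ$, not the equality.
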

$\mathcal{R}_0=\mathcal{R}^{\circ}$ implies that $\mathcal{R}_0$ and $\mathcal{R}$ coincide except for a set with void interior.

\section{Computation of Robust Domains of Attraction}
\label{roa}
In this section we detail our method for synthesizing robust domains of attraction. Subsection \ref{re} presents an auxiliary system, to which the global solution over $t\in [0,\infty)$ exists for every $\bm{x}\in \mathbb{R}^n$, and then Subsection \ref{srda} presents a customized Zubov's equation for state-constrained perturbed polynomial systems as well as a semi-definite programming based method for synthesizing robust domains of attraction. Finally, we show in Subsection \ref{AOE} that the constructed semi-definite program is able to generate a convergent sequence of robust domains of attraction to the maximal robust domain of uniform attraction in measure under appropriate assumptions.

\subsection{System Reformulation}
\label{re}
The system reformulation part in this subsection is similar to that in \cite{xue2019inner,xue2019robust}. Different from the present work, the problems of computing inner-approximations of backward reachable sets over finite time horizons and robust invariant sets over the infinite time horizon are respectively considered in \cite{xue2019inner,xue2019robust} based on relaxing Hamilton-Jacobi type partial differential equations. For self-containedness and ease of understanding, we also give it an appropriate description in this section. As $\bm{f}\in \mathbb{R}[\bm{x},\bm{d}]$ in system \eqref{systems}, $\bm{f}$ is only locally Lipschitz continuous over $\bm{x}$. Therefore, the existence of a global solution $\bm{\phi}_{\bm{x}_0}^{\bm{d}}(t)$ over $t\in [0,\infty)$ to system \eqref{systems} is not guaranteed for any initial state $\bm{x}_0\in \mathbb{R}^n$ and any perturbation input $\bm{d}(\cdot)\in \mathcal{D}$ \cite{csikja2007blow}. However, the existence of global solutions is a prerequisite for constructing the Zubov's equation, to which the strict one sub-level set of the viscosity solution characterizes the maximal robust domain of uniform attraction. In this subsection we construct an auxiliary system, to which the global solution over $t\in [0,\infty)$ with any initial state $\bm{x}_0\in \mathbb{R}^n$ and any perturbation input $\bm{d}(\cdot)\in \mathcal{D}$ exists. Also, its solution coincides with the solution to system \eqref{systems} over a compact set
\begin{equation}
\label{B}
B(\bm{0},R)=\{\bm{x}\in \mathbb{R}^n\mid h(\bm{x})\geq 0\},
\end{equation}
 where $h(\bm{x})= R-\|\bm{x}\|^2$. The compact set $B(\bm{0},R)$ is chosen to satisfy $\mathcal{X}\subset B(\bm{0},R)$ and $\partial \mathcal{X}\cap \partial B(\bm{0},R)=\emptyset$.
Such $R$ in \eqref{B} exists since $\mathcal{X}$ is a bounded set in $\mathbb{R}^n$. The set $B(\bm{0},R)$ in \eqref{B} plays three important roles in our semi-definite programming based approach, which will be shown in Subsection \ref{srda}.
\begin{enumerate}
\item The condition $\mathcal{X}\subseteq B(\bm{0},R)$ guarantees that the maximal robust domain of uniform attraction $\mathcal{R}_0$ for system \eqref{systems} can be exactly characterized by trajectories of the auxiliary system \eqref{sys1}, as formulated in Proposition \ref{eqiva}.
\item The condition $\partial \mathcal{X}\cap \partial B(\bm{0},R)=\emptyset$ assures that the strict one sub-level set of the approximating polynomial returned by solving the semi-definite program \eqref{sos} in Subsection \ref{srda} is a robust domain of attraction. It is useful in justifying Theorem \ref{inner} in Subsection \ref{srda}.
\item The condition that $h(\bm{x})$ is of the form $R-\|\bm{x}\|^2$, i.e., $h(\bm{x})=R-\|\bm{x}\|^2$, is used to guarantee the existence of solutions to the semi-definite program \eqref{sos} in Subsection \ref{srda} under appropriate conditions. It is reflected in justifying Theorem \ref{existence} in Subsection \ref{AOE}.
\end{enumerate}

The auxiliary system is of the following form:
 \begin{equation}
\label{sys1}
\dot{\bm{x}}(t)=\bm{F}(\bm{x}(t),\bm{d}(t)),
 \end{equation}
where $\bm{F}(\bm{x},\bm{d}):\mathbb{R}^n\times D\rightarrow \mathbb{R}^n$, which is globally Lipschitz continuous over $\bm{x}\in \mathbb{R}^n$ uniformly over $\bm{d}\in D$ with Lipschitz constant $L_{\bm{f}}$, i.e.,
\begin{equation}
\label{LIP}
\|\bm{F}(\bm{x}_1,\bm{d})-\bm{F}(\bm{x}_2,\bm{d})\|\leq L_{\bm{f}}\|\bm{x}_1-\bm{x}_2\|
\end{equation}
 for $\bm{x}_1,\bm{x}_2\in \mathbb{R}^n$ and $\bm{d}\in D$, where $L_{\bm{f}}$ is the Lipschitz constant of $\bm{f}$ over $B(\bm{0},R)$. Moreover, $\bm{F}(\bm{x},\bm{d})=\bm{f}(\bm{x},\bm{d})$ over $B(\bm{0},R)\times D$, implying that the trajectories governed by system \eqref{sys1} coincide with the ones generated by system \eqref{systems} over the set $B(\bm{0},R)$.

 The existence of system \eqref{sys1} is guaranteed by Kirszbraun's theorem \cite{fremlin2011}, which is recalled as Theorem \ref{kri}.
 \begin{theorem}[Kirszbraun's Theorem]
 \label{kri}
 Let $A\subset \mathbb{R}^k$ be a set and $\bm{f}':A\rightarrow \mathbb{R}^n$ a function, where $k\geq 1$ is an integer. Suppose there exists $\gamma\geq 0$  such that $\|\bm{f}'(\bm{z}_1)-\bm{f}'(\bm{z}_2)\|\leq \gamma \|\bm{z}_1-\bm{z}_2\|$ for $\bm{z}_1,\bm{z}_2\in A$. Then there is a function $\bm{F}':\mathbb{R}^k\rightarrow \mathbb{R}^n$ such that $\bm{F}'(\bm{z})=\bm{f}'(\bm{z})$ for $\bm{z}\in A$ and $\|\bm{F}'(\bm{z}_1)-\bm{F}'(\bm{z}_2)\|\leq \gamma \|\bm{z}_1-\bm{z}_2\|$ for $\bm{z}_1,\bm{z}_2\in \mathbb{R}^k$.
 \end{theorem}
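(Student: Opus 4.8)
This is the classical Kirszbraun extension theorem, so the plan is to reproduce the standard argument. First I would dispose of trivialities by rescaling: if $\gamma = 0$ then $\bm{f}'$ is constant and any constant map is the required extension, so I may assume $\gamma > 0$ and, after replacing $\bm{f}'$ by $\bm{f}'/\gamma$, assume $\gamma = 1$. The proof then rests on two ingredients: a \emph{one-point extension} step that enlarges the domain of a given $1$-Lipschitz map by a single point, and Zorn's lemma to iterate this step up to all of $\mathbb{R}^k$. The analytic core is a purely geometric fact about intersections of Euclidean balls, and proving that is where I expect the real work to be.

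\emph{Geometric lemma.} I would isolate the statement: given finitely many $\bm{a}_i \in \mathbb{R}^k$, $\bm{b}_i \in \mathbb{R}^n$ with $\|\bm{b}_i - \bm{b}_j\| \le \|\bm{a}_i - \bm{a}_j\|$ for all $i,j$, and radii $r_i > 0$, if $\bigcap_i \overline{B}(\bm{a}_i, r_i) \ne \emptyset$ then $\bigcap_i \overline{B}(\bm{b}_i, r_i) \ne \emptyset$. To prove it, I would let $\lambda \ge 0$ be the least number with $\bigcap_i \overline{B}(\bm{b}_i, \lambda r_i) \ne \emptyset$ (it exists and is attained by a routine compactness/closedness argument), and let $\bm{b}$ be a common point of those scaled balls; $\bm{b}$ is unique by the parallelogram law (strict convexity of Euclidean balls), and minimality of $\lambda$ forces $\bm{b}$ to lie in the convex hull of the active centers $\{\bm{b}_i : \|\bm{b} - \bm{b}_i\| = \lambda r_i\}$, since otherwise nudging $\bm{b}$ toward that hull strictly decreases every active distance. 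Writing $\bm{b} = \sum_i \mu_i \bm{b}_i$ over the active indices ($\mu_i \ge 0$, $\sum_i \mu_i = 1$) and picking $\bm{a} \in \bigcap_i \overline{B}(\bm{a}_i, r_i)$, I would expand $0 = \|\sum_i \mu_i (\bm{b}_i - \bm{b})\|^2$ and $0 \le \|\sum_i \mu_i (\bm{a}_i - \bm{a})\|^2$ by polarization, then substitute $\|\bm{b}_i - \bm{b}_j\| \le \|\bm{a}_i - \bm{a}_j\|$, $\|\bm{b} - \bm{b}_i\| = \lambda r_i$, and $\|\bm{a} - \bm{a}_i\| \le r_i$; this collapses to $0 \ge (\lambda^2 - 1)\sum_i \mu_i r_i^2$, hence $\lambda \le 1$ and $\bm{b} \in \bigcap_i \overline{B}(\bm{b}_i, r_i)$.

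\emph{One-point extension and conclusion.} Suppose $\bm{F}$ is $1$-Lipschitz on a set $S$ with $A \subseteq S \subseteq \mathbb{R}^k$ and $\bm{z}_0 \notin S$. For each finite $F_0 \subseteq S$, I apply the lemma with centers $\bm{z}$, $\bm{F}(\bm{z})$ ($\bm{z} \in F_0$) and radii $\|\bm{z}_0 - \bm{z}\|$ (the hypothesis $\bigcap_{\bm{z}\in F_0} \overline{B}(\bm{z}, \|\bm{z}_0-\bm{z}\|) \ne \emptyset$ holds since $\bm{z}_0$ lies in each such ball), obtaining $\bigcap_{\bm{z} \in F_0} \overline{B}(\bm{F}(\bm{z}), \|\bm{z}_0 - \bm{z}\|) \ne \emptyset$. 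Thus the compact sets $\{\overline{B}(\bm{F}(\bm{z}), \|\bm{z}_0 - \bm{z}\|)\}_{\bm{z} \in S}$ have the finite intersection property and hence a common point, which I take as $\bm{F}(\bm{z}_0)$; this extends $\bm{F}$ $1$-Lipschitzly to $S \cup \{\bm{z}_0\}$. Finally I would order the $1$-Lipschitz maps extending $\bm{f}'$ (each defined on a subset of $\mathbb{R}^k$ containing $A$) by graph inclusion, note that every chain has an upper bound given by the union of graphs, and invoke Zorn's lemma; a maximal element $\bm{F}'$ must be defined on all of $\mathbb{R}^k$ by the one-point step, and undoing the rescaling yields the desired $\gamma$-Lipschitz extension.

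The main obstacle is the geometric lemma — specifically the convex-hull claim for the minimizer $\bm{b}$ and the quadratic estimate that follows from it — and this is exactly the place where the Euclidean inner-product structure of both $\mathbb{R}^k$ and $\mathbb{R}^n$ is indispensable; the same extension statement fails for general norms, so any correct proof must genuinely exploit it here.
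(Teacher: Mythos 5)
The paper does not prove this statement: Kirszbraun's theorem is recalled as a classical result and justified only by a citation to the literature (Fremlin), so there is no in-paper argument to compare yours against. Your sketch is the standard proof and is essentially sound. The reduction to $\gamma=1$, the ball-intersection lemma proved by taking the minimal scaling factor $\lambda$, the strict-convexity uniqueness of the common point $\bm{b}$, the convex-hull characterization of $\bm{b}$ via the perturbation argument, and the quadratic expansion that yields $(\lambda^2-1)\sum_i\mu_i r_i^2\le 0$ are all correct; likewise the one-point extension via the finite intersection property of the compact balls $\overline{B}(\bm{F}(\bm{z}),\|\bm{z}_0-\bm{z}\|)$ and the Zorn's lemma bootstrap. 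A full write-up would need to handle a few routine edge cases you gloss over: the degenerate case $\lambda=0$ in the geometric lemma (where the conclusion is immediate), the case $A=\emptyset$ (extend by any constant), and the verification that the union of a chain of $1$-Lipschitz graphs is again $1$-Lipschitz (any two points of the union lie in a common element of the chain). Your closing observation that the Euclidean inner-product structure is genuinely needed — the analogous extension fails for general norms on the target — is also correct and is exactly why the paper can invoke this theorem for the $2$-norm setting of \eqref{LIP}. In short: the proposal is a correct proof of a statement the paper itself only cites.
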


For instance, $\bm{F}(\bm{x},\bm{d})= \inf_{\bm{y}\in B(\bm{0},R)}\big(\bm{f}(\bm{y},\bm{d}) + \bm{z} L_{\bm{f}} \cdot \|\bm{x}-\bm{y}\|\big)$ satisfies \eqref{sys1}, where $\bm{z}$ is an $n$-dimensional vector with each component equal to one.

 Thus, for any pair $(\bm{d}(\cdot),\bm{x}_0)\in \mathcal{D}\times  \mathbb{R}^n$, there exists a unique absolutely continuous trajectory $\bm{x}(t)=\bm{\psi}_{\bm{x}_0}^{\bm{d}}(t)$ satisfying \eqref{sys1} a.e. with $\bm{x}(0)=\bm{x}_0$ for $t\in [0,\infty)$. This requirement is the basis of deriving the Zubov's equation in \cite{grune2015}. Moreover, we have the following proposition stating that the sets $\mathcal{R}$ and $\mathcal{R}_0$ for system \eqref{systems} coincide with the corresponding sets for system \eqref{sys1} as well.
\begin{proposition}
\label{eqiva}
$\mathcal{R}=\{\bm{x}_0\in \mathbb{R}^n\mid \mathcal{D}_{ad}(\bm{x}_0)=\mathcal{D}\text{ and } \lim_{t\rightarrow \infty}\bm{\psi}_{\bm{x}_0}^{\bm{d}}(t)=0\text{ for } \bm{d}(\cdot) \\ \in \mathcal{D}\}$ and
\begin{equation}
\mathcal{R}_0=\left \{\bm{x}_0\in \mathbb{R}^n \middle|\;
\begin{aligned}
&\text{there exists }\varrho > 0 \text{ with~} \mathcal{D}_{ad,\varrho}(\bm{x}_0)=\mathcal{D} \text{~and }\\
&\text{ there exists a function} \lim_{t\rightarrow \infty}\beta(t)=0 \text{ with }\\
&\|\bm{\psi}_{\bm{x}_0}^{\bm{d}}(t)\|\leq \beta(t) \text{~for~} t\in [0,\infty) \text{ and } \bm{d}(\cdot)\in \mathcal{D}.
\end{aligned}
\right\},
\end{equation}
where $\mathcal{R}$ and $\mathcal{R}_0$ are respectively the maximal robust domain of attraction and the maximal robust domain of uniform attraction for system \eqref{systems}.
\end{proposition}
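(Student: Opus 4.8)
The plan is to show that the relevant quantifications over trajectories in the definitions of $\mathcal{R}$ and $\mathcal{R}_0$ for system \eqref{systems} can be replaced by the corresponding quantifications over trajectories of the auxiliary system \eqref{sys1}, using the fact that $\bm{F}$ and $\bm{f}$ agree on $B(\bm{0},R)\times D$ together with $\mathcal{X}\subset B(\bm{0},R)$ and $\partial\mathcal{X}\cap\partial B(\bm{0},R)=\emptyset$. The key observation is that every trajectory that ever leaves $\mathcal{X}$ (or stays $\delta$-close to $\mathcal{X}^c$) or that converges to $\bm{0}$ is, up to the relevant time of interest, entirely contained in $B(\bm{0},R)$, where the two vector fields coincide, so by uniqueness of solutions $\bm{\phi}_{\bm{x}_0}^{\bm{d}}$ and $\bm{\psi}_{\bm{x}_0}^{\bm{d}}$ are equal on the time interval that matters.

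First I would observe that for any $\bm{x}_0\in\mathcal{X}$ and $\bm{d}\in\mathcal{D}$, as long as $\bm{\psi}_{\bm{x}_0}^{\bm{d}}(t)$ remains in the \emph{closed} set $\overline{B(\bm{0},R)}$, it satisfies \eqref{systems} as well, hence coincides with $\bm{\phi}_{\bm{x}_0}^{\bm{d}}(t)$ there by uniqueness. Next I would handle the set $\mathcal{R}$: if $\bm{x}_0\in\mathcal{R}$, then for every $\bm{d}$ the trajectory $\bm{\phi}_{\bm{x}_0}^{\bm{d}}(t)$ stays in $\mathcal{X}\subset B(\bm{0},R)$ for all $t\geq 0$ and tends to $\bm{0}$; since it never leaves $B(\bm{0},R)$, it is a global solution of \eqref{sys1} too, so $\bm{\psi}_{\bm{x}_0}^{\bm{d}}(t)=\bm{\phi}_{\bm{x}_0}^{\bm{d}}(t)$ for all $t$, giving $\mathcal{D}_{ad}(\bm{x}_0)=\mathcal{D}$ and $\lim_{t\to\infty}\bm{\psi}_{\bm{x}_0}^{\bm{d}}(t)=0$ in the $\bm{\psi}$-formulation. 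Conversely, if $\bm{x}_0$ satisfies the $\bm{\psi}$-side conditions, then $\bm{\psi}_{\bm{x}_0}^{\bm{d}}(t)\in\mathcal{X}$ for all $t$; I claim this trajectory stays in $B(\bm{0},R)$, so again it coincides with $\bm{\phi}_{\bm{x}_0}^{\bm{d}}$ and $\bm{x}_0\in\mathcal{R}$. The mild point to check here is that staying in the open set $\mathcal{X}$, whose closure is disjoint from $\partial B(\bm{0},R)$, indeed keeps the trajectory inside $B(\bm{0},R)$ and away from its boundary; this is where $\partial\mathcal{X}\cap\partial B(\bm{0},R)=\emptyset$ is used.

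For $\mathcal{R}_0$ the argument is analogous: $\bm{x}_0\in\mathcal{R}_0$ supplies $\delta>0$ with $\mathtt{dist}(\bm{\phi}_{\bm{x}_0}^{\bm{d}}(t),\mathcal{X}^c)>\delta$ for all $t$ and all $\bm{d}$, which in particular forces $\bm{\phi}_{\bm{x}_0}^{\bm{d}}(t)\in\mathcal{X}\subset B(\bm{0},R)$, so $\bm{\psi}_{\bm{x}_0}^{\bm{d}}=\bm{\phi}_{\bm{x}_0}^{\bm{d}}$ and the same $\delta$ and $\beta$ witness membership in the $\bm{\psi}$-set; the converse is symmetric. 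Combining the two inclusions in each case yields the claimed identities for $\mathcal{R}$ and $\mathcal{R}_0$. I expect the only real obstacle to be the careful bookkeeping around open-versus-closed sets — ensuring that ``trajectory stays in $\mathcal{X}$'' or ``has distance $>\delta$ to $\mathcal{X}^c$'' genuinely confines it to $B(\bm{0},R)$ where the vector fields agree — but Assumption \ref{property}(3)–(4) and the conditions $\mathcal{X}\subset B(\bm{0},R)$, $\partial\mathcal{X}\cap\partial B(\bm{0},R)=\emptyset$ are exactly tailored to make this routine.
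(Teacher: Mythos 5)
Your proposal is correct and follows essentially the same route as the paper, which simply notes that since $\mathcal{X}\subset B(\bm{0},R)$ and $\bm{f}=\bm{F}$ on $B(\bm{0},R)\times D$, the trajectories of \eqref{systems} and \eqref{sys1} coincide on $\mathcal{X}$ (by uniqueness of solutions) and the identities follow. Your write-up merely spells out the open-versus-closed bookkeeping that the paper leaves implicit as ``obvious.''
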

\begin{proof}
Since $\mathcal{X}\subset B(\bm{0},R)$, $\bm{f}(\bm{x},\bm{d})=\bm{F}(\bm{x},\bm{d})$ over $\bm{x}\in \mathcal{X}$ and $\bm{d}\in D$, the trajectories for system \eqref{systems} and \eqref{sys1} coincide in the state constraint set $\mathcal{X}$, it is obvious that the conclusion holds.
\end{proof}

\subsection{Synthesizing Robust Domains of Attraction}
\label{srda}
In this subsection we first follow the procedure in \cite{grune2015} to characterize the maximal robust domain of uniform attraction $\mathcal{R}_0$ of system \eqref{sys1} as the strict one sub-level set of the viscosity solution to a generalized Zubov's partial differential equation, which is specific to state-constrained perturbed polynomial systems. Based on this Zubov type equation, we construct a semi-definite program for generating robust domains of attraction.

For showing the generalized Zubov's equation we first introduce a running cost $g(\bm{x}):\mathbb{R}^n\rightarrow \mathbb{R}$ and a function $h'(\bm{x}): \mathbb{R}^n\rightarrow \mathbb{R}$ satisfying Assumption \ref{assum2} as in \cite{grune2015}.
\begin{assumption}
\label{assum2}
\begin{enumerate}
\item The function $g(\bm{x})$ is a locally Lipschitz continuous function over $\bm{x}\in \mathbb{R}^n$ satisfying
\begin{enumerate}
\item $g(\bm{x})\geq 0$ with $g(\bm{0})=0$, $\forall \bm{x}\in \mathbb{R}^n$;
\item $\inf\{g(\bm{x})\mid \|\bm{x}\|\geq c\}>0$ for every $c>0$;
\item $\int_{0}^{\infty}g(\bm{\psi}_{\bm{x}}^{\bm{d}}(t))dt$ is finite if $t(\bm{x},\bm{d}(\cdot))$ is finite, where $t(\bm{x},\bm{d}(\cdot))=\inf\{t\geq 0\mid \bm{\psi}_{\bm{x}}^{\bm{d}}(t)\in B(\bm{0},\overline{r})\}$.
\end{enumerate}
\item $h'(\bm{x})=-\min_{j\in \{1,\ldots,n_{\mathcal{X}}\}} h'_{j}(\bm{x})$ with  $h'_{j}(\bm{x})=\ln(l[1-h_{j}(\bm{x})])$ and
\begin{equation*}
\begin{split}
    l[1-h_{j}(\bm{x})]=
    \left\{
                \begin{array}{lll}
                  1-h_{j}(\bm{x}),& \text{if }1-h_{j}(\bm{x})> 0 \\
                  0, &\text{otherwise.}\\
                \end{array}
              \right.
              \end{split}
\end{equation*}
with the convention $\ln 0=-\infty$.
\end{enumerate}
\end{assumption}
The function $h'(\bm{x})$ fulfills the requirement in (A4) in \cite{grune2015}, i.e., $h'(\bm{x})$ is locally Lipschitz continuous on $\mathcal{X}$, $h'(\bm{x})=\infty$ iff $\bm{x}\notin \mathcal{X}$, and $\lim_{n\rightarrow \infty}h'(\bm{x}_n)=\infty$ when $\lim_{n\rightarrow \infty}\bm{x}_n=\bm{x}\notin \mathcal{X}$, $h'(\bm{0})=0$. Throughout this paper, the function $g(\bm{x})$ in Assumption \ref{assum2} is chosen as
\begin{equation}
\label{gin}
g(\bm{x})=\left\{
\begin{array}{lll}
&\alpha-\alpha e^{-\frac{\|\bm{x}\|^s}{\texttt{dist}(\bm{x},\mathcal{X}_{\infty}^c)}},\bm{x}\in \mathcal{X}_{\infty}\\
&q(\bm{x}), \bm{x}\in \mathbb{R}^n\setminus\mathcal{X}_{\infty}
\end{array}
\right.,
\end{equation}
where $s$ is a sufficiently large positive scalar value, $\alpha>0$, and $q(\bm{x})\in \mathbb{R}[\bm{x}]$ with $q(\bm{x})>0$ for $\bm{x}\neq \bm{0}$ is a nonnegative polynomial such that
\begin{equation}
\label{initial}
\mathcal{X}_{\infty}=\{\bm{x}\in \mathbb{R}^n \mid q(\bm{x})< \alpha\}
\end{equation}
 is a nonempty robust domain of uniform attraction. The function $q(\bm{x})$ could be a (local) Lyapunov function and there are numerous methods for computing it, e.g., semi-definite programming based methods \cite{papachristodoulou2002,she2013}. In this paper we assume that $q(\bm{x})$ is given. Also, without loss of generality we assume that \[B(\bm{0},\overline{r})\subset \mathcal{X}_{\infty} \text{~and~} \partial B(\bm{0},\overline{r})\cap \partial \mathcal{X}_{\infty}=\emptyset\] since $\overline{r}$ in Assumption \ref{property} can be sufficiently small. The function $g(\bm{x})$ in \eqref{gin} satisfies Assumption \ref{assum2} as well as another important property shown in Lemma \ref{glip}, which will lift the value functions in \eqref{V0} and \eqref{v} shown later to be Lipschitz continuous.

\begin{lemma}
\label{glip}
 The function $g(\bm{x})$ in \eqref{gin} satisfies Assumption \ref{assum2} and
 the following inequality,
\begin{equation}
\label{ss}
|g(\bm{x})-g(\bm{y})|\leq K\max\big\{\|\bm{x}\|^s,\|\bm{y}\|^s,\|\bm{x}\|^{s-2},\|\bm{y}\|^{s-2}\big\} \|\bm{x}-\bm{y}\|, \forall \bm{x}, \bm{y}\in B(\bm{0},\overline{r}),
\end{equation}
where $B(\bm{0},\overline{r})$ is defined in Assumption \ref{property}, $K$ is some positive constant and $s$ is a sufficiently large positive scalar value.
\end{lemma}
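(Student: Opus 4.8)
The plan is to verify the two assertions of the lemma separately: first that $g$ defined in \eqref{gin} satisfies Assumption \ref{assum2}, and second that it obeys the local Lipschitz estimate \eqref{ss} with the indicated weight. For the first assertion I would argue piecewise. On $\mathcal{X}_\infty$ the formula is $g(\bm{x}) = \alpha - \alpha e^{-\|\bm{x}\|^s/\texttt{dist}(\bm{x},\mathcal{X}_\infty^c)}$, which is clearly nonnegative, vanishes exactly at $\bm{x}=\bm{0}$ (since $\|\bm{x}\|^s/\texttt{dist}(\bm{x},\mathcal{X}_\infty^c)\to 0$ as $\bm{x}\to\bm{0}$, using that $\mathcal{X}_\infty^c$ has positive distance to $\bm{0}$ because $B(\bm{0},\overline r)\subset\mathcal{X}_\infty$), and is bounded by $\alpha$; as $\bm{x}$ approaches $\partial\mathcal{X}_\infty$ the denominator $\texttt{dist}(\bm{x},\mathcal{X}_\infty^c)\to 0$ so the exponent $\to -\infty$ and $g(\bm{x})\to\alpha$, which matches $q(\bm{x})=\alpha$ on $\partial\mathcal{X}_\infty$; hence $g$ is continuous across the interface. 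On the complement $g=q$ is a positive polynomial away from the origin. This gives (1a). For (1b), $\inf\{g(\bm{x})\mid\|\bm{x}\|\ge c\}>0$ follows on the bounded piece $\mathcal{X}_\infty\cap\{\|\bm{x}\|\ge c\}$ from continuity and strict positivity on a compact set (if $c$ is large enough this set may be empty, in which case it is vacuous), and on $\{\|\bm{x}\|\ge c\}\setminus\mathcal{X}_\infty$ from $q(\bm{x})\ge\alpha>0$. For (1c), since $\mathcal{X}_\infty$ is a robust domain of uniform attraction and trajectories are globally defined for \eqref{sys1}, whenever $t(\bm{x},\bm{d})<\infty$ the trajectory enters $B(\bm{0},\overline r)$ in finite time, stays in the bounded set $\mathcal{X}_\infty$ (or first reaches it and then converges), and the running cost decays — one can invoke the uniform exponential estimate \eqref{epsilon} once inside $B(\bm{0},\overline r)$ together with the bound $g(\bm{x})\le K\|\bm{x}\|^s$ near the origin (a consequence of \eqref{ss} with $\bm{y}=\bm{0}$, or directly from $1-e^{-u}\le u$) to see the tail integral is finite, and the portion of the trajectory before entering $B(\bm{0},\overline r)$ is over a finite time interval with $g$ bounded there. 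Finally $h'(\bm{x})$ is handled by the already-stated remark that it meets (A4) of \cite{grune2015}, which I would just cite.

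For the second assertion \eqref{ss}, I restrict attention to $\bm{x},\bm{y}\in B(\bm{0},\overline r)$, which by assumption lies well inside $\mathcal{X}_\infty$ with $\partial B(\bm{0},\overline r)\cap\partial\mathcal{X}_\infty=\emptyset$; hence on this ball $\texttt{dist}(\bm{x},\mathcal{X}_\infty^c)\ge \rho$ for some $\rho>0$, and $g$ is given there by the smooth formula $g(\bm{x})=\alpha(1-e^{-\varphi(\bm{x})})$ with $\varphi(\bm{x})=\|\bm{x}\|^s/\texttt{dist}(\bm{x},\mathcal{X}_\infty^c)$. I would use $|g(\bm{x})-g(\bm{y})|=\alpha|e^{-\varphi(\bm{y})}-e^{-\varphi(\bm{x})}|\le\alpha|\varphi(\bm{x})-\varphi(\bm{y})|$, since $e^{-t}$ is $1$-Lipschitz on $[0,\infty)$, reducing the problem to a Lipschitz estimate for $\varphi$ on $B(\bm{0},\overline r)$. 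Write $\varphi = N/\Delta$ with $N(\bm{x})=\|\bm{x}\|^s$ and $\Delta(\bm{x})=\texttt{dist}(\bm{x},\mathcal{X}_\infty^c)$; then
\begin{equation*}
|\varphi(\bm{x})-\varphi(\bm{y})| \le \frac{|N(\bm{x})-N(\bm{y})|}{\Delta(\bm{x})} + N(\bm{y})\left|\frac{1}{\Delta(\bm{x})}-\frac{1}{\Delta(\bm{y})}\right| \le \frac{|N(\bm{x})-N(\bm{y})|}{\rho} + \frac{N(\bm{y})}{\rho^2}|\Delta(\bm{x})-\Delta(\bm{y})|.
\end{equation*}
The distance function $\Delta$ is globally $1$-Lipschitz, so the second term is bounded by $\rho^{-2}\|\bm{y}\|^s\,\|\bm{x}-\bm{y}\|$. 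For the first term, $\nabla N(\bm{x})=s\|\bm{x}\|^{s-2}\bm{x}$, so by the mean value inequality along the segment joining $\bm{x},\bm{y}$ (which stays in the ball) one gets $|N(\bm{x})-N(\bm{y})|\le s\max(\|\bm{x}\|,\|\bm{y}\|)^{s-1}\|\bm{x}-\bm{y}\|$, and $\max(\|\bm{x}\|,\|\bm{y}\|)^{s-1}\le \overline r\cdot\max(\|\bm{x}\|,\|\bm{y}\|)^{s-2}$ when $\max(\|\bm{x}\|,\|\bm{y}\|)\le\overline r$. Collecting terms, $|g(\bm{x})-g(\bm{y})|$ is bounded by a constant times $\max\{\|\bm{x}\|^s,\|\bm{y}\|^s,\|\bm{x}\|^{s-2},\|\bm{y}\|^{s-2}\}\,\|\bm{x}-\bm{y}\|$, which is the claimed bound (the paper's statement writes $\|\bm{x}\|^{s-2}$ twice, evidently a typo for $\|\bm{x}\|^{s-2}$ and $\|\bm{y}\|^{s-2}$).

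The main obstacle I anticipate is the bookkeeping around the denominator $\texttt{dist}(\bm{x},\mathcal{X}_\infty^c)$: one must justify that it is bounded below by a positive $\rho$ on $B(\bm{0},\overline r)$ (this is exactly why the hypothesis $\partial B(\bm{0},\overline r)\cap\partial\mathcal{X}_\infty=\emptyset$ together with $B(\bm{0},\overline r)\subset\mathcal{X}_\infty$ was imposed), and one must be slightly careful that the weight comes out with the stated exponents $s$ and $s-2$ rather than $s-1$ — this is where the restriction $\|\bm{x}\|,\|\bm{y}\|\le\overline r$ is used to trade one power of the norm for a constant factor $\overline r$. A secondary, more routine nuisance is checking (1c) rigorously, since it requires combining the finite-time portion of the trajectory (before reaching $B(\bm{0},\overline r)$, where $g$ is merely continuous and bounded on the compact set $\overline{\mathcal{X}_\infty}$) with the exponentially decaying tail; invoking \eqref{epsilon} and the near-origin growth bound $g(\bm{x})\le K\|\bm{x}\|^s$ makes the tail integral converge, and this argument mirrors the one in \cite{grune2015}, so I would keep it brief.
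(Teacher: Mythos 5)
Your treatment of the estimate \eqref{ss} is essentially the paper's own argument: both pass through $\alpha\lvert e^{-\varphi(\bm{y})}-e^{-\varphi(\bm{x})}\rvert\leq\alpha\lvert\varphi(\bm{x})-\varphi(\bm{y})\rvert$, the same add-and-subtract decomposition of the quotient $\|\bm{x}\|^{s}/\texttt{dist}(\bm{x},\mathcal{X}_{\infty}^{c})$, a positive lower bound on the denominator over $B(\bm{0},\overline{r})$ (justified exactly by $B(\bm{0},\overline{r})\subset\mathcal{X}_{\infty}$ and $\partial B(\bm{0},\overline{r})\cap\partial\mathcal{X}_{\infty}=\emptyset$, as you say), the $1$-Lipschitzness of the distance function, and a mean-value bound on $\|\cdot\|^{s}$ with one power of the norm absorbed into the constant to reach the exponent $s-2$. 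Your verification of (a), (b), (c) and your reading of the duplicated $\|\bm{x}\|^{s-2}$ as a typo for $\|\bm{y}\|^{s-2}$ also agree with the paper.

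There is, however, one piece of the first assertion that you skip and that the paper spends roughly half its proof on: Assumption \ref{assum2} demands that $g$ be \emph{locally Lipschitz continuous on all of} $\mathbb{R}^{n}$, not merely continuous, and the only nontrivial place to check this is at points $\bm{x}\in\partial\mathcal{X}_{\infty}$. Your continuity argument (exponent $\to-\infty$, so $g\to\alpha=q$ at the interface) does not give Lipschitz continuity there, and the device you use inside $B(\bm{0},\overline{r})$ — bounding $\lvert e^{-\varphi(\bm{y})}-e^{-\varphi(\bm{x})}\rvert$ by $\lvert\varphi(\bm{x})-\varphi(\bm{y})\rvert$ — fails near $\partial\mathcal{X}_{\infty}$ because $\varphi$ blows up as $\texttt{dist}(\bm{y},\mathcal{X}_{\infty}^{c})\to 0$ and is not Lipschitz there. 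The paper's fix is a different elementary step: for $\bm{x}\in\partial\mathcal{X}_{\infty}$ and $\bm{y}\in\mathcal{X}_{\infty}$ one has $\texttt{dist}(\bm{y},\mathcal{X}_{\infty}^{c})\leq\|\bm{x}-\bm{y}\|$, hence
\begin{equation*}
\lvert g(\bm{x})-g(\bm{y})\rvert=\alpha e^{-\frac{\|\bm{y}\|^{s}}{\texttt{dist}(\bm{y},\mathcal{X}_{\infty}^{c})}}\leq\alpha e^{-\frac{\|\bm{y}\|^{s}}{\|\bm{x}-\bm{y}\|}}\leq\alpha\,\frac{\|\bm{x}-\bm{y}\|}{\|\bm{y}\|^{s}},
\end{equation*}
using $e^{-z}\leq 1/z$; the factor $\|\bm{y}\|^{-s}$ is then controlled because $\bm{y}$ stays in a small neighborhood of $\bm{x}\neq\bm{0}$. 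This local Lipschitz property is not cosmetic — it is what makes $L_{g}$ over $\overline{\mathcal{X}}$ well defined in the paper's verification of (c) and in Lemma \ref{conti} — so you should add this interface estimate to complete the proof.
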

\begin{proof}
We first prove that the function $g(\bm{x})$ in \eqref{gin} is locally Lipschitz continuous over $\bm{x}\in \mathbb{R}^n$. It is obvious that for $\bm{x},\bm{y}\in \mathcal{X}_{\infty}^{\circ}$ or $\bm{x},\bm{y}\in \mathbb{R}^n\setminus \mathcal{X}_{\infty}$, there exists a constant $K'$ such that
\[|g(\bm{x})-g(\bm{y})|\leq K'\|\bm{x}-\bm{y}\|.\]
In the following we just need to show that for $\bm{x}\in \partial \mathcal{X}_{\infty}$ (Since $\mathcal{X}_{\infty}$ is open, $\bm{x}\in \mathbb{R}^n\setminus \mathcal{X}_{\infty}$), there exist a neighborhood $B(\bm{x},\sigma')$ of $\bm{x}$ and a constant $K'>0$ such that
\[|g(\bm{x})-g(\bm{y})|\leq K'\|\bm{x}-\bm{y}\|, \forall \bm{y}\in B(\bm{x},\sigma').\]
Since $\bm{x}\in \partial \mathcal{X}_{\infty}$, $g(\bm{x})=q(\bm{x})=\alpha$.  
When $\bm{y}\in (\mathbb{R}^n\setminus \mathcal{X}_{\infty}) \cap B(\bm{x},\sigma')$, there exists a constant $K_1>0$,
\[|g(\bm{x})-g(\bm{y})|=|q(\bm{x})-q(\bm{y})|\leq K_1\|\bm{x}-\bm{y}\|.\]
When $\bm{y}\in \mathcal{X}_{\infty} \cap B(\bm{x}, \sigma')$, we have
\begin{equation}
\label{ii}
\begin{split}
|g(\bm{x})-g(\bm{y})|&=|\alpha-\alpha+\alpha e^{-\frac{\|\bm{y}\|^s}{\texttt{dist}(\bm{y},\mathcal{X}_{\infty}^c)}}|\\
&=|\alpha e^{-\frac{\|\bm{y}\|^s}{\texttt{dist}(\bm{y},\mathcal{X}_{\infty}^c)}}|\leq |\alpha e^{-\frac{\|\bm{y}\|^s}{\|\bm{x}-\bm{y}\|}}|\leq \alpha  \frac{\|\bm{x}-\bm{y}\|}{\|\bm{y}\|^s}.
\end{split}
\end{equation}
The last inequality in \eqref{ii} uses the fact that $e^{-z}\leq \frac{1}{z}$ for $z\geq 0$. Therefore, there exist a neighborhood $B(\bm{x},\sigma')$ of $\bm{x}$ satisfying $\bm{0}\notin B(\bm{x},\sigma')$ (since $\bm{x}\neq \bm{0}$) and a constant $K'>0$,
$|g(\bm{x})-g(\bm{y})|\leq K'\|\bm{x}-\bm{y}\| \text{~holds},$
where $K'\geq \max\{\max_{\bm{y}\in B(\bm{x},\sigma')}\frac{\alpha}{\|\bm{y}\|^s}, K_1\}$. 

Below we show that the function $g(\bm{x})$ satisfies Assumption \ref{assum2}. It is trivial to prove that the function $g(\bm{x})$ satisfies conditions (a) and (b) in Assumption \ref{assum2}. Next we prove that the function $g(\bm{x})$ in \eqref{gin} satisfies (c) in Assumption \ref{assum2}. Suppose that $T=t(\bm{x},\bm{d}(\cdot))<\infty$ and $\bm{y}=\bm{\psi}_{\bm{x}}^{\bm{d}}(T)$. Since $\bm{\psi}_{\bm{x}}^{\bm{d}}(\cdot):\mathbb{R}\rightarrow \mathbb{R}^n$ is continuous over $t$, $g(\bm{\psi}_{\bm{x}}^{\bm{d}}(t))$ is continuous over $t$ as well, implying that $g(\bm{\psi}_{\bm{x}}^{\bm{d}}(t))$ can attain the maximum $M$ over $t\in [0,T]$. Therefore, we have
\begin{equation*}
\begin{split}
&\int_{0}^{\infty}g(\bm{\psi}_{\bm{x}}^{\bm{d}}(t))dt\\
&=\int_{0}^Tg(\bm{\psi}_{\bm{x}}^{\bm{d}}(t))dt+\int_{T}^{\infty}g(\bm{\psi}_{\bm{x}}^{\bm{d}}(t))dt \\
&=\int_{0}^Tg(\bm{\psi}_{\bm{x}}^{\bm{d}}(t))dt+\int_{0}^{\infty}g(\bm{\psi}_{\bm{y}}^{\bm{d}}(t))dt-\int_{0}^{\infty}g(\bm{0})dt\\
&\leq MT+ \int_{0}^{\infty}L_g\|\bm{\psi}_{\bm{y}}^{\bm{d}}(t)\|dt\\
& \leq MT +L_g C \|\bm{y}\| \int_{0}^{\infty} e^{-\sigma t}dt\\
&\leq MT + \frac{L_g C \overline{r}}{\sigma},
\end{split}
\end{equation*}
where $L_g$ is the Lipschitz constant of $g(\bm{x})$ over $\overline{\mathcal{X}}$, and $C,\overline{r},\sigma$ are defined in \eqref{epsilon}. Therefore, $\int_{0}^{\infty}g(\bm{\psi}_{\bm{x}}^{\bm{d}}(t))dt$ is finite if $t(\bm{x},\bm{d}(\cdot))<\infty$.

In the following we show that the function $g(\bm{x})$ satisfies \eqref{ss} over $B(\bm{0},\overline{r})$.

Let $\bm{x},\bm{y}\in B(\bm{0},\overline{r})$, $M_1$ and $M_2$ are two positive constants such that $\texttt{dist}(\bm{z},\mathcal{X}_{\infty}^{c})\leq M_1$ and $M_2\leq \texttt{dist}(\bm{z},\mathcal{X}_{\infty}^{c})$ for $\bm{z}\in B(\bm{0},\overline{r})$ (Such $M_1$ and $M_2$ exist since $\texttt{dist}(\cdot,\mathcal{X}_{\infty}^{c}): B(\bm{0},\overline{r})\rightarrow (0,\infty)$ is Lipschitz continuous and $B(\bm{0},\overline{r})$ is a compact set with $B(\bm{0},\overline{r})\subset \mathcal{X}_{\infty}$ and $\partial B(0,\overline{r})\cap \partial \mathcal{X}_{\infty}=\emptyset$.), and $L_{\texttt{dist}}$ is the Lipschitz constant of the distance function $\texttt{dist}(\cdot,\mathcal{X}_{\infty}^{c})$ over $B(\bm{0},\overline{r})$,
we have that
\begin{equation*}
\begin{split}
&|g(\bm{x})-g(\bm{y})|\\
&=\alpha |e^{-\frac{\|\bm{y}\|^s}{\texttt{dist}(\bm{y},\mathcal{X}_{\infty}^c)}}-e^{-\frac{\|\bm{x}\|^s}{\texttt{dist}(\bm{x},\mathcal{X}_{\infty}^c)}}|\\
&\leq \alpha \big|\frac{\|\bm{y}\|^s}{\texttt{dist}(\bm{y},\mathcal{X}_{\infty}^c)}-\frac{\|\bm{x}\|^s}{\texttt{dist}(\bm{x},\mathcal{X}_{\infty}^c)}\big|\\
&= \frac{\alpha }{\texttt{dist}(\bm{y},\mathcal{X}_{\infty}^c)\texttt{dist}(\bm{x},\mathcal{X}_{\infty}^c)} \big|\|\bm{y}\|^s \texttt{dist}(\bm{x},\mathcal{X}_{\infty}^c)-\|\bm{x}\|^s\texttt{dist}(\bm{y},\mathcal{X}_{\infty}^c)\big|\\
&\leq \frac{\alpha }{M_2^2}\big|\|\bm{y}\|^s \texttt{dist}(\bm{x},\mathcal{X}_{\infty}^c)-\|\bm{x}\|^s\texttt{dist}(\bm{y},\mathcal{X}_{\infty}^c)\big|\\
& \leq \frac{\alpha }{M_2^2}\Big(\big|\|\bm{y}\|^s \texttt{dist}(\bm{x},\mathcal{X}_{\infty}^c)-\|\bm{x}\|^s\texttt{dist}(\bm{x},\mathcal{X}_{\infty}^c) \big | \\
&~~~~~~~~~~~~~~~~~~~~~~~~~~~~~~~~~~~~~~~+\big|\|\bm{x}\|^s\texttt{dist}(\bm{x},\mathcal{X}_{\infty}^c)-\|\bm{x}\|^s\texttt{dist}(\bm{y},\mathcal{X}_{\infty}^c)\big|\Big)\\
&\leq \frac{\alpha }{M_2^2}\Big(M_1\big|\|\bm{x}\|^s-\|\bm{y}\|^s\big|+\|\bm{x}\|^s\big| \texttt{dist}(\bm{y},\mathcal{X}_{\infty}^c)-\texttt{dist}(\bm{x},\mathcal{X}_{\infty}^c)\big|\Big)\\
&\leq \frac{\alpha }{M_2^2}\Big(M_1 M'\max\{\|\bm{x}\|^{s-2},\|\bm{y}\|^{s-2}\}\big|\|\bm{x}\|-\|\bm{y}\|\big|+\|\bm{x}\|^sL_{\texttt{dist}}\big| \|\bm{y}\|-\|\bm{x}\|\big|\Big)\\
&\leq K \max\{\|\bm{x}\|^{s-2},\|\bm{y}\|^{s-2},\|\bm{x}\|^{s},\|\bm{y}\|^{s}\}\big|\|\bm{x}\|-\|\bm{y}\|\big|,
\end{split}
\end{equation*}
where $M'=s\max_{\bm{z}\in B(\bm{0},\overline{r})}\sum_{i=1}^n |z_i|$ and $K=\frac{\alpha \max\{M_1 M',L_{\texttt{dist}}\}}{M_2^2}$. The inequality $\big|\|\bm{x}\|^s-\|\bm{y}\|^s\big|\leq M' \max\{\|\bm{x}\|^{s-2},\|\bm{y}\|^{s-2}\}\big|\|\bm{x}\|-\|\bm{y}\|\big|$ is obtained in the following way: 
\begin{equation*}
\begin{split}
&\big|\|\bm{x}\|^s-\|\bm{y}\|^s\big|\\
&=s\|\bm{\xi}\|^{s-1}\cdot  \big|\|\bm{x}\|-\|\bm{y}\|\big|\\
&=s\|\bm{\xi}\|^{s-2}\cdot  \|\bm{\xi}\| \cdot \big|\|\bm{x}\|-\|\bm{y}\|\big|\\
&\leq M'\max\{\|\bm{x}\|^{s-2},\|\bm{y}\|^{s-2}\}\big|\|\bm{x}\|-\|\bm{y}\|\big|,
\end{split}
\end{equation*}
where $\|\bm{x}\|=\sqrt{\sum_{i=1}^n x_i^2},$ $\bm{\xi}=\lambda \bm{x}+(1-\lambda) \bm{y}$ and $\lambda $ is a constant falling within $(0,1)$. 

The proof is completed. 
\end{proof}

Denote
 \begin{equation}
 \label{V0}
 \begin{split}
 V(\bm{x}):=\sup_{\bm{d}(\cdot)\in \mathcal{D}}\sup_{t\in [0,\infty)}\big\{\int_{0}^t g(\bm{\psi}_{\bm{x}}^{\bm{d}}(\tau))d\tau+h'(\bm{\psi}_{\bm{x}}^{\bm{d}}(t))\big\}
 \end{split}
 \end{equation}
and 
\begin{equation}
\label{v}
 v(\bm{x}):=1-e^{-\delta V(\bm{x})}=\sup_{\bm{d}(\cdot)\in \mathcal{D}}\sup_{t\in [0,\infty)}\big\{1-e^{\delta \tilde{V}}\big\},
\end{equation}
where
$\tilde{V}=-\int_{0}^t g(\bm{\psi}_{\bm{x}}^{\bm{d}}(\tau))d\tau-h'(\bm{\psi}_{\bm{x}}^{\bm{d}}(t))$ and $\delta$ is some positive constant.

According to Theorem 3.1 in \cite{grune2015}, we have the following conclusion that
\begin{enumerate}
\item $\mathcal{R}_0=\{\bm{x}\in \mathbb{R}^n\mid V(\bm{x})<\infty\}=\{\bm{x}\in \mathbb{R}^n\mid v(\bm{x})<1\}$.
\item $V(\bm{x})$ in \eqref{V0} is continuous on $\mathcal{R}_0$. In addition, $\lim_{n\rightarrow \infty}V(\bm{x}_n)=\infty$ if $\lim_{n\rightarrow \infty}\bm{x}_n=\bm{x}\notin \mathcal{R}_0$ or $\lim_{n\rightarrow \infty}\|\bm{x}_n\|=\infty.$
\end{enumerate}

According to Proposition 4.2 in \cite{grune2015}, $V(\bm{x})$ and $v(\bm{x})$ satisfy the following dynamic programming principle.
\begin{lemma}
\label{dymmmm}
Assume that $G(\bm{x},t,\bm{d}(\cdot))=\int_{0}^t g(\bm{\psi}_{\bm{x}}^{\bm{d}}(\tau))d\tau$.
Then the following assertions are satisfied:
\begin{enumerate}
\item for $\bm{x}\in \mathcal{R}_0$ and $t\geq 0$, we have:
      \begin{equation*}
      \label{dynam1}
      \begin{split}
      V(\bm{x})=\sup_{\bm{d}(\cdot)\in \mathcal{D}}\max\Big\{G(\bm{x},t,\bm{d}(\cdot))+&V(\bm{\psi}_{\bm{x}}^{\bm{d}}(t)), \sup_{\tau\in [0,t]}\{G(\bm{x},\tau,\bm{d}(\cdot))+h'(\bm{\psi}_{\bm{x}}^{\bm{d}}(\tau))\}\Big\}.
      \end{split}
      \end{equation*}
\item for $\bm{x}\in \mathbb{R}^n$ and $t\geq 0$, we have:
 \begin{equation*}
      \label{dynam2}
      \begin{split}
      v(\bm{x})=\sup_{\bm{d}(\cdot)\in \mathcal{D}}\max\Big\{1+&(v(\bm{\psi}_{\bm{x}}^{\bm{d}}(t))-1)e^{-\delta G(\bm{x},t,\bm{d}(\cdot))}, \\
      &\sup_{\tau\in [0,t]}\{1-e^{- \delta G(\bm{x},\tau,\bm{d}(\cdot))-\delta h'(\bm{\psi}_{\bm{x}}^{\bm{d}}(\tau))}\}\Big\}.
      \end{split}
      \end{equation*}
\end{enumerate}
\end{lemma}

We further exploit the Lipschitz continuity property of $V(\bm{x})$ and $v(\bm{x})$. The Lipschitz continuity property of $v(\bm{x})$ plays a key role in guaranteeing the existence of solutions to the constructed semi-definite program \eqref{sos} theoretically, which will be introduced later.
\begin{lemma}
\label{conti}
Under Assumption \ref{property} and Assumption \ref{assum2}, then
\begin{enumerate}
\item $V(\bm{x})$ in \eqref{V0} is locally Lipschitz continuous over $\mathcal{R}_0$. 
\item \oomit{if $\delta\geq \frac{L_{\bm{f}}}{\alpha}$ in \eqref{v}, where $\alpha$ is defined in \eqref{initial},} $v(\bm{x})$ in \eqref{v} is locally Lipschitz continuous over $\mathbb{R}^n$.
\end{enumerate}
\end{lemma}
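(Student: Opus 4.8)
The plan is to establish statement 1 (local Lipschitz continuity of $V$ on $\mathcal{R}_0$, i.e.\ Lipschitz on every compact subset) by a local‑regularity‑plus‑dynamic‑programming bootstrap, and then to deduce statement 2 from it using the $\delta$‑Lipschitzness of $r\mapsto1-e^{-\delta r}$ together with a bookkeeping of the damping factor $e^{-\delta G}$ in which the hypothesis $\delta\geq L_f/\alpha$ is precisely what is required. The only differential‑equations input is the Gronwall sensitivity bound $\|\bm{\psi}_{\bm{x}}^{\bm{d}}(t)-\bm{\psi}_{\bm{y}}^{\bm{d}}(t)\|\leq e^{L_f t}\|\bm{x}-\bm{y}\|$ (immediate from \eqref{LIP} and $\bm{F}(\bm{0},\bm{d})=\bm{0}$); the difficulty is that $V$ and $v$ take a supremum over all $t\in[0,\infty)$ while this bound degrades like $e^{L_f t}$. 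A second, sharper estimate is available only near the origin: since the origin is uniformly locally exponentially stable and $\bm{f}$ is smooth (so the linearization at $\bm{0}$ is uniformly Hurwitz), the flow contracts there, $\|\bm{\psi}_{\bm{x}}^{\bm{d}}(t)-\bm{\psi}_{\bm{y}}^{\bm{d}}(t)\|\leq\tilde{C}e^{-\tilde{\sigma}t}\|\bm{x}-\bm{y}\|$ for $\bm{x},\bm{y}\in B(\bm{0},r_0)$ with $r_0$ small.

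\textbf{Statement 1.} First I would prove $V$ Lipschitz on $B(\bm{0},r_0)$, with $r_0$ chosen (via \eqref{epsilon}) small enough that trajectories from it stay in $B(\bm{0},\overline{r})$ and obey the contraction estimate. For $\bm{x},\bm{y}\in B(\bm{0},r_0)$, take $\bm{d},t$ near‑optimal for $\bm{x}$ in \eqref{V0} and use the same pair as a lower bound for $V(\bm{y})$, so that $V(\bm{x})-V(\bm{y})\leq\int_0^t|g(\bm{\psi}_{\bm{x}}^{\bm{d}}(\tau))-g(\bm{\psi}_{\bm{y}}^{\bm{d}}(\tau))|\,d\tau+|h'(\bm{\psi}_{\bm{x}}^{\bm{d}}(t))-h'(\bm{\psi}_{\bm{y}}^{\bm{d}}(t))|+\varepsilon$. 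Into the integral I plug inequality \eqref{ss} of Lemma \ref{glip} with $\|\bm{\psi}_{\bm{x}}^{\bm{d}}(\tau)\|\leq Ce^{-\sigma\tau}\|\bm{x}\|$: the prefactor $\max\{\|\cdot\|^{s-2},\|\cdot\|^{s}\}$ decays like $e^{-\sigma(s-2)\tau}$, so for $s$ large (this is the ``sufficiently large $s$'' of \eqref{gin}) even the bare Gronwall factor leaves the integrand integrable and bounded by $\mathrm{const}\cdot\|\bm{x}-\bm{y}\|$. For the $h'$‑term, $h_j\geq0$ and $h_j(\bm{0})=0$ force $\nabla h_j(\bm{0})=0$, hence $|h'(\bm{z}_1)-h'(\bm{z}_2)|\leq c_{h'}\max\{\|\bm{z}_1\|,\|\bm{z}_2\|\}\|\bm{z}_1-\bm{z}_2\|$ on $B(\bm{0},\overline{r})$; combined with the flow contraction this term is $O(\|\bm{x}-\bm{y}\|)$ uniformly in $t$. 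Symmetrising yields $|V(\bm{x})-V(\bm{y})|\leq L_0\|\bm{x}-\bm{y}\|$ on $B(\bm{0},r_0)$. To propagate, fix a compact $K\subset\mathcal{R}_0$ and, using openness of $\mathcal{R}_0$ and continuity of $V$, enlarge it to a compact $K'\subset\mathcal{R}_0$ with $V\leq M$ on $K'$. Since $h'\geq0$ gives $V(\bm{x})\geq\int_0^t g(\bm{\psi}_{\bm{x}}^{\bm{d}}(\tau))\,d\tau$, condition (b) of Assumption \ref{assum2} forces every trajectory from $K'$ into $B(\bm{0},r_0)$ by a uniform time $T=M/\inf\{g(\bm{z}):\|\bm{z}\|\geq r_0\}$; choosing $r_0$ forward invariant, $\bm{\psi}_{\bm{x}}^{\bm{d}}(T)\in B(\bm{0},r_0)$ for all $\bm{x}\in K'$, $\bm{d}\in\mathcal{D}$. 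The dynamic programming principle \eqref{dynam1} with horizon $T$ then splits $V(\bm{x})$ into a finite‑horizon piece over $[0,T]$ (controlled by the bare factor $e^{L_f T}$ and local Lipschitzness of $g$ and $h'$) and a term $G(\bm{x},T,\bm{d})+V(\bm{\psi}_{\bm{x}}^{\bm{d}}(T))$ (controlled by the same finite‑horizon estimate plus $L_0$ at the endpoint $\bm{\psi}_{\bm{x}}^{\bm{d}}(T)\in B(\bm{0},r_0)$), which gives a Lipschitz constant for $V$ on $K$.

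\textbf{Statement 2.} On $\mathcal{R}_0$, $v=1-e^{-\delta V}$ is locally Lipschitz for any $\delta>0$ because $r\mapsto1-e^{-\delta r}$ is $\delta$‑Lipschitz; the point is a \emph{global} constant on $\mathbb{R}^n$, i.e.\ the rate at which $v\uparrow1$ as $\bm{x}$ approaches $\partial\mathcal{R}_0$, where $v\equiv1$. I would use $v(\bm{x})=\sup_{\bm{d},t}w_{\bm{d},t}(\bm{x})$ with $w_{\bm{d},t}(\bm{x})=1-\exp(-\delta G(\bm{x},t,\bm{d})-\delta h'(\bm{\psi}_{\bm{x}}^{\bm{d}}(t)))$ and show each $w_{\bm{d},t}$ is Lipschitz with a constant independent of $\bm{d},t$; a supremum of such functions is then Lipschitz with the same constant. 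The a.e.\ gradient of $w_{\bm{d},t}$ is $\delta e^{-\delta(G+h')}\nabla(G+h')$, and Gronwall gives $\|\nabla(G+h')\|\lesssim\int_0^t\|\nabla g(\bm{\psi}_{\bm{x}}^{\bm{d}}(\tau))\|e^{L_f\tau}\,d\tau+\|\nabla h'(\bm{\psi}_{\bm{x}}^{\bm{d}}(t))\|e^{L_f t}$; using $G(\bm{x},t,\bm{d})\geq G(\bm{x},\tau,\bm{d})$ one bounds $e^{-\delta G(\bm{x},t,\bm{d})}$ times the integrand by $\|\nabla g(\bm{\psi}_{\bm{x}}^{\bm{d}}(\tau))\|\,e^{-\delta G(\bm{x},\tau,\bm{d})+L_f\tau}$. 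On the time stretches where $\bm{\psi}_{\bm{x}}^{\bm{d}}(\tau)\notin\mathcal{X}_\infty$ we have $g=q\geq\alpha$ by \eqref{initial}, hence $\frac{d}{d\tau}\bigl(-\delta G(\bm{x},\tau,\bm{d})+L_f\tau\bigr)=-\delta g(\bm{\psi}_{\bm{x}}^{\bm{d}}(\tau))+L_f\leq-\delta\alpha+L_f\leq0$ exactly because $\delta\geq L_f/\alpha$, so the exponential does not grow there and the Gronwall blow‑up is cancelled; on the stretches inside $\mathcal{X}_\infty$, near $\bm{0}$, the fast vanishing of $\nabla g$ and $\nabla h'$ at $\bm{0}$ and the flow contraction keep everything bounded, as in statement 1. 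Combining the two regimes produces a uniform Lipschitz bound for $w_{\bm{d},t}$, hence for $v$, on $\mathbb{R}^n$.

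\textbf{Main obstacle.} The genuine difficulty, shared by both parts, is reconciling the supremum over the unbounded horizon $[0,\infty)$ with the only a priori sensitivity estimate, Gronwall, which grows like $e^{L_f t}$. This is overcome by two quantitative cancellations that must hold simultaneously and uniformly in $\bm{d}$: near $\bm{0}$, the steep running cost $g\sim\|\bm{x}\|^{s}$ (large $s$), the quadratic flatness of $h'$, and the exponential contraction of the flow must together dominate the Gronwall factor; away from $\bm{0}$ (outside $\mathcal{X}_\infty$), the lower bound $g\geq\alpha$ with the threshold $\delta\geq L_f/\alpha$ must make the damping $e^{-\delta G}$ outrun it. The role of the dynamic programming principle \eqref{dynam1}--\eqref{dynam2} is to make the first cancellation usable, by localising the estimate so that the bare Gronwall factor appears only over a fixed finite horizon; establishing the two‑trajectory contraction near $\bm{0}$, i.e.\ upgrading \eqref{epsilon} to a bound on $\|\bm{\psi}_{\bm{x}}^{\bm{d}}(t)-\bm{\psi}_{\bm{y}}^{\bm{d}}(t)\|$, is the most technical auxiliary step.
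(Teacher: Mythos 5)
Your overall architecture is sound and it isolates the same two cancellations the paper relies on: near the origin, the large exponent $s$ in \eqref{gin} makes the prefactor in \eqref{ss} decay fast enough to beat the bare Gronwall factor $e^{L_f t}$ in the running-cost integral; away from $\mathcal{X}_\infty$, the lower bound $g\geq\alpha$ together with $\delta\geq L_f/\alpha$ makes $e^{-\delta G}$ absorb $e^{L_f t}$. Your localization of statement 1 via the uniform entry time $T=M/\inf\{g(\bm{z}):\|\bm{z}\|\geq r_0\}$ is exactly the paper's observation $V(\bm{x})\geq\alpha T_{\bm{x}}$ in disguise. For statement 2 you take a genuinely different route: you bound the a.e.\ gradient of each member $w_{\bm{d},t}$ of the family whose supremum is $v$, uniformly in $(\bm{d},t)$, whereas the paper first shows the \emph{local} Lipschitz constant of $V$ grows like $e^{\frac{L_f}{\alpha}V(\bm{x})}$ and then transfers this to $v$ through the chain-rule factor $\delta e^{-\delta V}$ using a viscosity test-function argument and Corollary 3 of \cite{Deville1995}. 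Your version is more elementary (no test functions, no Deville), at the cost of having to control the final-time term $e^{-\delta(G+h')}\nabla_{\bm{x}}\bigl[h'(\bm{\psi}_{\bm{x}}^{\bm{d}}(t))\bigr]$ for arbitrarily large $t$, which the paper never has to face because it stops the clock at the entry time into $\mathcal{X}_\infty$ and then invokes the already-established Lipschitz constant of $V$ there.

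The genuine gap is the two-trajectory contraction estimate $\|\bm{\psi}_{\bm{x}}^{\bm{d}}(t)-\bm{\psi}_{\bm{y}}^{\bm{d}}(t)\|\leq\tilde{C}e^{-\tilde{\sigma}t}\|\bm{x}-\bm{y}\|$ near the origin, which you correctly identify as load-bearing (it is what saves the $h'$-terms in both statements, since $\|\nabla h'(\bm{z})\|=O(\|\bm{z}\|)$ only buys a factor $e^{-\sigma t}$, not enough to cancel $e^{L_f t}$ when $L_f>\sigma$) but justify only by ``the linearization at $\bm{0}$ is uniformly Hurwitz, so the flow contracts.'' That implication is false: the incremental dynamics are governed by the time-varying linear system $\dot{\bm{e}}=D_{\bm{x}}\bm{f}(\bm{\xi}(t),\bm{d}(t))\bm{e}$, and Hurwitzness of each frozen matrix $D_{\bm{x}}\bm{f}(\bm{0},\bm{d})$ does not imply stability of the time-varying system (the standard switched-systems counterexamples). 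What Assumption \ref{property}(2) actually gives is $\|D_{\bm{x}}\bm{\psi}_{\bm{0}}^{\bm{d}}(t)\|\leq Ce^{-\sigma t}$ for every $\bm{d}\in\mathcal{D}$; to get contraction for nearby initial states you still need a finite-horizon perturbation argument (fix $T_0$ with $Ce^{-\sigma T_0}\leq 1/2$, show $\|D_{\bm{x}}\bm{\psi}_{\bm{x}}^{\bm{d}}(T_0)-D_{\bm{x}}\bm{\psi}_{\bm{0}}^{\bm{d}}(T_0)\|\leq 1/4$ for $\|\bm{x}\|$ small by Gronwall over $[0,T_0]$, and compose over successive intervals using the cocycle property). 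That argument is absent, and without it both your treatment of the $h'$-term in statement 1 and your ``stretches inside $\mathcal{X}_\infty$'' regime in statement 2 are unsupported. The paper avoids the issue entirely for the $h'$-term by arguing (following \cite{grune2015}) that $\sup_{t}|h'(\bm{\psi}_{\bm{x}}^{\bm{d}}(t))-h'(\bm{\psi}_{\bm{y}}^{\bm{d}}(t))|$ is attained at a finite time $T$, so only $e^{L_f T}$ over a bounded horizon appears. Either supply the composed-variational-flow contraction argument or switch to a finite-attainment-time argument for the $h'$-terms; as written, this step does not follow from the stated hypotheses.
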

\begin{proof}
1. Since $\sup_{\bm{d}(\cdot)\in \mathcal{D}}\|\bm{\psi}_{\bm{x}}^{\bm{d}}(t)-\bm{\psi}_{\bm{y}}^{\bm{d}}(t)\|\leq e^{L_{\bm{f}}t}\|\bm{x}-\bm{y}\|$ for $t\in [0,\infty)$ and $\bm{x},\bm{y}\in \mathbb{R}^n$, we have that for $\bm{x}_0\in \mathcal{R}_0$ and $t\in [0,\infty)$, there exist $\delta_{\bm{x}_0,t}>0$ and $\rho>0$ such that $B(\bm{x}_0,\delta_{\bm{x}_0,t})\subseteq \mathcal{X}$ and $\mathtt{dist}(\bm{\psi}_{\bm{y}}^{\bm{d}}(\tau), \mathcal{X}^c)\geq \frac{\rho}{2}$ for $\bm{y}\in B(\bm{x}_0,\delta_{\bm{x}_0,t})$, $\bm{d}(\cdot)\in \mathcal{D}$ and $\tau\in [0,t]$.

Therefore, for $\bm{y}\in B(\bm{x}_0,\delta_{\bm{x}_0})$, where $B(\bm{x}_0,\delta_{\bm{x}_0})\subseteq \mathcal{R}_0$, we obtain that
 \begin{equation*}
 \begin{split}
 &|V(\bm{x}_0)-V(\bm{y})|\\
 &\leq \sup_{\bm{d}(\cdot)\in \mathcal{D}}\sup_{t\in [0,\infty)}\Big(\int_{0}^t |g(\bm{\psi}_{\bm{x}_0}^{\bm{d}}(\tau))-g(\bm{\psi}_{\bm{y}}^{\bm{d}}(\tau))|d\tau +|h'(\bm{\psi}_{\bm{x}_0}^{\bm{d}}(t))-h'(\bm{\psi}_{\bm{y}}^{\bm{d}}(t))|\Big)\\
 &\leq \sup_{\bm{d}(\cdot)\in \mathcal{D}}\int_{0}^{\infty} |g(\bm{\psi}_{\bm{x}_0}^{\bm{d}}(\tau))-g(\bm{\psi}_{\bm{y}}^{\bm{d}}(\tau))|d\tau+\sup_{\bm{d}(\cdot)\in \mathcal{D}}\sup_{t\in [0,\infty)}|h'(\bm{\psi}_{\bm{x}_0}^{\bm{d}}(t))-h'(\bm{\psi}_{\bm{y}}^{\bm{d}}(t))|.
 \end{split}
 \end{equation*}

According to Lemma \ref{glip}, \[|g(\bm{x}')-g(\bm{y}')|\leq K\max\{\|\bm{x}'\|^s,\|\bm{y}'\|^s,\|\bm{x}'\|^{s-2},\|\bm{y}'\|^{s-2}\} \|\bm{x}'-\bm{y}'\|,\forall \bm{x}',\bm{y}'\in B(\bm{0},\overline{r}).\] Thus, analogous to the proof of Proposition 4.2 in \cite{camilli2001}, we obtain that
\begin{equation}
\label{lip1}
\begin{split}
&\sup_{\bm{d}(\cdot)\in \mathcal{D}}\int_{0}^{\infty} |g(\bm{\psi}_{\bm{x}_0}^{\bm{d}}(\tau))-g(\bm{\psi}_{\bm{y}}^{\bm{d}}(\tau))|d\tau \leq L_S\|\bm{x}_0-\bm{y}\|,
\end{split}
\end{equation}
 where $L_S$ is some positive constant.

As to $\sup_{\bm{d}(\cdot)\in \mathcal{D}}\sup_{t\in [0,\infty)}|h'(\bm{\psi}_{\bm{x}_0}^{\bm{d}}(t))-h'(\bm{\psi}_{\bm{y}}^{\bm{d}}(t))|$, following the proof of (ii) of Theorem 3.1 in \cite{grune2015} and the fact that $B(\bm{x}_0,\delta_{\bm{x}_0})$ is compact (it indicates that there exists $T'\in [0,\infty)$ such that $\bm{\psi}_{\bm{y}}^{\bm{d}}(t) \in B(\bm{0},\overline{r})$ for $t\geq T'$, $\bm{d}(\cdot)\in \mathcal{D}$ and $\bm{y}\in B(\bm{x}_0,\delta_{\bm{x}_0})$.), we have that there exists a non-negative constant $T$, which is independent of $\bm{d}(\cdot)\in \mathcal{D}$ and $\bm{y}\in B(\bm{x}_0,\delta_{\bm{x}_0})$, such that
$$\sup_{\bm{d}(\cdot)\in \mathcal{D}}|h'(\bm{\psi}_{\bm{x}_0}^{\bm{d}}(T_{\bm{d}}))-h'(\bm{\psi}_{\bm{y}}^{\bm{d}}(T_{\bm{d}}))|=\sup_{\bm{d}(\cdot)\in \mathcal{D}}\sup_{t\in [0,\infty)}|h'(\bm{\psi}_{\bm{x}_0}^{\bm{d}}(t))-h'(\bm{\psi}_{\bm{y}}^{\bm{d}}(t))|,$$
where $T_{\bm{d}}\in [0,T]$. Taking $\delta_0=\min\{\delta_{\bm{x}_0,T},\delta_{\bm{x}_0}\}$, we have that for $\bm{y}\in B(\bm{x}_0,\delta_0)$,
\begin{equation}
\label{lip2}
\sup_{\bm{d}(\cdot)\in \mathcal{D}}\sup_{t\in [0,\infty)}|h'(\bm{\psi}_{\bm{x}_0}^{\bm{d}}(t))-h'(\bm{\psi}_{\bm{y}}^{\bm{d}}(t))|\leq L_{h'}e^{L_{\bm{f}} T}\|\bm{x}_0-\bm{y}\|,
\end{equation}
where $L_{\bm{f}}$ is defined in \eqref{LIP} and  $L_{h'}$ is the Lipschitz constant of the function $h'$ over the compact set $\overline{\Omega(B(\bm{x}_0,\delta_0),[0,T])}$, which is the closure of the set of states visited by system \eqref{sys1} starting from $B(\bm{x}_0,\delta_0)$ within the time interval $[0,T]$ and thus is a subset of the set $\mathcal{X}$.

Combining \eqref{lip1} and \eqref{lip2}, we have \[|V(\bm{x}_0)-V(\bm{y})|\leq L \|\bm{x}_0-\bm{y}\|, \forall \bm{y}\in B(\bm{x}_0,\delta_0),\] where $L=L_{h'}e^{L_{\bm{f}} T}+L_S.$ Thus, $V(\bm{x})$ in \eqref{V0} is Lipschitz continuous on the compact set $B(\bm{x}_0,\delta_0)$ and thus $V(\bm{x})$ is locally Lipschitz continuous on $\mathcal{R}_0$.

2. We will prove that $v(\bm{x})$ is locally Lipschitz continuous over $\mathbb{R}^n$ based on $0\leq v(\bm{x})\leq 1$ for $\bm{x}\in \mathbb{R}^n$ and the fact in Lemma \ref{dymmmm} that 
for $\bm{x}\in \mathbb{R}^n$ and $t\geq 0$, 
 \begin{equation*}
      \begin{split}
      v(\bm{x})=\sup_{\bm{d}(\cdot)\in \mathcal{D}}\max\Big\{1+&(v(\bm{\psi}_{\bm{x}}^{\bm{d}}(t))-1)e^{-\delta G(\bm{x},t,\bm{d}(\cdot))}, \\
      &\sup_{\tau\in [0,t]}\{1-e^{- \delta G(\bm{x},\tau,\bm{d}(\cdot))-\delta h'(\bm{\psi}_{\bm{x}}^{\bm{d}}(\tau))}\}\Big\}.
      \end{split}
      \end{equation*}

For $\bm{x},\bm{y}\in \mathbb{R}^n$, we have that for any $\epsilon>0$, there exists a perturbation input $\bm{d}(\cdot)\in \mathcal{D}$ such that
\begin{equation*}
\begin{split}
&    |v(\bm{x})-v(\bm{y})|\\
&=\big|\sup_{\bm{d}'(\cdot)\in \mathcal{D}}\max\Big\{(v(\bm{\psi}_{\bm{x}}^{\bm{d}'}(T))-1)e^{-\delta G(\bm{x},T,\bm{d}'(\cdot))}, \sup_{\tau\in [0,T]}\{-e^{- \delta G(\bm{x},\tau,\bm{d}'(\cdot))-\delta h'(\bm{\psi}_{\bm{x}}^{\bm{d}'}(\tau))}\}\Big\}\\
&-\sup_{\bm{d}'(\cdot)\in \mathcal{D}}\max\Big\{(v(\bm{\psi}_{\bm{y}}^{\bm{d}'}(T))-1)e^{-\delta G(\bm{y},T,\bm{d}'(\cdot))}, \sup_{\tau\in [0,T]}\{-e^{- \delta G(\bm{y},\tau,\bm{d}'(\cdot))-\delta h'(\bm{\psi}_{\bm{y}}^{\bm{d}'}(\tau))}\}\Big\}\big|\\
&\leq \max\Big\{|(v(\bm{\psi}_{\bm{x}}^{\bm{d}}(T))-1)e^{-\delta G(\bm{x},T,\bm{d}(\cdot))}-(v(\bm{\psi}_{\bm{y}}^{\bm{d}}(T))-1)e^{-\delta G(\bm{y},T,\bm{d}(\cdot))}|, \\
&~~~~~~~~~~~~~~~~~~~~~~~~~\sup_{\tau\in [0,T]}|e^{- \delta G(\bm{x},\tau,\bm{d}(\cdot))}e^{-\delta h'(\bm{\psi}_{\bm{x}}^{\bm{d}}(\tau))}-e^{- \delta G(\bm{y},\tau,\bm{d}(\cdot))}e^{-\delta h'(\bm{\psi}_{\bm{y}}^{\bm{d}}(\tau))}|\Big\}+\epsilon\\
&\leq \max\Big\{l |e^{-\delta G(\bm{x},T,\bm{d}(\cdot))}-e^{-\delta G(\bm{y},T,\bm{d}(\cdot))}|, k\sup_{\tau\in [0,T]}|e^{- \delta G(\bm{x},\tau,\bm{d}(\cdot))}-e^{- \delta G(\bm{y},\tau,\bm{d}(\cdot))}|\Big\}+\epsilon\\
&\leq \delta \max\Big\{l |G(\bm{x},T,\bm{d}(\cdot))-G(\bm{y},T,\bm{d}(\cdot))|, k \sup_{\tau\in [0,T]}|G(\bm{x},\tau,\bm{d}(\cdot))-G(\bm{y},\tau,\bm{d}(\cdot))|\Big\}+\epsilon\\
&\leq \delta \max\Big\{l\int_{0}^T |g(\bm{\psi}_{\bm{x}}^{\bm{d}}(\tau))-g(\bm{\psi}_{\bm{y}}^{\bm{d}}(\tau))|d\tau, k\sup_{t\in [0,T]}\int_{0}^t |g(\bm{\psi}_{\bm{x}}^{\bm{d}}(\tau))-g(\bm{\psi}_{\bm{y}}^{\bm{d}}(\tau))|d\tau\Big\}+\epsilon\\
&\leq \delta \max\Big\{l \int_{0}^T |g(\bm{\psi}_{\bm{x}}^{\bm{d}}(\tau))-g(\bm{\psi}_{\bm{y}}^{\bm{d}}(\tau))|d\tau, k \int_{0}^T |g(\bm{\psi}_{\bm{x}}^{\bm{d}}(\tau))-g(\bm{\psi}_{\bm{y}}^{\bm{d}}(\tau))|d\tau\Big\}+\epsilon\\
&\leq \delta \max\Big\{l \int_{0}^T L_{g}|\bm{\psi}_{\bm{x}}^{\bm{d}}(\tau)-\bm{\psi}_{\bm{y}}^{\bm{d}}(\tau)|d\tau, k \int_{0}^T L_{g}|\bm{\psi}_{\bm{x}}^{\bm{d}}(\tau)-\bm{\psi}_{\bm{y}}^{\bm{d}}(\tau)|d\tau\Big\}+\epsilon\\
&\leq \delta \max\Big\{l\int_{0}^T L_{g}e^{L_{\bm{f}} \tau}\|\bm{x}-\bm{y}\|d\tau, k \int_{0}^T L_{g}e^{L_{\bm{f}} \tau} \|\bm{x}-\bm{y}\|d\tau\Big\}+\epsilon\\
&\leq K\|\bm{x}-\bm{y}\|+\epsilon,
\end{split}
\end{equation*}
where $T>0$, $l>0$, $k>0$, $K>0$, and $L_g$ is the Lipschitz constant of the function $g(\bm{x})$ over the compact set $\Omega(B,[0,T])$, which is the closure of the set of states visited by system \eqref{sys1} starting from the set $B$  within the time interval $[0,T]$, and $B$ is a compact set covering $\bm{x}$ and $\bm{y}$.
      
Thus, $v(\bm{x})$ is locally Lipschitz continuous over $\mathbb{R}^n$.
\end{proof}

\oomit{We will prove that $v(\bm{x})$ with $\delta\geq 1$ in \eqref{v} is locally Lipschitz continuous over 
$\bm{x}\in \mathbb{R}^n$. According to the fact that $v(\bm{x})=1-e^{-\delta V(\bm{x})}$ for $\bm{x}\in \mathcal{R}_0$ as well as $v(\bm{x})\equiv 1$ for $\bm{x}\in \mathbb{R}^n \setminus \mathcal{R}_0$, it is obvious that for $\bm{x}_0\in \mathcal{R}_0$ or $\bm{x}_0\in (\mathbb{R}^n\setminus \mathcal{R}_0)^{\circ}$, there exist positive constants $\delta_0$ and $K$ such that 
\[|v(\bm{x}_0)-v(\bm{y})|\leq K\|\bm{x}-\bm{y}\|, \forall \bm{y}\in B(\bm{x}_0,\delta_0).\]

In the following we just need to prove that for $\bm{x}_0\in \partial \mathbb{R}_0$, there exists a neighborhood $B(\bm{x}_0,\delta_0)$ of $\bm{x}_0$ and a constant $K>0$ such that 
\[|v(\bm{x}_0)-v(\bm{y})|\leq K\|\bm{x}-\bm{y}\|, \forall \bm{y}\in B(\bm{x}_0,\delta_0).\]

Since $\bm{x}_0\in \partial \mathcal{R}_0$, $v(\bm{x}_0)=1$. When $\bm{y}\in (\mathbb{R}^n\setminus \mathcal{R}_0)\cap B(\bm{x}_0,\delta_0)$, there exists a constant $K_1>0$, 
\[|v(\bm{x}_0)-v(\bm{y})|=0\leq K_1\|\bm{x}_0-\bm{y}\|.\]

When $\bm{y}\in \mathcal{R}_0\cap B(\bm{x}_0,\delta_0)$ and there exists $j\in \{1,\ldots,n_{\mathcal{X}}\}$ such that $h_j(\bm{x}_0)=1$, we have that 
\begin{equation*}
\begin{split}
|v(\bm{x}_0)-v(\bm{y})|&=|1-1+e^{-\delta V(\bm{y})}|\leq \|\frac{e^{-\delta h'(\bm{y}) }}{\bm{x}_0-\bm{y}}\|\|\bm{x}_0-\bm{y}\|\\
&\leq \|\frac{(1-h_j(\bm{y}))^{\delta}}{\bm{x}_0-\bm{y}}\|\|\bm{x}_0-\bm{y}\|\leq K_2\|\bm{x}_0-\bm{y}\|.
\end{split}
\end{equation*}

When $\bm{y}\in \mathcal{R}_0\cap B(\bm{x}_0,\delta_0)$ and there does not exist $j\in \{1,\ldots,n_{\mathcal{X}}\}$ such that $h_j(\bm{x}_0)=1$,

Therefore, there exists a neighborhood $B(\bm{x}_0,\delta_0)$ of $\bm{x}_0$ and a constant $K>0$ such that $|v(\bm{x}_0)-v(\bm{y})|\leq K\|\bm{x}-\bm{y}\|, \forall \bm{y}\in B(\bm{x}_0,\delta_0)$, where $K\geq \max\{K_1,K_2\}$.}

\oomit{This proof follows the one of Proposition 4.3 in \cite{camilli2001} with small modifications.
Let $L_0$ denote the Lipschitz constant of the value function $V(\bm{x})$ on $\overline{\mathcal{X}_{\infty}}$, whose existence is guaranteed by the above argument. Also, let $L_g$ and $L_{h'}$ denote the Lipschitz constants of $g$ and $h'$ on $\mathcal{R}_0$ respectively. For $\bm{x}\in \mathcal{R}_0$, define
\[\tau(\bm{x},\bm{d}(\cdot)):=\inf\{t\geq 0\mid \bm{\psi}_{\bm{x}}^{\bm{d}}(t)\in \overline{\mathcal{X}_{\infty}}\},\]
 $$T_{\bm{x}}: =\sup_{\bm{d}(\cdot)\in \mathcal{D}}\tau(\bm{x},\bm{d}(\cdot))$$ and observe that $V(\bm{x})\geq \alpha T_{\bm{x}}$. If $\bm{x},\bm{y}\in \mathcal{R}_0$, then for any $\epsilon>0$, there exists a perturbation input $\bm{d}(\cdot)\in \mathcal{D}$ such that
\begin{equation*}
\begin{split}
&|V(\bm{x})-V(\bm{y})|\\
&\leq \max\big\{\int_{0}^{T}|g(\bm{\psi}_{\bm{x}}^{\bm{d}}(t)-g(\bm{\psi}_{\bm{y}}^{\bm{d}}(t))|dt +|V(\bm{\psi}_{\bm{x}}^{\bm{d}}(T))-V(\bm{\psi}_{\bm{y}}^{\bm{d}}(T))|,\\
&~~~~~~~~~~~~\int_{0}^{T}|g(\bm{\psi}_{\bm{x}}^{\bm{d}}(t)-g(\bm{\psi}_{\bm{y}}^{\bm{d}}(t))|dt +\sup_{\tau\in [0,T]}|h'(\bm{\psi}_{\bm{x}}^{\bm{d}}(\tau))-h'(\bm{\psi}_{\bm{y}}^{\bm{d}}(\tau))|\big\}+\epsilon\\
&\leq \max \big\{\int_{0}^{T} L_g e^{L_{\bm{f}} t}\|\bm{x}-\bm{y}\|dt +L_0\|\bm{x}-\bm{y}\|e^{L_{\bm{f}} T},\\
&~~~~~~~~~~~~~~~~~~~~~~~~~~~~~~~~~~~~\int_{0}^{T} L_g e^{L_{\bm{f}} t}\|\bm{x}-\bm{y}\|dt +L_{h'}\|\bm{x}-\bm{y}\|e^{L_{\bm{f}} T}\big\}+\epsilon\\
&\leq (L'_0+\frac{L_g}{L_{\bm{f}}})e^{\frac{L_{\bm{f}}}{\alpha} V(\bm{x})}\|\bm{x}-\bm{y}\|+\epsilon,
\end{split}
\end{equation*}
where $L'_0=\max\{L_0,L_{h'}\}$ and $T=\min\{T_{\bm{x}},T_{\bm{y}}\}$. Therefore, $V(\bm{x})$ is locally Lipschitz continuous in $\mathcal{R}_0$ with Lipschitz constant $(L'_0+\frac{L_g}{L_{\bm{f}}})e^{\frac{L_{\bm{f}}}{\alpha} V(\bm{x})}$.

Let $\phi(\bm{x}) \in C^1(\mathbb{R}^n)$ be such that $v(\bm{x})-\phi(\bm{x})$ has a local maximum at $\bm{x}_0\in \mathcal{R}_0$, where we may assume that $v(\bm{x}_0)-\phi(\bm{x}_0)=0$ and $\phi(\bm{x})\leq 1$, $\forall \bm{x}\in \mathbb{R}^n$. Then $V(\bm{x})-\psi(\bm{x})$ has a local maximum at $\bm{x}_0$ for
$\psi(\bm{x})=-\frac{\ln (1-\phi(\bm{x}))}{\delta}$. Moreover, $\frac{\partial\phi(\bm{x})}{\partial \bm{x}}\mid_{\bm{x}=\bm{x}_0}=\delta \frac{\partial\psi(\bm{x})}{\partial \bm{x}}\mid_{\bm{x}=\bm{x}_0}e^{-\delta \psi(\bm{x}_0)}$ and $\psi(\bm{x}_0)=V(\bm{x}_0)$.

According to Corollary 3 in \cite{Deville1995}, we have 
 $\|\frac{\partial\psi(\bm{x})}{\partial \bm{x}}\|_{\bm{x}=\bm{x}_0}\leq (L'_0+\frac{L_g}{L_{\bm{f}}})e^{\frac{L_{\bm{f}}}{\alpha} V(\bm{x}_0)}$. Therefore, 
\[\|\frac{\partial\phi(\bm{x})}{\partial \bm{x}}\|_{\bm{x}=\bm{x}_0}\leq \delta \|\frac{\partial\psi(\bm{x})}{\partial \bm{x}}\|_{\bm{x}=\bm{x}_0}e^{-\delta V(\bm{x}_0)}\leq \delta (L'_0+\frac{L_g}{L_{\bm{f}}})e^{(\frac{L_{\bm{f}}}{\alpha}-\delta) V(\bm{x}_0)}.\]
 Hence, since $v(\bm{x})\equiv 1$ in $\mathbb{R}^n\setminus \mathcal{R}_0$, we have that
$\|\frac{\partial\phi(\bm{x})}{\partial \bm{x}}\|_{\bm{x}=\bm{x}_0}\leq \delta (L'_0+\frac{L_g}{L_{\bm{f}}})$ for any $\bm{x}_0\in \mathbb{R}^n$ and any $\phi(\bm{x})\in C^1(\mathbb{R}^n)$ satisfying that $v(\bm{x})-\phi(\bm{x})$ has a local maximum at $\bm{x}_0$. This implies that $v(\bm{x})$ is locally Lipschitz continuous in $\mathbb{R}^n$ according to Corollary 3 in \cite{Deville1995}.}

Consequently, combining Theorem 4.4 in \cite{grune2015} and Lemma \ref{conti} we have: 
\begin{theorem}
\label{hj}
The value function $V(\bm{x})$ in \eqref{V0} is the unique locally Lipschitz continuous viscosity solution to the following equation
\begin{equation}
\label{HJB1}
\begin{aligned}
&\min\big\{ \inf_{\bm{d}\in D}\{-\frac{\partial V(\bm{x})}{\partial \bm{x}}\bm{F}(\bm{x},\bm{d})-g(\bm{x})\},V(\bm{x})-h'(\bm{x})\big\}=0, \forall \bm{x}\in \mathcal{R}_0,\\
& V(\bm{0})=0.
\end{aligned}
\end{equation}
Likewise, the value function $v(\bm{x})$ in \eqref{v} is the unique bounded and locally Lipschitz continuous viscosity solution to the generalized Zubov's equation
\begin{equation}
\label{HJB2}
\begin{split}
&\min\big\{\inf_{\bm{d}\in D}\{-\frac{\partial v(\bm{x})}{\partial \bm{x}}\bm{F}(\bm{x},\bm{d})-\delta g(\bm{x})(1-v(\bm{x}))\},\\
&~~~~~~~~~~~~~~~~~~~~~~~~~~~~~~~~~~~~~~~~~v(\bm{x})+e^{- \delta h'(\bm{x})}-1\big\}=0, \forall \bm{x}\in \mathbb{R}^n,\\
&v(\bm{0})=0.
\end{split}
\end{equation}
\end{theorem}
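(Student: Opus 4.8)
The plan is to obtain the statement by assembling Theorem~4.4 of \cite{grune2015} with the preparatory results already in place, rather than redeveloping the viscosity-solution theory from scratch.

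First I would verify that the hypotheses underlying Grüne's framework hold for the auxiliary system \eqref{sys1}. Three ingredients are required: (i) global existence on $[0,\infty)$ of the trajectories $\bm{\psi}_{\bm{x}_0}^{\bm{d}}(\cdot)$ for every $\bm{x}_0\in\mathbb{R}^n$, which is precisely what Subsection~\ref{re} delivers through Kirszbraun's theorem (Theorem~\ref{kri}); (ii) the running cost $g$ of \eqref{gin} satisfies Assumption~\ref{assum2}, which is the content of Lemma~\ref{glip}; and (iii) the barrier $h'$ satisfies requirement (A4) of \cite{grune2015}, as recorded in the paragraph following Assumption~\ref{assum2}. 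The uniform local exponential stability in Assumption~\ref{property}(2) supplies the local asymptotic controllability that Grüne's result needs near the origin, and Proposition~\ref{eqiva} ensures that $\mathcal{R}_0$ (and the relevant Hamiltonian) is unaffected by the passage from \eqref{systems} to \eqref{sys1}, since $\bm{F}=\bm{f}$ on $B(\bm{0},R)\supseteq\mathcal{X}$.

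Given these checks, Theorem~4.4 of \cite{grune2015} applies and gives at once that $V$ from \eqref{V0} is the unique continuous viscosity solution of \eqref{HJB1} on $\mathcal{R}_0$ with $V(\bm{0})=0$, and that $v$ from \eqref{v} is the unique bounded continuous viscosity solution of the generalized Zubov equation \eqref{HJB2} on $\mathbb{R}^n$ with $v(\bm{0})=0$; boundedness is immediate because $g,h'\geq 0$ force $V\geq 0$, so $v=1-e^{-\delta V}\in[0,1]$. To sharpen ``continuous'' to ``Lipschitz continuous'' in the first assertion I would invoke Lemma~\ref{conti}(1), which shows $V$ is Lipschitz on $\mathcal{R}_0$; then $V$ is a Lipschitz continuous viscosity solution, and uniqueness within this subclass follows a fortiori from the uniqueness already obtained among all continuous viscosity solutions. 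The final clause is exactly Lemma~\ref{conti}(2): when $\delta\geq L_f/\alpha$, $v$ is Lipschitz on all of $\mathbb{R}^n$.

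The only genuinely delicate point will be the bookkeeping that aligns the data of this paper with those in \cite{grune2015}: the cost $g$ in \eqref{gin} is neither polynomial nor differentiable across $\partial\mathcal{X}_{\infty}$, so one must confirm that the properties isolated in Lemma~\ref{glip} are exactly those used by Theorem~4.4, and that replacing \eqref{systems} by the globally Lipschitz extension \eqref{sys1} leaves $\mathcal{R}_0$, the value functions, and the Hamilton--Jacobi operators unchanged on the region that matters. Once this matching is carried out, the theorem is a direct corollary --- all of the substantive analytic work resides in Lemmas~\ref{glip} and \ref{conti} and in \cite{grune2015}.
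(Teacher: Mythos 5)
Your proposal is correct and follows essentially the same route as the paper, which states the theorem as a direct consequence of combining Theorem~4.4 of \cite{grune2015} with Lemma~\ref{conti} and gives no further argument. Your additional verification of the hypotheses (global existence via Kirszbraun, Assumption~\ref{assum2} via Lemma~\ref{glip}, the (A4) requirement on $h'$, and the equivalence from Proposition~\ref{eqiva}) simply makes explicit the bookkeeping the paper leaves implicit.
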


As a direct consequence of \eqref{HJB2}, we have that if a continuously differentiable function $u(\bm{x}):\mathbb{R}^n\rightarrow \mathbb{R}$ satisfies \eqref{HJB2}, then $u(\bm{x})$ satisfies the constraints: 
\begin{equation}
\label{inequa1}
\left\{
\begin{array}{lll}
&-\frac{\partial u(\bm{x})}{\partial \bm{x}}\bm{F}(\bm{x},\bm{d})-\delta g(\bm{x})(1-u(\bm{x}))\geq 0, \forall \bm{x}\in \mathbb{R}^n, \forall \bm{d}\in D,\\
&u(\bm{x})+\min_{j\in \{1,\ldots,n_{\mathcal{X}}\}}(1-h_j(\bm{x}))^{\delta}-1\geq 0, \forall \bm{x}\in \mathcal{X},\\
&u(\bm{x})\geq 1, \forall \bm{x}\in \mathbb{R}^n\setminus \mathcal{X}.
\end{array}
\right.
\end{equation}

\begin{corollary}
\label{relax}
Assume a continuously differentiable function $u(\bm{x}): \mathbb{R}^n \rightarrow \mathbb{R}$ is a solution to \eqref{inequa1}, then $v(\bm{x})\leq u(\bm{x})$ over $\bm{x}\in \mathbb{R}^n$ and consequently $\Omega=\{\bm{x}\mid u(\bm{x})<1\}\subset \mathcal{R}_0$ is a robust domain of uniform attraction.
\end{corollary}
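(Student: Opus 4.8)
The plan is to establish the comparison $v(\bm x)\le u(\bm x)$ on all of $\mathbb R^n$ by exploiting that $v$ is the \emph{unique} viscosity solution of the generalized Zubov equation \eqref{HJB2} (Theorem \ref{hj}), together with the fact that a classical (continuously differentiable) solution of the system of inequalities \eqref{inequa1} is, in particular, a viscosity supersolution of \eqref{HJB2}. Concretely, I would first rewrite the second and third lines of \eqref{inequa1} as the single pointwise inequality $u(\bm x)+e^{-\delta h'(\bm x)}-1\ge 0$ on $\mathbb R^n$: on $\mathcal X$ this is exactly the second line since $e^{-\delta h'(\bm x)}=\min_j(1-h_j(\bm x))^\delta$ by the definition of $h'$ in Assumption \ref{assum2}, and on $\mathbb R^n\setminus\mathcal X$ we have $h'(\bm x)=+\infty$, so $e^{-\delta h'(\bm x)}=0$ and the inequality reduces to $u(\bm x)\ge 1$, which is the third line. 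Combining with the first line of \eqref{inequa1}, the $C^1$ function $u$ satisfies
\[
\min\Big\{\inf_{\bm d\in D}\big\{-\tfrac{\partial u}{\partial \bm x}\bm F(\bm x,\bm d)-\delta g(\bm x)(1-u(\bm x))\big\},\,u(\bm x)+e^{-\delta h'(\bm x)}-1\Big\}\ge 0,\quad\forall \bm x\in\mathbb R^n,
\]
i.e.\ $u$ is a (classical, hence viscosity) supersolution of \eqref{HJB2}.

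The second step is a comparison argument: since $v$ is the unique bounded continuous viscosity solution of \eqref{HJB2} and $u$ is a viscosity supersolution, a comparison principle for this Hamilton--Jacobi equation yields $v(\bm x)\le u(\bm x)$ on $\mathbb R^n$. Such a comparison principle is exactly what underlies the uniqueness statement in Theorem \ref{hj} (and in \cite{grune2015}); I would invoke it directly, noting that the boundary/normalization condition $v(\bm 0)=0$ is compatible because the equilibrium lies in the interior of $\mathcal R_0$ and $u(\bm 0)\ge 0$ follows from \eqref{inequa1} together with $g(\bm 0)=0$, $h'(\bm 0)=0$. Alternatively, one can argue more directly along characteristics: using the dynamic programming representation \eqref{v} of $v$, for any $\bm x$, $\bm d(\cdot)$ and $t\ge 0$, integrate the first inequality in \eqref{inequa1} along $\bm\psi_{\bm x}^{\bm d}$ to get $1-(1-u(\bm x))\le 1-(1-u(\bm\psi_{\bm x}^{\bm d}(t)))e^{-\delta G(\bm x,t,\bm d)}$, and combine with $1-(1-u(\bm\psi_{\bm x}^{\bm d}(\tau)))e^{-\delta G(\bm x,\tau,\bm d)}\ge 1-e^{-\delta G(\bm x,\tau,\bm d)-\delta h'(\bm\psi_{\bm x}^{\bm d}(\tau))}$ coming from the second/third inequalities, to bound $u(\bm x)$ below by the quantity whose supremum over $\bm d$ and $t$ equals $v(\bm x)$ by the dynamic programming principle \eqref{dynam2}.

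The final step is to translate the pointwise bound into the set inclusion. From $v(\bm x)\le u(\bm x)$ we get $\{\bm x\mid u(\bm x)<1\}\subseteq\{\bm x\mid v(\bm x)<1\}=\mathcal R_0$, where the last equality is the characterization of $\mathcal R_0$ recorded after \eqref{VVV} (item 1, from Theorem 3.1 of \cite{grune2015}). Since $\Omega=\{\bm x\mid u(\bm x)<1\}$ is an open subset of $\mathcal R_0$ and $\mathcal R_0$ is itself a robust domain of uniform attraction (being the maximal one), $\Omega$ is a robust domain of uniform attraction, which is the claim.

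The main obstacle I anticipate is justifying the comparison/viscosity step cleanly: the Hamiltonian in \eqref{HJB2} is of min type with an obstacle term $e^{-\delta h'(\bm x)}$ that is merely continuous (it vanishes on $\mathbb R^n\setminus\mathcal X$ and is smooth inside), and $\mathcal X$ is unbounded-complement in the sense that $h'\equiv\infty$ outside — so one must be careful that the comparison principle from \cite{grune2015} applies to $C^1$ supersolutions on all of $\mathbb R^n$ and not just on $\mathcal R_0$. I would handle this by using the auxiliary globally Lipschitz system \eqref{sys1} (so that trajectories exist for all time from every $\bm x$, as emphasized in Subsection \ref{re}) and by relying on the characteristic/DPP argument above, which sidesteps the technical viscosity comparison entirely and only uses that $u$ is $C^1$ and satisfies \eqref{inequa1} pointwise.
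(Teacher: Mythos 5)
Your proposal is correct and follows essentially the same route as the paper: rewrite the second and third constraints of \eqref{inequa1} as the single inequality $u(\bm{x})+e^{-\delta h'(\bm{x})}-1\geq 0$, observe that $u$ is then a viscosity super-solution of \eqref{HJB2}, and invoke the comparison principle (Proposition 4.7 of \cite{grune2015}) to obtain $v\leq u$ and hence $\{u<1\}\subseteq\{v<1\}=\mathcal{R}_0$. Your verification of the super-solution property (a $C^1$ function satisfying the pointwise inequalities is automatically a viscosity super-solution) is in fact more direct than the paper's trajectory-based contradiction argument, and your Gronwall-along-characteristics alternative is a sound way to sidestep the comparison principle entirely.
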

\begin{proof}
It is obvious that \eqref{inequa1} is equivalent to
\begin{equation}
\label{inequa}
\left\{
\begin{array}{lll}
&-\frac{\partial u(\bm{x})}{\partial \bm{x}}\bm{F}(\bm{x},\bm{d})-\delta g(\bm{x})(1-u(\bm{x}))\geq 0, \forall \bm{x}\in \mathbb{R}^n, \forall \bm{d}\in D,\\
&u(\bm{x})+e^{-\delta h'(\bm{x})}-1\geq 0, \forall \bm{x}\in \mathbb{R}^n.
\end{array}
\right.
\end{equation}

If $u(\bm{x})$ is a viscosity super-solution to \eqref{HJB2}, according to the comparison principle in Proposition 4.7 in \cite{grune2015}, $v(\bm{x})\leq u(\bm{x})$ holds. Consequently, $\Omega=\{\bm{x}\mid u(\bm{x})<1\}\subset \mathcal{R}_0$ is a robust domain of uniform attraction. In the following we show that $u(\bm{x})$ is a viscosity super-solution to \eqref{HJB2}

Let's first recall the concept of viscosity super-solution to \eqref{HJB2}. A lower semicontinuous function $u_l(\cdot): \mathbb{R}^n\rightarrow \mathbb{R}$ is a viscosity super-solution of \eqref{HJB2} \cite{grune2015} if for all $\phi(\bm{x})\in C^1(\mathbb{R}^n)$ such that $u_l(\bm{x})-\phi(\bm{x})$ has a local minimum at $\bm{x}_0$, we have
\begin{equation*}
\begin{split}
\min\big\{\inf_{\bm{d}\in D}\{-\frac{\partial \phi(\bm{x})}{\partial \bm{x}}\mid_{\bm{x}=\bm{x}_0}\bm{F}(\bm{x}_0,\bm{d})-&\delta g(\bm{x}_0)(1-u_l(\bm{x}_0))\},\\
&u_l(\bm{x}_0)+e^{- \delta h'(\bm{x}_0)}-1\big\}\geq 0.
\end{split}
\end{equation*}

Since $u(\bm{x})$ satisfies \eqref{inequa}, we just show that $\inf_{\bm{d}\in D}\{-\frac{\partial \phi(\bm{x})}{\partial \bm{x}}\mid_{\bm{x}=\bm{x}_0}\bm{F}(\bm{x}_0,\bm{d})-\delta g(\bm{x}_0)(1-u(\bm{x}_0))\}\geq 0$, where $\phi(\bm{x})\in C^1(\mathbb{R}^n)$ and $u(\bm{x})-\phi(\bm{x})$ has a local minimum at $\bm{x}_0$. Without loss of generality, we assume that $u(\bm{x}_0)-\phi(\bm{x}_0)=0$. There exists $\delta_0>0$ such that
$$u(\bm{x})-\phi(\bm{x})\geq 0, \forall \bm{x}\in B(\bm{x}_0,\delta_0).$$

Suppose that $\inf_{\bm{d}\in D}\{-\frac{\partial \phi(\bm{x})}{\partial \bm{x}}\mid_{\bm{x}=\bm{x}_0}\bm{F}(\bm{x}_0,\bm{d})-\delta g(\bm{x}_0)(1-u(\bm{x}_0))\}\geq 0$ does not hold, i.e., $\inf_{\bm{d}\in D}\{-\frac{\partial \phi(\bm{x})}{\partial \bm{x}}\mid_{\bm{x}=\bm{x}_0}\bm{F}(\bm{x}_0,\bm{d})-\delta g(\bm{x}_0)(1-\phi(\bm{x}_0))\}< 0$. Then there exists $\epsilon>0$ such that
$$\inf_{\bm{d}\in D}\{-\frac{\partial \phi(\bm{x})}{\partial \bm{x}}\mid_{\bm{x}=\bm{x}_0}\bm{F}(\bm{x}_0,\bm{d})-\delta g(\bm{x}_0)(1-\phi(\bm{x}_0))\}=-\epsilon.$$
Further, there exists $\bm{d}_1\in D$ such that
 $$-\frac{\partial \phi(\bm{x})}{\partial \bm{x}}\mid_{\bm{x}=\bm{x}_0}\bm{F}(\bm{x}_0,\bm{d}_1)-\delta g(\bm{x}_0)(1-\phi(\bm{x}_0))\leq -\frac{\epsilon}{2}$$
 and consequently there exists $\delta'>0$ with $\delta'\leq \delta_0$ such that
 $$-\frac{\partial \phi(\bm{x})}{\partial \bm{x}}\bm{F}(\bm{x},\bm{d}_1)-\delta g(\bm{x})(1-\phi(\bm{x}))\leq -\frac{\epsilon}{4}, \forall \bm{x}\in B(\bm{x}_0,\delta').$$
Since $\bm{\psi}_{\bm{x}_0}^{\bm{d}}(t)$ is absolutely continuous over $t$ for $\bm{d}(\cdot)\in \mathcal{D}$, there exists $\theta>0$ such that for $\tau \in [0,\theta]$,
\begin{equation}
\label{con22}
\begin{split}
 -\frac{\partial \phi(\bm{x})}{\partial \bm{x}}\mid_{\bm{x}=\bm{\psi}_{\bm{x}_0}^{\bm{d}'_1}(\tau)}\bm{F}(\bm{\psi}_{\bm{x}_0}^{\bm{d}'_1}(\tau),\bm{d}'_1(\tau))-\delta g(\bm{\psi}_{\bm{x}_0}^{\bm{d}'_1}(\tau))(1-\phi(\bm{\psi}_{\bm{x}_0}^{\bm{d}'_1}(\tau)))\leq -\frac{\epsilon}{4},
 \end{split}
 \end{equation}
where $\bm{d}'_1(\cdot)\in \mathcal{D}$ with $\bm{d}'_1(\tau)=\bm{d}_1$ for $\tau \in [0,\theta]$. Therefore, we have that for $\tau \in [0,\theta]$,
\begin{equation}
\label{con221}
\begin{split}
 -\frac{\partial \phi(\bm{x})}{\partial \bm{x}}\mid_{\bm{x}=\bm{\psi}_{\bm{x}_0}^{\bm{d}'_1}(\tau)}\bm{F}(\bm{\psi}_{\bm{x}_0}^{\bm{d}'_1}(\tau),\bm{d}'_1(\tau))-\delta g(\bm{\psi}_{\bm{x}_0}^{\bm{d}'_1}(\tau))(1-\phi(\bm{\psi}_{\bm{x}_0}^{\bm{d}'_1}(\tau)))<0,
 \end{split}
 \end{equation}
where $\bm{d}'_1(\cdot)\in \mathcal{D}$ with $\bm{d}'_1(\tau)=\bm{d}_1$ for $\tau \in [0,\theta]$.

By applying Gronwall's inequality \cite{gronwall1919} to \eqref{con221} together with \eqref{con22} with the time interval [0, $\theta$], we have that
$$\phi(\bm{x}_0)-1<e^{-\delta G}(\phi(\bm{\psi}_{\bm{x}_0}^{\bm{d}'_1}(\theta))-1),$$
where $G=\int_{0}^{\theta} g(\bm{\psi}_{\bm{x}_0}^{\bm{d}'_1}(t))dt$.
Therefore,
$$u(\bm{x}_0)-1< e^{-\delta  G}(u(\bm{\psi}_{\bm{x}_0}^{\bm{d}'_1}(\theta))-1),$$
which contradicts the fact that
$$-\frac{\partial u(\bm{x})}{\partial \bm{x}}\bm{F}(\bm{x},\bm{d})-\delta g(\bm{x})(1-u(\bm{x}))\geq 0, \forall \bm{x}\in \mathbb{R}^n, \forall \bm{d}\in D.$$
Therefore, we conclude  that $u(\bm{x})$ is a viscosity super-solution to \eqref{HJB2}.
\end{proof}

From Corollary \ref{relax} we observe that a robust domain of attraction can be found by solving \eqref{inequa1} rather than  \eqref{HJB2}. However, $u(\bm{x})$ is required to satisfy \eqref{inequa1} over $\mathbb{R}^n$, which is a strong condition. This requirement renders the search for a continuously differentiable solution to \eqref{inequa1} nonetheless nontrivial. Regarding this issue, we further relax this condition and restrict the search for a continuously differentiable function $u(\bm{x})$ in the compact set $B(\bm{0},R)\setminus \mathcal{X}_{\infty}$, where $B(\bm{0},R)$ is defined in \eqref{B}. Also, since $\bm{F}(\bm{x},\bm{d})=\bm{f}(\bm{x},\bm{d})$ for $(\bm{x},\bm{d})\in B(\bm{0},R)\times D$ and $g(\bm{x})=q(\bm{x})$ for $\bm{x}\in B(\bm{0},R)\setminus \mathcal{X}_{\infty}$, we obtain the following system of constraints:
\begin{equation}
\label{inequa2}
\left\{
\begin{array}{lll}
&-\frac{\partial u(\bm{x})}{\partial \bm{x}}\bm{f}(\bm{x},\bm{d})-\delta q(\bm{x})(1-u(\bm{x}))\geq 0, \forall \bm{x}\in B(\bm{0},R)\setminus \mathcal{X}_{\infty}, \forall \bm{d}\in D,\\
&u(\bm{x})-h_j(\bm{x})\geq 0,j=1,\ldots,n_{\mathcal{X}},\forall \bm{x}\in \overline{\mathcal{X}\setminus \mathcal{X}_{\infty}},\\
&u(\bm{x})- 1\geq 0, \forall \bm{x}\in B(\bm{0},R)\setminus\mathcal{X}.
\end{array}
\right.
\end{equation}

\begin{theorem}
\label{inner}
Let $u(\bm{x})$ be a continuously differentiable solution to \eqref{inequa2} and $\delta$ be a positive value, then $\Omega=\{\bm{x}\in B(\bm{0},R)\mid u(\bm{x})<1\}$ is a robust domain of attraction.
\end{theorem}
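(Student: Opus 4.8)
The plan is to prove directly that $\Omega\subseteq\mathcal{R}$, which is precisely what Definition \ref{RDA} asks of a robust domain of attraction. I would argue for the auxiliary system \eqref{sys1} and transfer the conclusion to \eqref{systems} at the end via Proposition \ref{eqiva}. The argument separates a trajectory's behaviour according to whether it lies in the already-known robust domain of uniform attraction $\mathcal{X}_{\infty}$ of \eqref{initial} or not, and on $\Omega\setminus\mathcal{X}_{\infty}$ it reads the first line of \eqref{inequa2} as a strict Lyapunov-type decrease. First I would record three bookkeeping facts. (i) $\Omega\subseteq\mathcal{X}$: for $\bm{x}\in\Omega$, either $\bm{x}\in\mathcal{X}_{\infty}\subseteq\mathcal{X}$, or $\bm{x}\in\overline{\mathcal{X}}\setminus\mathcal{X}_{\infty}$, where the second line of \eqref{inequa2} gives $(1-h_{j}(\bm{x}))^{\delta}\geq 1-u(\bm{x})>0$ for every $j$, hence (here $\delta$ being a positive odd integer makes $(1-h_{j}(\bm{x}))^{\delta}$ a polynomial with the sign of $1-h_{j}(\bm{x})$) $h_{j}(\bm{x})<1$ for all $j$, i.e. $\bm{x}\in\mathcal{X}$; the remaining case $\bm{x}\in B(\bm{0},R)\setminus\mathcal{X}$ is ruled out because the third line of \eqref{inequa2} forces $u(\bm{x})\geq 1$. (ii) Therefore $\Omega$ is open with $\overline{\Omega}$ compact, and since $\mathcal{X}$ is bounded with $\mathcal{X}\subset B(\bm{0},R)$ and $\partial\mathcal{X}\cap\partial B(\bm{0},R)=\emptyset$ (the properties fixed in \eqref{B}), $\overline{\Omega}\subseteq\overline{\mathcal{X}}\subseteq B(\bm{0},R)^{\circ}$; in particular a trajectory can leave $\Omega$ only at a point of $\{u=1\}$ interior to $B(\bm{0},R)$. (iii) $q(\bm{x})\geq\alpha>0$ on $B(\bm{0},R)\setminus\mathcal{X}_{\infty}$ by \eqref{initial}.

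Now fix $\bm{x}_{0}\in\Omega$ and $\bm{d}\in\mathcal{D}$, and write $\bm{\psi}(t)=\bm{\psi}_{\bm{x}_{0}}^{\bm{d}}(t)$, $w(t)=u(\bm{\psi}(t))$. If $\bm{x}_{0}\in\mathcal{X}_{\infty}$ we are done, since $\mathcal{X}_{\infty}$ is a robust domain of uniform attraction and hence $\bm{x}_{0}\in\mathcal{R}_{0}\subseteq\mathcal{R}$. Otherwise let $[0,T)$ be the maximal interval on which $\bm{\psi}(t)\in\Omega$ (nonempty since $\Omega$ is open). While $\bm{\psi}(t)\in\Omega\setminus\mathcal{X}_{\infty}\subseteq B(\bm{0},R)\setminus\mathcal{X}_{\infty}$, the first line of \eqref{inequa2} applies, and using $w(t)<1$ together with fact (iii),
\[
\dot{w}(t)=\frac{\partial u}{\partial \bm{x}}\Big|_{\bm{x}=\bm{\psi}(t)}\bm{f}(\bm{\psi}(t),\bm{d}(t))\;\leq\;-\delta\, q(\bm{\psi}(t))\,(1-w(t))\;\leq\;-\delta\alpha\,(1-w(t))\;<\;0 \quad\text{for a.e. such }t,
\]
so $1-w(t)\geq (1-w(0))\,e^{\delta\alpha t}$; thus $w$ is strictly decreasing and $w(t)\to-\infty$ along any sub-interval on which $\bm{\psi}$ stays in $\Omega\setminus\mathcal{X}_{\infty}$. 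Suppose, for contradiction, that $\bm{\psi}(t)\notin\mathcal{X}_{\infty}$ for all $t\in[0,T)$. If $T=\infty$ then $w(t)\to-\infty$, contradicting that the continuous function $u$ is bounded below on the compact set $\overline{\Omega}$. If $T<\infty$ then $\bm{\psi}(T)\in\partial\Omega$ by maximality and openness, yet $w(T)=\lim_{t\uparrow T}w(t)\leq w(0)<1$ and, by fact (ii), $\bm{\psi}(T)\in B(\bm{0},R)^{\circ}$, so $\bm{\psi}(T)\in\{u<1\}\cap B(\bm{0},R)=\Omega$, contradicting $\bm{\psi}(T)\in\partial\Omega$. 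Hence some $t^{\ast}\in[0,T)$ has $\bm{\psi}(t^{\ast})\in\mathcal{X}_{\infty}$, and on $[0,t^{\ast}]\subseteq[0,T)$ we have $\bm{\psi}(t)\in\Omega\subseteq\mathcal{X}$.

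From time $t^{\ast}$ on I would invoke the defining property of $\mathcal{X}_{\infty}$ once more: $\bm{\psi}(t^{\ast})\in\mathcal{X}_{\infty}\subseteq\mathcal{R}_{0}\subseteq\mathcal{R}$, and by uniqueness of solutions of \eqref{sys1} the curve $s\mapsto\bm{\psi}(t^{\ast}+s)$ is the solution from $\bm{\psi}(t^{\ast})$ under the shifted perturbation $\bm{d}(\cdot+t^{\ast})\in\mathcal{D}$, hence it stays in $\mathcal{X}$ for all $s\geq 0$ and converges to $\bm{0}$. Combining with $\bm{\psi}([0,t^{\ast}])\subseteq\mathcal{X}$ yields $\bm{\psi}_{\bm{x}_{0}}^{\bm{d}}(t)\in\mathcal{X}$ for all $t\geq 0$ and $\lim_{t\to\infty}\bm{\psi}_{\bm{x}_{0}}^{\bm{d}}(t)=\bm{0}$. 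Since $\bm{d}\in\mathcal{D}$ was arbitrary, $\mathcal{D}_{ad}(\bm{x}_{0})=\mathcal{D}$ and the limit holds for every $\bm{d}$, so $\bm{x}_{0}\in\mathcal{R}$ by Proposition \ref{eqiva}. As $\bm{x}_{0}\in\Omega$ was arbitrary, $\Omega\subseteq\mathcal{R}$, i.e. $\Omega$ is a robust domain of attraction.

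The step needing the most care is the bookkeeping in the middle paragraph: verifying that the decrease inequality coming from the first line of \eqref{inequa2} is genuinely in force along the whole portion of the trajectory under consideration (which is why the containment $\Omega\setminus\mathcal{X}_{\infty}\subseteq B(\bm{0},R)\setminus\mathcal{X}_{\infty}$, the $C^{1}$ regularity of $u$, and the compactness of $\overline{\Omega}$ all matter), and pinning down that a trajectory can leave $\Omega$ only at a point of $\{u=1\}$ interior to $B(\bm{0},R)$ — which is exactly what lets the strict monotonicity of $w$ produce the contradiction. The sign computations, the Gronwall estimate, and the final reduction to \eqref{systems} via Proposition \ref{eqiva} are routine.
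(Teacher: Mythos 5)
Your proof is correct and follows essentially the same route as the paper's: use the second and third constraints of \eqref{inequa2} to get $\Omega\subseteq\mathcal{X}$, use the first constraint together with $q\geq\alpha$ on $B(\bm{0},R)\setminus\mathcal{X}_{\infty}$ and a Gronwall-type estimate to force $u$ to decrease exponentially along trajectories outside $\mathcal{X}_{\infty}$, exploit $\partial\mathcal{X}\cap\partial B(\bm{0},R)=\emptyset$ to rule out escape, and hand off to $\mathcal{X}_{\infty}$ being a robust domain of uniform attraction for convergence. The only differences are organizational (you merge the paper's two separate contradiction arguments for invariance and attraction into a single maximal-exit-time argument, and you use boundedness of $u$ on the compact set $\overline{\Omega}$ where the paper uses $u\geq 0$ on $B(\bm{0},R)\setminus\mathcal{X}_{\infty}$), which does not change the substance.
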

\begin{proof}
Firstly, since $\bm{F}(\bm{x},\bm{d})=\bm{f}(\bm{x},\bm{d})$ for $(\bm{x},\bm{d})\in B(\bm{0},R)\times D$ and $g(\bm{x})=q(\bm{x})$ over $\bm{x}\in B(\bm{0},R)\setminus \mathcal{X}_{\infty}$,
\eqref{inequa2} is equivalent to
\begin{equation*}
\label{inequa3}
\left\{
\begin{array}{lll}
&-\frac{\partial u(\bm{x})}{\partial \bm{x}}\bm{F}(\bm{x},\bm{d})-\delta g(\bm{x}) (1-u(\bm{x}))\geq 0, \forall \bm{x}\in \overline{B(\bm{0},R)\setminus \mathcal{X}_{\infty}}, \forall \bm{d}\in D,\\
&u(\bm{x})-h_j(\bm{x})\geq 0, j=1,\ldots,n_{\mathcal{X}},\forall \bm{x}\in \overline{\mathcal{X}\setminus \mathcal{X}_{\infty}},\\
&u(\bm{x})- 1\geq 0, \forall \bm{x}\in B(\bm{0},R)\setminus\mathcal{X}.
\end{array}
\right..
\end{equation*}

Since $u(\bm{x})- 1\geq 0$ for $\bm{x}\in B(\bm{0},R)\setminus \mathcal{X}$, $\Omega\subset \mathcal{X}$ holds. Next we prove that every possible trajectory initialized in the set $\Omega$ will approach the equilibrium state $\bm{0}$ eventually while never leaving the state constraint set $\mathcal{X}$.

Assume that there exist $\bm{y}\in \Omega$, a perturbation input $\bm{d}'(\cdot)\in \mathcal{D}$ and $\tau>0$ such that $$\bm{\psi}_{\bm{y}}^{\bm{d}'}(t) \in \mathcal{X}, \forall t\in [0,\tau)$$ and $$\bm{\psi}_{\bm{y}}^{\bm{d}'}(\tau)\notin \mathcal{X}.$$ Obviously, $\bm{y}\notin \mathcal{X}_{\infty}$ and $\bm{\psi}_{\bm{y}}^{\bm{d}'}(t)\notin \mathcal{X}_{\infty}$ for $t\in [0,\tau]$  since $\mathcal{X}_{\infty}$ is a robust domain of attraction. That is, $$\bm{\psi}_{\bm{y}}^{\bm{d}'}(t)\in B(\bm{0},R)\setminus \mathcal{X}_{\infty}, \forall t\in [0,\tau].$$ Applying Gronwall's inequality \cite{gronwall1919} to $-\frac{\partial u}{\partial \bm{x}}\bm{F}(\bm{x},\bm{d})-\delta g(\bm{x}) (1-u(\bm{x}))\geq 0$ with the time interval [0, $\tau$], we have that
$$u(\bm{y})-1\geq e^{-\delta G}(u(\bm{\psi}_{\bm{y}}^{\bm{d}'}(\tau))-1),$$
where $G=\int_{0}^{\tau}g(\bm{\psi}_{\bm{y}}^{\bm{d}'}(t))dt>0$. Therefore, $u(\bm{\psi}_{\bm{y}}^{\bm{d}'}(\tau))<1$. However, since $\mathcal{X}\subseteq B(\bm{0},R)$ and $\partial \mathcal{X}\cap \partial B(\bm{0},R)=\emptyset$, $\bm{\psi}_{\bm{y}}^{\bm{d}'}(\tau)\in B(\bm{0},R)\setminus \mathcal{X}$ holds and consequently $u(\bm{\psi}_{\bm{y}}^{\bm{d}'}(\tau))\geq 1$. This is a contradiction. Thus, every possible trajectory initialized in the set $\Omega$ never leaves the set $\mathcal{X}$.

Lastly, we prove that every possible trajectory initialized in the set $\Omega$ approaches the equilibrium state $\bm{0}$ eventually. Since every possible trajectory initialized in the set $\mathcal{X}_{\infty}$ approaches the equilibrium state $\bm{0}$ eventually, it is enough to prove that every possible trajectory initialized in the set $\Omega\setminus \mathcal{X}_{\infty}$ will enter the set $\mathcal{X}_{\infty}$ in finite time. Assume that there exist $\bm{y}\in \Omega$ and a perturbation input $\bm{d}'(\cdot)$ such that $\bm{\psi}_{\bm{y}}^{\bm{d}'}(t) \notin \mathcal{X}_{\infty}$ for all $t\geq 0$. According to the second constraint and the third constraint in \eqref{inequa2}, we have $u(\bm{x})\geq 0$ for $\bm{x}\in B(\bm{0},R)\setminus \mathcal{X}_{\infty}$. 
Also, since $\bm{\psi}_{\bm{y}}^{\bm{d}'}(t)\in B(\bm{0},R)$ for all $t\geq 0$, $u(\bm{\psi}_{\bm{y}}^{\bm{d}'}(t))\geq 0$ holds for all $t\geq 0$. Moreover, applying  Gronwall's inequality \cite{gronwall1919} again to $$-\frac{\partial u(\bm{x})}{\partial \bm{x}}\bm{F}(\bm{x},\bm{d})-\delta g(\bm{x}) (1-u(\bm{x}))\geq 0$$ with the time interval [0, $\tau$] for $\tau>0$, we have that
$u(\bm{\psi}_{\bm{y}}^{\bm{d}'}(\tau))<1$. This implies that $$u(\bm{\psi}_{\bm{y}}^{\bm{d}'}(\tau))\in \Omega\setminus \mathcal{X}_{\infty}, \forall \tau \geq 0.$$ Also, since $$\frac{\partial u(\bm{x})}{\partial \bm{x}}\bm{F}(\bm{x},\bm{d})\leq -\delta g(\bm{x})(1-u(\bm{x}))\leq -\alpha \delta (1-u(\bm{x})), \forall \bm{x}\in \Omega\setminus \mathcal{X}_{\infty},$$
we obtain that $$u(\bm{y})-1\geq e^{-\delta \alpha \tau}(u(\bm{\psi}_{\bm{y}}^{\bm{d}'}(\tau))-1).$$ Consequently, we conclude that $$\lim_{\tau\rightarrow \infty}u(\bm{\psi}_{\bm{y}}^{\bm{d}'}(\tau))=-\infty,$$ contradicting the fact that $u(\bm{\psi}_{\bm{y}}^{\bm{d}'}(t))\geq 0$ holds for $t\geq 0$.
Therefore, every possible trajectory initialized in the set $\Omega$ will enter the set $\mathcal{X}_{\infty}$ in finite time and consequently will asymptotically approach the equilibrium state $\bm{0}$.

Therefore, $\Omega $ is a robust domain of attraction.
\end{proof}

When $u(\bm{x})$ in \eqref{inequa2} is a polynomial in $\mathbb{R}[\bm{x}]$, based on the sum-of-squares decomposition for multivariate polynomials, \eqref{inequa2} is recast as the following semi-definite program:
\begin{algorithm}
\begin{algorithmic}
\STATE
\begin{equation}
\label{sos}
\begin{split}
&p^*=\inf \bm{c}\cdot \bm{l}\\
&\texttt{s.t.}\\
&\left\{
\begin{array}{lll}
&-\frac{\partial u(\bm{x})}{\partial \bm{x}}\bm{f}(\bm{x},\bm{d})-\delta q(\bm{x}) (1-u(\bm{x}))=s_0(\bm{x},\bm{d})+s_1(\bm{x},\bm{d})\cdot h(\bm{x})\\
&~~~~~~~~~~~~~~~~~~~~~~~~~+\sum_{i=1}^{m_D}s_{2,i}(\bm{x},\bm{d})\cdot(1-h_i^D(\bm{d}))+s_{3}(\bm{x},\bm{d})\cdot(q(\bm{x})-\alpha),\\
&u(\bm{x})-1=s_{4,j}(\bm{x})+s_{5,j}(\bm{x})\cdot h(\bm{x})+s_{6,j}(\bm{x})\cdot(h_j(\bm{x})-1),\\
&u(\bm{x})-h_j(\bm{x})=s_{7,j}(\bm{x})+s_{8,j}(\bm{x})\cdot h(\bm{x})\\
&~~~~~~~~~~~~~~~~~~~~~~~~~~~~+s_{9,j}(\bm{x})\cdot(q(\bm{x})-\alpha)+\sum_{l=1}^{n_{\mathcal{X}}}s_{10,l,j}(\bm{x})\cdot(1-h_l(\bm{x})),\\
&j=1,\ldots,n_{\mathcal{X}},\\
\end{array}
\right.
\end{split}
\end{equation}
where $\bm{c}\cdot \bm{l}=\int_{B(\bm{0},R)\setminus \mathcal{X}_{\infty}}u(\bm{x})d\mu(\bm{x})$, $\bm{l}$ is the vector of the moments of the Lebesgue measure $\mu(\bm{x})$ over $B(\bm{0},R)\setminus \mathcal{X}_{\infty}$ indexed in the same basis in which the polynomial $u(\bm{x})$ with coefficients $\bm{c}$ is expressed, $B(\bm{0},R)=\{\bm{x}\mid h(\bm{x})\geq 0\}$ and $\mathcal{X}_{\infty}=\{\bm{x}\mid q(\bm{x})< \alpha\}$. $\delta$ is a user-defined positive value. The minimum is over polynomial $u(\bm{x})\in R[\bm{x}]$ and sum-of-squares polynomials  $s_i(\bm{x},\bm{d})$, $i=0,1$, $s_{2,i}(\bm{x},\bm{d})$, $i=1,\ldots,m_D$, $s_{3}(\bm{x},\bm{d})$, $s_{i,j}(\bm{x})$, $s_{10,l,j}(\bm{x})$, $i=4,\ldots,9$, $j,l=1,\ldots,n_{\mathcal{X}}$, of appropriate degree. Since the constraints that polynomials are sum-of-squares can be written explicitly as linear matrix inequalities, and the objective is linear in the coefficients of polynomial $u(\bm{x})$, problem \eqref{sos} is a semi-definite program, which falls within the convex programming framework and can be solved via interior-point methods in polynomial time (e.g., \cite{vandenberghe1996}). 
\end{algorithmic}
\end{algorithm}

According to Theorem \ref{inner}, $\mathcal{R}_{u}=\{\bm{x}\in B(\bm{0},R)\mid u(\bm{x})<1\}$ is a robust domain of attraction, where $u(\bm{x})\in \mathbb{R}[\bm{x}]$ is the solution to \eqref{sos}.

\begin{remark}
$\{\bm{x}\in B(\bm{0},R)\mid u(\bm{x})<1\}$ is still a robust domain of attraction if the origin $\bm{0}$ is asymptotically stable for \eqref{systems} rather than uniformly locally exponentially stable, where $u(\bm{x})$ is the solution to \eqref{sos}. The proof of Theorem \ref{inner} does not require that the equilibrium state $\bm{0}$ is uniformly locally exponentially stable.
\end{remark}

\subsection{Analysis of \eqref{sos}}
\label{AOE}
In this subsection we exploit some properties pertinent to \eqref{sos} and show that there exist solutions to \eqref{sos} under appropriate assumptions. Moreover, we show that there exists a sequence of solutions to \eqref{sos} such that their strict one sub-level sets approximate the interior of the maximal robust domain of attraction in measure.

\begin{assumption}
\label{ass5}
One of the polynomials defining the set $D$ is equal to $h_i^D:=\|\bm{d}\|^2-R_D$ for some constant $R_D\geq 0$.
\end{assumption}
As argued in \cite{korda2014}, Assumption \ref{ass5} is without loss of generality since the set $D$ is compact, the redundant constraint $\|\bm{d}\|^2-R_D-1\leq 0$ can always be incorporated into the description of $D$ for sufficiently large $R_D$.

We in the following show that given an arbitrary $\epsilon>0$, there exists a polynomial solution $p(\bm{x})$ to \eqref{sos} such that $|p(\bm{x})-v(\bm{x})|<\epsilon$ holds for $\bm{x}\in B(\bm{0},R)$. Before this, we introduce a lemma from \cite{lin1996}.

\begin{lemma}[Lemma B.5 in \cite{lin1996}]
\label{smooth}
Let $B(\bm{0},R)$ be a compact subset in $\mathbb{R}^n$ and $u(\bm{x}): B(\bm{0},R)\rightarrow \mathbb{R}$ be a locally Lipschitz function. If there exists a continuous function $\rho: B(\bm{0},R) \rightarrow \mathbb{R}$ such that for each $\bm{d}\in D$,
\[\mathcal{L} u(\bm{x})\leq \rho(\bm{x}), \text{a.e.} \ \bm{x} \in B(\bm{0},R),\]
where $\mathcal{L} u(\bm{x})=\nabla_{\bm{x}} u(\bm{x}) \cdot \bm{f}(\bm{x},\bm{d})=\frac{\partial u}{\partial \bm{x}}\bm{f}(\bm{x},\bm{d})$
  (recall that $\nabla_{\bm{x}} u(\bm{x})$ is defined a.e., since $u$ is locally Lipschitz.),
 then for any given $\epsilon>0$, there exists some smooth function $\psi(\bm{x})$ defined on $B(\bm{0},R)$ such that
\[\sup_{\bm{x}\in B(\bm{0},R)}|\psi(\bm{x})-u(\bm{x})|<\epsilon\text{~and~}\sup_{\bm{d}\in  D}\mathcal{L} \psi(\bm{x})\leq \rho(\bm{x})+\epsilon\]
over $\bm{x}\in B(\bm{0},R)$.
\end{lemma}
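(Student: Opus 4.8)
The natural tool is \emph{mollification}. Fix a standard mollifier $\rho\in C_c^\infty(\mathbb{R}^n)$ with $\rho\geq 0$, $\mathrm{supp}\,\rho\subseteq B(\bm{0},1)$, $\int\rho=1$, put $\rho_\delta(\bm{y})=\delta^{-n}\rho(\bm{y}/\delta)$, and set $\psi:=u*\rho_\delta$. Since $u$ is locally Lipschitz on the compact set $B(\bm{0},R)$ it is in fact Lipschitz there, with some constant $L_u$, and it extends (by McShane's formula $\tilde u(\bm{x})=\inf_{\bm{z}\in B(\bm{0},R)}\{u(\bm{z})+L_u\|\bm{x}-\bm{z}\|\}$) to a Lipschitz function on a slightly larger ball; hence $\psi$ is well defined and $C^\infty$ on $B(\bm{0},R)$ as soon as $\delta$ is smaller than the margin. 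The uniform-approximation claim is then immediate: for $\bm{x}\in B(\bm{0},R)$,
\begin{equation*}
|\psi(\bm{x})-u(\bm{x})|=\Big|\int \big(u(\bm{x}-\bm{y})-u(\bm{x})\big)\rho_\delta(\bm{y})\,d\bm{y}\Big|\leq L_u\delta,
\end{equation*}
so $\sup_{B(\bm{0},R)}|\psi-u|<\epsilon$ once $\delta$ is small.

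For the Lie-derivative bound I would use $\nabla\psi=(\nabla u)*\rho_\delta$ (legitimate because $u\in W^{1,\infty}_{\mathrm{loc}}$ by Rademacher, so $\nabla u\in L^\infty$ is defined a.e. and convolution commutes with differentiation), and then, for each fixed $\bm{d}\in D$, split
\begin{equation*}
\mathcal{L}\psi(\bm{x})=\int \nabla u(\bm{x}-\bm{y})\cdot\bm{f}(\bm{x},\bm{d})\,\rho_\delta(\bm{y})\,d\bm{y}=I_1+I_2,
\end{equation*}
with $I_1=\int \nabla u(\bm{x}-\bm{y})\cdot\bm{f}(\bm{x}-\bm{y},\bm{d})\,\rho_\delta(\bm{y})\,d\bm{y}$ and $I_2=\int \nabla u(\bm{x}-\bm{y})\cdot\big(\bm{f}(\bm{x},\bm{d})-\bm{f}(\bm{x}-\bm{y},\bm{d})\big)\rho_\delta(\bm{y})\,d\bm{y}$. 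For $I_1$: the integrand equals $\mathcal{L}u(\bm{x}-\bm{y})$, which is $\leq\beta(\bm{x}-\bm{y})$ for a.e.\ $\bm{y}$ by hypothesis, and since $\rho_\delta\geq 0$ this gives $I_1\leq\int\beta(\bm{x}-\bm{y})\rho_\delta(\bm{y})\,d\bm{y}\leq\beta(\bm{x})+\omega_\beta(\delta)$, where $\omega_\beta$ is a modulus of continuity of $\beta$ on a neighborhood of $B(\bm{0},R)$ (finite and $\to 0$ as $\delta\to 0$ by uniform continuity on compacts). For $I_2$: because $\bm{f}\in\mathbb{R}[\bm{x},\bm{d}]$ is Lipschitz in $\bm{x}$ on the relevant compact set with a constant $L_f$ that is \emph{uniform over the compact set $D$}, and $\|\nabla u\|\leq L_u$ a.e., we get $|I_2|\leq L_u L_f\,\delta$. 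Combining, $\sup_{\bm{d}\in D}\mathcal{L}\psi(\bm{x})\leq\beta(\bm{x})+\omega_\beta(\delta)+L_u L_f\delta$ for \emph{every} $\bm{x}\in B(\bm{0},R)$; picking $\delta$ small enough that $\omega_\beta(\delta)+L_u L_f\delta<\epsilon$ (and $L_u\delta<\epsilon$ for the first claim) completes the proof. The point that both estimates are uniform in $\bm{d}$ is exactly what yields $\sup_{\bm{d}\in D}\mathcal{L}\psi(\bm{x})\leq\beta(\bm{x})+\epsilon$.

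\textbf{Main obstacle.} The only delicate issue is the behaviour near $\partial B(\bm{0},R)$: the convolution at a point of $\partial B(\bm{0},R)$ averages $\nabla u$ and $\beta$ over points \emph{outside} $B(\bm{0},R)$, where the hypothesis $\mathcal{L}u\leq\beta$ is not assumed. I would deal with this by first passing to the extension described above (extend $u$ by McShane and $\beta$ by Tietze to a slightly larger ball) and then observing that in the intended application no boundary effect arises at all: the function being approximated is the viscosity solution $v$ of Theorem~\ref{hj}, which is globally Lipschitz on $\mathbb{R}^n$ and satisfies the associated differential inequality globally, so $\delta$ may simply be taken small with the estimates above valid on $B(\bm{0},R)$ verbatim. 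Everything else — differentiating under the integral sign, the uniform-continuity bound for $\beta$, and the uniform Lipschitz constant of $\bm{f}$ on compacts — is routine.
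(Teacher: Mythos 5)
The paper offers no proof of Lemma \ref{smooth} at all: it is imported by citation from \cite{lin1996}, so there is no in-paper argument to compare against. Your mollification scheme --- convolve with $\rho_\delta$, identify $\nabla\psi=(\nabla u)*\rho_\delta$ via Rademacher, split $\mathcal{L}\psi$ into the averaged-hypothesis term $I_1$ and the commutator term $I_2$, and absorb $\omega_\beta(\delta)+L_uL_f\delta$ into $\epsilon$ uniformly over the compact set $D$ --- is exactly the mechanism behind the cited result, and each of those individual estimates is sound (including the care you take to fix $\bm{d}$ before invoking the a.e.\ hypothesis, since the null set may depend on $\bm{d}$).

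The one genuine gap is the boundary step, which you flag but do not actually close. The McShane extension preserves the Lipschitz constant of $u$ but carries no information whatsoever about $\mathcal{L}\tilde u$ outside $B(\bm{0},R)$: for $\bm{x}$ within distance $\delta$ of $\partial B(\bm{0},R)$ the integrand of $I_1$ is evaluated at points $\bm{x}-\bm{y}\notin B(\bm{0},R)$ where the hypothesis $\mathcal{L}u\le\beta$ is simply not available, so the bound $I_1\le\beta(\bm{x})+\omega_\beta(\delta)$ is unjustified there and no choice of $\delta$ repairs it; extending $\beta$ by Tietze does not help because the differential inequality, not $\beta$, is what is missing. Your fallback --- that in the intended application the inequality holds globally --- is an observation about Theorem \ref{existence}, not a proof of the lemma as stated. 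Two honest ways to close the gap: (i) restate the lemma with the hypothesis holding a.e.\ on a neighborhood of the compact set (which is what the application can supply, since the differential inequality for $v$ holds on all of $\mathbb{R}^n$ for the globally Lipschitz extension $\bm{F}$ of Theorem \ref{kri}, and $\bm{F}=\bm{f}$ on $B(\bm{0},R)$), after which your computation goes through verbatim; or (ii) reproduce the actual argument of \cite{lin1996}, which works on an open set using a locally finite partition of unity with mollification radii shrinking toward the boundary, so that every convolution ball stays inside the region where the hypothesis holds. Either repair leaves the rest of your proof intact. A separate, minor mismatch with how the paper later uses the lemma: the conclusion here is only a \emph{smooth} $\psi$, whereas Theorem \ref{existence} immediately asserts a \emph{polynomial} $p'$; bridging that requires an additional Weierstrass-type approximation of $\psi$ in $C^1$ norm on the compact set, which neither you nor the paper spells out.
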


\begin{theorem}
\label{existence}
Under Assumption \ref{ass5}, if $\delta$ is a positive value larger than or equal to one in \eqref{sos}, then for any $\epsilon>0$ there exists a polynomial solution $p(\bm{x})$ to \eqref{sos} such that $$0\leq p(\bm{x})-v(\bm{x})<\epsilon, \forall \bm{x}\in B(\bm{0},R).$$
\end{theorem}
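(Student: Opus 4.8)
The plan is to construct the desired polynomial $p(\bm{x})$ in three stages: first smooth the viscosity solution $v$ into a $C^1$ (indeed smooth) strict super-solution of the relevant inequalities, then polynomially approximate that smooth function, and finally verify that the resulting polynomial satisfies the sum-of-squares certificate structure in \eqref{sos} by invoking Putinar's Positivstellensatz. The starting point is Theorem \ref{hj} together with the second item of Lemma \ref{conti}: since $\delta\geq \frac{L_f}{\alpha}$, the function $v$ in \eqref{v} is Lipschitz continuous on $\mathbb{R}^n$, bounded, and is the viscosity solution of the generalized Zubov equation \eqref{HJB2}. In particular, $v$ satisfies $-\frac{\partial v}{\partial \bm{x}}\bm{F}(\bm{x},\bm{d})-\delta g(\bm{x})(1-v(\bm{x}))\geq 0$ a.e.\ and $v(\bm{x})+e^{-\delta h'(\bm{x})}-1\geq 0$ pointwise. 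On the compact set $B(\bm{0},R)\setminus\mathcal{X}_\infty$ we have $\bm{F}=\bm{f}$ and $g=q$, so $v$ satisfies the inequalities in \eqref{inequa2} there (up to the reformulation of $h'$ in terms of the $(1-h_j)^\delta$ factors, and noting $v\le 1$ throughout $B(\bm{0},R)$ with $v\equiv 1$ outside $\mathcal{X}$).

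Next I would apply Lemma \ref{smooth} to $u=v$ on $B(\bm{0},R)$ with an appropriate bounding function $\beta$ read off from the first inequality of \eqref{inequa2}: since $\mathcal{L}v(\bm{x})=\frac{\partial v}{\partial \bm{x}}\bm{f}(\bm{x},\bm{d})\le -\delta q(\bm{x})(1-v(\bm{x}))=:\beta(\bm{x})$ a.e.\ on $B(\bm{0},R)\setminus\mathcal{X}_\infty$, Lemma \ref{smooth} yields a smooth $\psi$ with $\sup_{B(\bm{0},R)}|\psi-v|<\epsilon'$ and $\sup_{\bm{d}\in D}\mathcal{L}\psi\le\beta+\epsilon'$. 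Adding a small positive constant $\eta=\eta(\epsilon')$ to $\psi$ and shrinking appropriately, one upgrades the a.e.\ inequalities to \emph{strict} inequalities: the dissipation inequality becomes $-\frac{\partial \psi}{\partial \bm{x}}\bm{f}(\bm{x},\bm{d})-\delta q(\bm{x})(1-\psi(\bm{x}))>0$ on $B(\bm{0},R)\setminus\mathcal{X}_\infty$ (using that $q$ is bounded below by $\alpha$ there and that the $\epsilon'$-error is uniform), while $\psi-1>0$ on $B(\bm{0},R)\setminus\mathcal{X}$ and $\psi+(1-h_j)^\delta-1>0$ on $\overline{\mathcal{X}}\setminus\mathcal{X}_\infty$, each holding with a strictly positive margin on the respective compact set; the separation $\partial\mathcal{X}\cap\partial B(\bm{0},R)=\emptyset$ is what makes the third of these sets compact and keeps the margins bounded away from zero. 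Then a $C^1$-close (in fact uniform on all needed derivatives over the compact set) polynomial $p$ obtained by, e.g., a Bernstein/Weierstrass approximation of $\psi$ and its gradient preserves all the strict inequalities, while keeping $0\le p(\bm{x})-v(\bm{x})<\epsilon$ after one final constant shift to ensure $p\ge v$.

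Finally, with $p$ a polynomial satisfying each inequality in \eqref{inequa2} strictly on the respective compact basic semialgebraic set, I would invoke Putinar's Positivstellensatz: Assumption \ref{ass5} (the explicit ball constraint $\|\bm{d}\|_2^2-R_D$) makes the quadratic module associated with $D$ Archimedean, and the compact sets $B(\bm{0},R)\setminus\mathcal{X}_\infty$, $B(\bm{0},R)\setminus\mathcal{X}$, $\overline{\mathcal{X}}\setminus\mathcal{X}_\infty$ are each presented by the polynomials $h$, $q-\alpha$, $h_j-1$, $1-h_l$ appearing in \eqref{sos}; together with $h(\bm{x})=R-\|\bm{x}\|^2$ providing the Archimedean condition in $\bm{x}$ (this is precisely the third role of $B(\bm{0},R)$ noted after \eqref{B}). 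Hence each strict polynomial inequality admits exactly the SOS representation prescribed by the equality constraints of \eqref{sos}, so $p$ together with suitable SOS multipliers is feasible for \eqref{sos}; since the feasible set is nonempty, \eqref{sos} has a solution, and by construction $0\le p(\bm{x})-v(\bm{x})<\epsilon$ on $B(\bm{0},R)$. The main obstacle is the second stage: one must choose the smoothing error $\epsilon'$, the positive shift $\eta$, and the polynomial approximation accuracy in a mutually consistent way so that \emph{all three} families of inequalities become strictly positive simultaneously with the degree-of-slack controlled uniformly; in particular controlling the gradient term $\frac{\partial \psi}{\partial\bm{x}}\bm{f}$ (not just $\psi$ itself) under polynomial approximation, and handling the non-polynomial $h'$ via the reformulated factors $(1-h_j)^\delta$ with $\delta$ an odd integer, requires care, but these are exactly the features built into Lemma \ref{smooth} and into the design of $g$, $h'$, and $B(\bm{0},R)$ in the preceding subsections.
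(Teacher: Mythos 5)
Your proposal is correct and follows essentially the same route as the paper's proof: create uniform positive slack by shifting $v$ by a small constant, smooth via Lemma \ref{smooth}, pass to a polynomial preserving the strict inequalities on the relevant compact semialgebraic sets, and conclude feasibility in \eqref{sos} via Putinar's Positivstellensatz (the paper merely performs the constant shift before the smoothing step rather than after). If anything, your explicit Weierstrass/Bernstein stage is slightly more careful than the paper, which applies Lemma \ref{smooth} — stated only for smooth functions — and immediately asserts the existence of a polynomial $p'$.
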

\begin{proof}
When $\delta$ is a positive value larger than or equal to one, we have that $u(\bm{x})+(1-h_j(\bm{x}))-1 \geq u(\bm{x})+(1-h_j(\bm{x}))^{\delta}-1\geq 0$ for $\bm{x}\in \overline{\mathcal{X}\setminus \mathcal{X}_{\infty}}$, i.e., $u(\bm{x})-h_j(\bm{x})\geq u(\bm{x})+(1-h_j(\bm{x}))^{\delta}-1\geq 0$ for $\bm{x}\in \overline{\mathcal{X}\setminus \mathcal{X}_{\infty}}$, $j=1,\ldots, n_{\mathcal{X}}$. Also, since $v(\bm{x})$ in \eqref{v} satisfies \eqref{HJB2}, we have 
$v(\bm{x})$ satisfies \eqref{inequa2}. Consequently, for any $ \epsilon_1>0$, $v'(\bm{x})=v(\bm{x})+\epsilon_1$ satisfies the following constraints: 
\begin{equation*}
\label{upper111}
\left\{
\begin{array}{lll}
&-\frac{\partial v'(\bm{x})}{\partial \bm{x}}\bm{f}(\bm{x},\bm{d})-\delta q(\bm{x}) (1-v'(\bm{x}))\geq \delta \alpha \epsilon_1, \forall \bm{x}\in B(\bm{0},R)\setminus \mathcal{X}_{\infty}, \forall \bm{d}\in D,\\
&v'(\bm{x})-h_j(\bm{x})\geq \epsilon_1, j=1,\ldots,n_{\mathcal{X}},\forall \bm{x}\in \overline{\mathcal{X}\setminus \mathcal{X}_{\infty}}, \\
&v'(\bm{x})- 1\geq \epsilon_1, \forall \bm{x}\in B(\bm{0},R)\setminus\mathcal{X}.
\end{array}
\right.
\end{equation*}

According to Lemma \ref{conti}, $v(\bm{x})$ is locally Lipschitz continuous over $\bm{x}\in B(\bm{0},R)$. Therefore, $v'(\bm{x})$ is locally Lipschitz continuous over $\bm{x}\in B(\bm{0},R)$ as well. According to Lemma \ref{smooth}, we have that for any $\epsilon_2<\frac{\epsilon_1 \alpha \delta}{2}$ with $\epsilon_2>0$, there exists a continuous function $p'(\bm{x})$ such that
\[|p'(\bm{x})-v'(\bm{x})|< \epsilon_2, \forall \bm{x}\in B(\bm{0},R)\setminus \mathcal{X}_{\infty}\]
and
\[\sup_{\bm{d}\in D}\frac{\partial p'(\bm{x})}{\partial \bm{x}}\bm{f}(\bm{x},\bm{d})\leq -\delta q(\bm{x}) (1-v'(\bm{x}))-\delta \alpha \epsilon_1+\epsilon_2, \forall \bm{x}\in B(\bm{0},R)\setminus \mathcal{X}_{\infty}.\]

Since $B(\bm{0},R)$ is compact, there exists a polynomial $p(\bm{x})$ of sufficiently high degree such that 
\[\sup_{B(\bm{0},R)}|p(\bm{x})-2\epsilon_2-p'(\bm{x})|<\epsilon_2 \text{~and}\]
\[\sup_{B(\bm{0},R)\times D}|\frac{\partial p'(\bm{x})}{\partial \bm{x}}\bm{f}(\bm{x},\bm{d})-\frac{\partial p(\bm{x})}{\partial \bm{x}}\bm{f}(\bm{x},\bm{d})|<\epsilon_2.
\]
Then we have
$$0<p(\bm{x})-v(\bm{x})<\epsilon_1+4\epsilon_2,\forall \bm{x}\in B(\bm{0},R)\setminus \mathcal{X}_{\infty}$$ and $$\frac{\partial p(\bm{x})}{\partial \bm{x}}\bm{f}(\bm{x},\bm{d})<-\delta q(\bm{x}) (1-p(\bm{x})), \forall \bm{x}\in B(\bm{0},R)\setminus \mathcal{X}_{\infty},\forall \bm{d}\in D.$$ Thus, we have
\begin{equation*}
\label{upper112}
\left\{
\begin{array}{lll}
&-\frac{\partial p(\bm{x})}{\partial \bm{x}}\bm{f}(\bm{x},\bm{d})-\delta q(\bm{x}) (1-p(\bm{x}))>0, \forall \bm{x}\in B(\bm{0},R)\setminus \mathcal{X}_{\infty}, \forall \bm{d}\in D,\\
&p(\bm{x})-h_j(\bm{x})>0,j=1,\ldots,n_{\mathcal{X}},\forall \bm{x}\in \overline{\mathcal{X}\setminus \mathcal{X}_{\infty}},\\
&p(\bm{x})- 1>0, \forall \bm{x}\in B(\bm{0},R)\setminus\mathcal{X}.
\end{array}
\right.
\end{equation*}

The polynomial $p(\bm{x})$ is therefore strictly feasible in \eqref{sos}, which follows from the classical Putinar's Positivstellensatz \cite{putinar93}. Since $\epsilon_1$ is arbitrary and $\epsilon_2<\epsilon_1\alpha \delta$, the conclusion in Theorem \ref{existence} holds.
\end{proof}

Given $(\epsilon_k)_{k=1}^{\infty}$ with $\epsilon_k>0$ and $\lim_{k\rightarrow \infty}\epsilon_k=0$ with $k\in \mathbb{N}$, according to Theorem \ref{existence}, there exists a sequence
\begin{equation}
\label{p}
\big(p_k(\bm{x})\big)_{k=1}^{\infty}
\end{equation}
 satisfying \eqref{sos} such that $0\leq p_k(\bm{x})-v(\bm{x})< \epsilon_k.$
Denote
\begin{equation}
\label{rk}
\mathcal{R}_{k,0}:=\{\bm{x}\in B(\bm{0},R)\mid p_k(\bm{x})< 1\},
\end{equation}
 we next show that $\mathcal{R}_{k,0}$ inner-approximates the interior of the maximal robust domain of attraction in measure with $k$ approaching infinity.

\begin{theorem}
\label{convergence}
Let $\big(p_{k}(\bm{x})\big)_{k=1}^{\infty}$ and $\mathcal{R}_{k,0}$ be the sequence in \eqref{p} and the set in \eqref{rk} respectively. Then the set $\mathcal{R}_{k,0}$ converges to the interior of the maximal robust domain of attraction from inside in measure with $k$ tending towards infinity, i.e.,
\[ \lim_{k\rightarrow \infty} \mu(\mathcal{R}^{\circ}\setminus\mathcal{R}_{k,0})=0.\]
 \end{theorem}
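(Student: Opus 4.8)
The plan is to sandwich $\mathcal{R}_{k,0}$ between two explicit sub-level sets of $v$ and then run a routine continuity-of-measure argument. First I would recall from Theorem~\ref{hj} that $v$ is the (continuous) viscosity solution of \eqref{HJB2} with $\mathcal{R}_0=\{\bm{x}\in\mathbb{R}^n\mid v(\bm{x})<1\}$, and from Lemma~\ref{relation} that $\mathcal{R}_0=\mathcal{R}^{\circ}$; moreover $\mathcal{R}_0\subseteq\mathcal{X}\subset B(\bm{0},R)$, so $\mathcal{R}^{\circ}=\{\bm{x}\in B(\bm{0},R)\mid v(\bm{x})<1\}$. By the construction of $\big(p_k(\bm{x})\big)_{k=1}^{\infty}$ in \eqref{p} we have $0\le p_k(\bm{x})-v(\bm{x})<\epsilon_k$ for all $\bm{x}\in B(\bm{0},R)$. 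The upper bound gives that $v(\bm{x})<1-\epsilon_k$ forces $p_k(\bm{x})<1$, i.e.\ $\{\bm{x}\in B(\bm{0},R)\mid v(\bm{x})<1-\epsilon_k\}\subseteq\mathcal{R}_{k,0}$; the lower bound gives $p_k(\bm{x})<1\Rightarrow v(\bm{x})<1$, i.e.\ $\mathcal{R}_{k,0}\subseteq\mathcal{R}_0=\mathcal{R}^{\circ}$. Combining these two inclusions yields
\[\mathcal{R}^{\circ}\setminus\mathcal{R}_{k,0}=\mathcal{R}_0\setminus\mathcal{R}_{k,0}\subseteq\{\bm{x}\in B(\bm{0},R)\mid 1-\epsilon_k\le v(\bm{x})<1\}.\]

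It then suffices to show that the measure of the right-hand set tends to $0$ as $k\to\infty$. To this end I would introduce the monotone family $C_m:=\{\bm{x}\in B(\bm{0},R)\mid v(\bm{x})\le 1-\tfrac1m\}$, $m\in\mathbb{N}$: each $C_m$ is closed (hence Lebesgue measurable) by continuity of $v$, the $C_m$ increase with $m$, and $\bigcup_{m}C_m=\{\bm{x}\in B(\bm{0},R)\mid v(\bm{x})<1\}=\mathcal{R}_0$. Since $B(\bm{0},R)$ is bounded we have $\mu(\mathcal{R}_0)<\infty$, so continuity of measure from below gives $\mu(\mathcal{R}_0\setminus C_m)\to 0$ as $m\to\infty$. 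Given $\eta>0$, I would first fix $m$ with $\mu(\mathcal{R}_0\setminus C_m)<\eta$, and then, using $\lim_{k\to\infty}\epsilon_k=0$, pick $K$ with $\epsilon_k<\tfrac1m$ for all $k\ge K$. For such $k$ we have $C_m\subseteq\{\bm{x}\in B(\bm{0},R)\mid v(\bm{x})<1-\epsilon_k\}\subseteq\mathcal{R}_{k,0}$, whence $\mathcal{R}^{\circ}\setminus\mathcal{R}_{k,0}\subseteq\mathcal{R}_0\setminus C_m$ and therefore $\mu(\mathcal{R}^{\circ}\setminus\mathcal{R}_{k,0})<\eta$. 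As $\eta>0$ is arbitrary, this gives $\lim_{k\to\infty}\mu(\mathcal{R}^{\circ}\setminus\mathcal{R}_{k,0})=0$.

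I do not expect a serious obstacle here: once the sandwich inclusion is established, the remainder is bookkeeping. The one point that genuinely needs care is that $(\epsilon_k)_{k=1}^{\infty}$ is only assumed to converge to $0$ and need not be monotone, so one cannot argue directly with the family $\{\bm{x}\mid v(\bm{x})<1-\epsilon_k\}$; the auxiliary monotone family $(C_m)$ is inserted precisely to decouple the continuity-of-measure step from the particular sequence $(\epsilon_k)$. A minor secondary point is to check that every set in sight is Lebesgue measurable and of finite measure, which follows from the continuity of $v$ (Theorem~\ref{hj}) together with $\mathcal{R}_0\subseteq B(\bm{0},R)$ and $B(\bm{0},R)$ bounded.
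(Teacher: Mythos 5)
Your proof is correct and follows essentially the same route as the paper: the paper obtains $\lim_{k\rightarrow\infty}\mu(\mathcal{R}_0\setminus\mathcal{R}_{k,0})=0$ by citing the proof of Theorem 3 in \cite{lasserre2015} --- whose content is exactly your sandwich $\{\bm{x}\in B(\bm{0},R)\mid v(\bm{x})<1-\epsilon_k\}\subseteq\mathcal{R}_{k,0}\subseteq\{\bm{x}\in B(\bm{0},R)\mid v(\bm{x})<1\}$ combined with continuity of measure from below --- and then invokes Lemma \ref{relation} to replace $\mathcal{R}_0$ by $\mathcal{R}^{\circ}$, just as you do. Your write-up merely makes the cited step self-contained, and your handling of the possible non-monotonicity of $(\epsilon_k)_{k=1}^{\infty}$ via the auxiliary family $(C_m)$ is sound.
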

\begin{proof}
Following the proof of Theorem 3 in \cite{lasserre2015}, we have that $\lim_{k\rightarrow \infty} \mu(\mathcal{R}_0\setminus\mathcal{R}_{k,0})=0$, where $\mathcal{R}_0=\{\bm{x}\in \mathbb{R}^n \mid v(\bm{x})<0\}$. According to Lemma \ref{relation}, we have that $\lim_{k\rightarrow \infty} \mu(\mathcal{R}^{\circ}\setminus\mathcal{R}_{k,0})=0.$
\end{proof}

\section{Examples and Discussions}
\label{ex}
In this section we illustrate our approach with five examples. All computations were performed on an i7-P51s 2.6GHz CPU with 4GB RAM running Windows 10. For the numerical implementation, we formulate the sum-of-squares problem \eqref{sos} using the Matlab package YALMIP \cite{lofberg2004} and employ Mosek of the academic version \cite{mosek2015mosek} as a semi-definite programming solver. The parameters that control the performance of our method are given in Table \ref{table}.

\oomit{In what follows, we will use $u_k\in \mathbb{R}_k[\bm{x}]$ to denote the concrete polynomial with degree $k$ used in \eqref{sos}, and 
$\mathcal{R}_{k,0}$ to denote the corresponding domain of attraction.}

\begin{table}[h!]
\begin{center}
\begin{tabular}{|l|r|r|r|r|r|r|r|}
  \hline
   Ex.&$k$&$\delta$&$\alpha$&$R$&$d_{s}$&$d_{s'}$&$T$\\\hline
   4.1&8&1&$10^{-4}$&$1.01$&10&8&1.18\\\hline
   4.1&10&1&$10^{-4}$&1.01&12&10 &1.20\\\hline
   4.1&16&1&$10^{-4}$&1.01&18&16 &1.39\\\hline
   4.1&24&1&$10^{-4}$&1.01&26&24&1.50\\\hline
   4.2&8&1&$10^{-2}$&1.211&10&8 & 0.68\\\hline
   4.2&16&1&$10^{-2}$&1.211&18&16&  8.80\\\hline
   4.3&4&1&$10^{-2}$&1.01&6&4& 1.40\\\hline
   4.3&8&1&$10^{-2}$&1.01&10&8& 0.68\\\hline
   4.3&12&1&$10^{-2}$&1.01&14&12 & 1.46\\\hline
   4.4&4&1&$10^{-2}$&1.01&6&4& 0.68 \\\hline
   4.4&6&1&$10^{-2}$&1.01&8&6 &0.82 \\\hline
   4.4&10&1&$10^{-2}$&1.01&12&10 &3.12\\\hline
   4.5&3&1&$10^{-2}$&1.01&4&2& 20.15 \\\hline
   4.5&4&1&$10^{-2}$&1.01&4&4 &35.56 \\\hline
   4.5&5&1&$10^{-2}$&1.01&6&4 &1257.20\\\hline
   \end{tabular}
\end{center}
\caption{\textit{Parameters that control the performance of the semi-definite program \eqref{sos} on the examples presented in this section. $\delta, \alpha$ and $R$ are the constant values in \eqref{sos}, where $\mathcal{X}_{\infty}=\{\bm{x}\mid q(\bm{x})<\alpha\}$ with $q(\bm{x})=\|\bm{x}\|^2$ and $B(\bm{0},R)=\{\bm{x}\mid R-\|\bm{x}\|^2\geq 0\}$. $k$, $d_{s}$ and $d_{s'}$ denote the degree of the polynomials $u_k, \{s_{0},s_1,s_{2,i},i=1,\ldots,m_D,s_3\}$ and $\{s_{4,j},s_{5,j},s_{6,j},s_{7,j},s_{8,j},s_{9,j},s_{10,l,j},j,l=1,\ldots,n_{\mathcal{X}}\}$ in \eqref{sos}, respectively; $T$: computation times (seconds).} }
\label{table}
\end{table}

\begin{example}
\label{ucd}
Consider a one-dimensional system adapted from \cite{henrion2014}, which is given by
\[
\dot{x}=x(x-(d+0.5))(x+0.5)
\]
with $\mathcal{X}=\{x\in \mathbb{R}\mid x^2<1\}$ and $D=\{d\in \mathbb{R}\mid 100d^2-1\leq 0\}$.

The origin is a locally uniformly exponentially stable state. The maximal robust domain of attraction in this case is determined analytically as $\mathcal{R}=(-0.5,0.4)$. \oomit{Actually, $q(\bm{x})=x^2$ is a local Lyapunov function for this system.} Theorem \ref{inner} indicates that the strict one sub-level set  of the approximating polynomial $u$ computed by solving \eqref{sos} is a robust domain of attraction. Plots of the computed robust domains of attraction for approximating polynomials of degree $k=8,10,16,24$ are presented in Fig. \ref{fig-one-00}. The visualized results in Fig. \ref{fig-one-00} further confirm that the strict one sub-level set of the approximating polynomial $u(\bm{x})$ returned by solving \eqref{sos} is indeed a robust domain of attraction. The relative volume errors, which are computed approximately by Monte Carlo integration, are also reported in Table \ref{table0}. From Fig. \ref{fig-one-00} and Table \ref{table0}, we observe fairly good tightness of the estimates since $k=10$. 

\begin{table}[h!]
\begin{center}
\begin{tabular}{|l|r|r|r|r|}
  \hline
      $k$&$8$&$10$&$16$&$24$\\\hline
       error &13.2\%& 4.48\%& 3.41\%& 2.85\%\\\hline
   \end{tabular}
\end{center}
\caption{Relative error estimations of computed robust domains of attraction to the maximal robust domain of attraction as a function of the approximating polynomial degree for Example \ref{ucd}. }
\label{table0}
\end{table}

\begin{figure*}[h]
\centering
\setlength\fboxsep{0pt}
\setlength\fboxrule{0.15pt}
\begin{tabular}{cccc}
\fbox{\includegraphics[width=2.2in,height=1.10in]{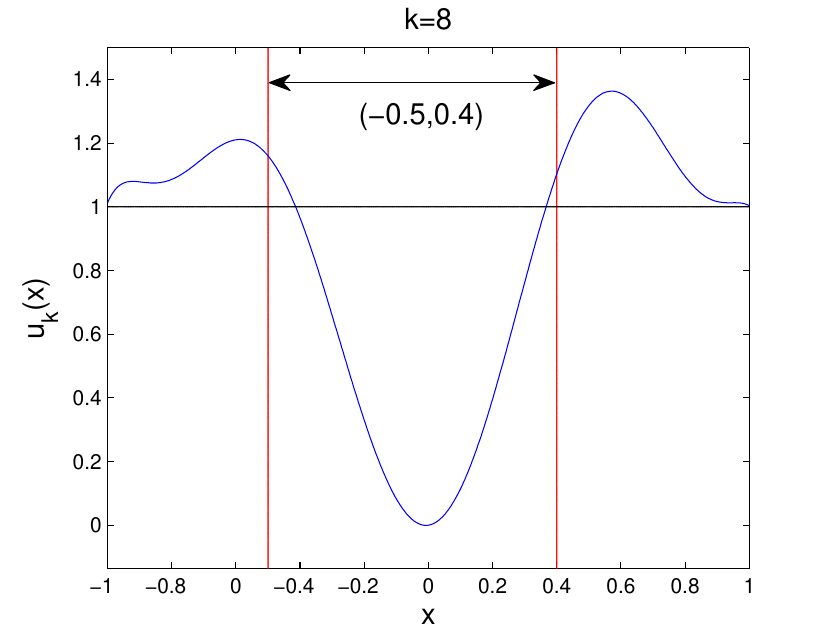}}
\fbox{\includegraphics[width=2.2in,height=1.1in]{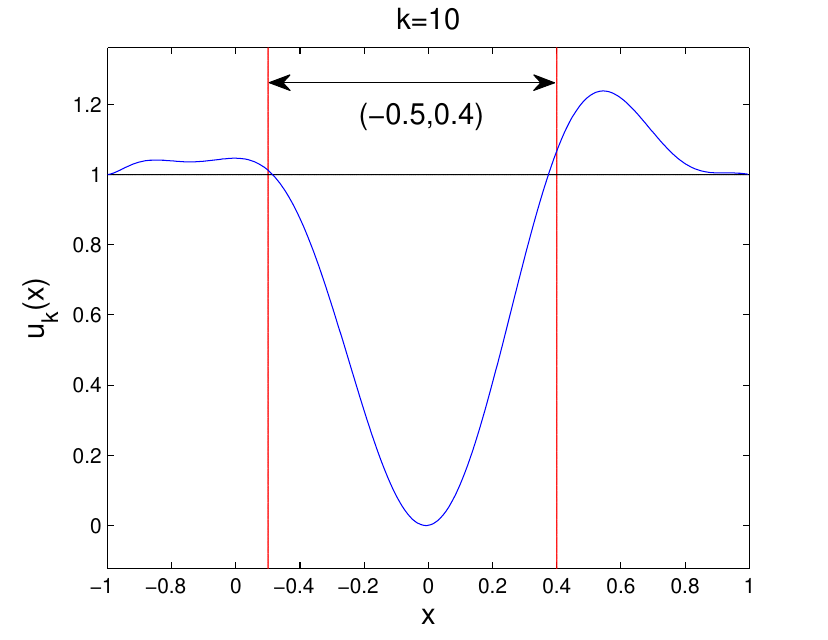}}
\\
\fbox{\includegraphics[width=2.2in,height=1.1in]{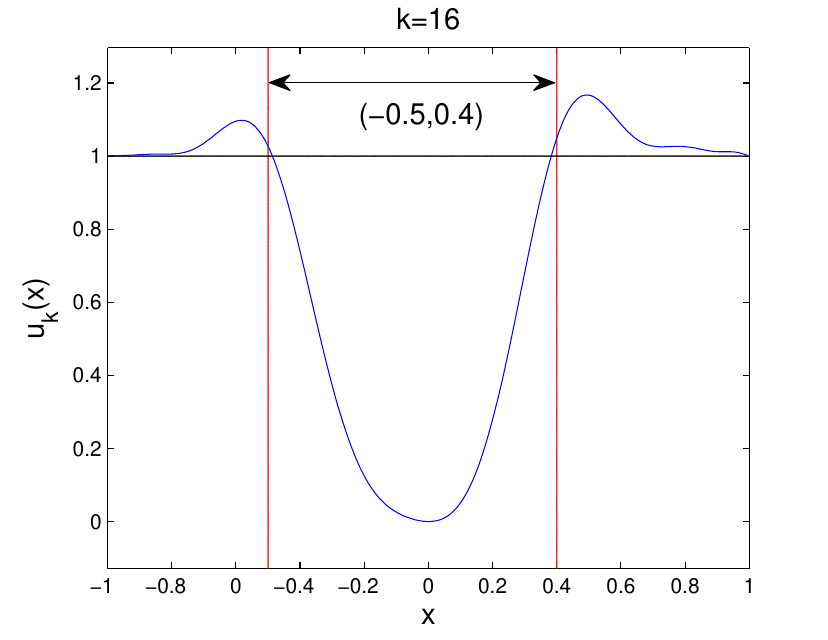}}
\fbox{\includegraphics[width=2.2in,height=1.1in]{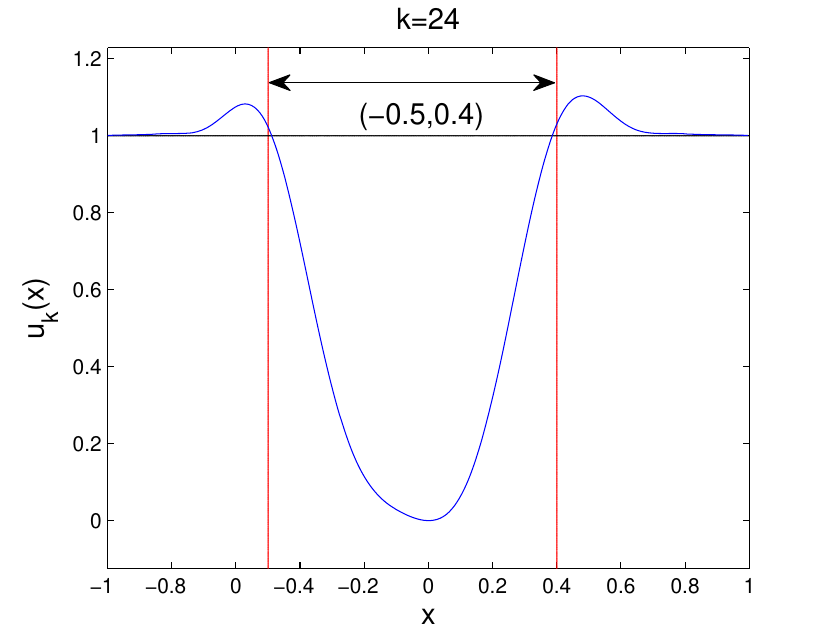}}
\end{tabular}
\caption{An illustration of computed robust domains of attraction for Example \ref{ucd}. Blue curve denotes the level sets of the approximating polynomial $u(\bm{x})$. Red curve denotes the boundary of the maximal robust domain of attraction.}
\label{fig-one-00}
\end{figure*}

\end{example}
\begin{example}
\label{ex0}
The second example considers scaled version of the reversed-time Van der Pol oscillator free of perturbations \cite{henrion2014} given by 
\begin{equation*}
\label{ex10}
\begin{array}{lll}
\dot{x}=-2y,\\
\dot{y}=0.8x+10(x^2-0.21)y,
\end{array}
\end{equation*}
with $\mathcal{X}=\{\bm{x}\mid \frac{x^2+y^2}{1.21}<1\}$.

The origin is a locally uniformly exponentially stable state. For this example, there exists a limit cycle, which is the boundary of the maximal robust domain of attraction. The limit cycle is included in $\mathcal{X}$. \oomit{As in Example \ref{ucd}, $q(\bm{x})$ is a local Lyapunov function for this system.}Theorem \ref{inner} indicates that the strict one sub-level set of the approximating polynomial $u(\bm{x})$ computed by solving \eqref{sos} is a robust domain of attraction. Plots of computed robust domains of attraction for approximating polynomials of degree $k =8, 16$, are shown in Fig. \ref{fig-one-0}. The visualized results in Fig. \ref{fig-one-0} further confirm that the strict one sub-level set of the approximating polynomial $u(\bm{x})$ returned by solving \eqref{sos} is indeed a robust domain of attraction. In order to quantitatively assess the quality of computed robust domains of attraction, we use the simulation technique to synthesize an estimate $\tilde{\mathcal{R}}$ of the maximal robust domain of attraction by gridding the state space, and then compute the relative volume errors approximately according to the formula
$(1-\frac{\text{number of grid states in }\mathcal{R}_{k}}{\text{number of grid states in }\tilde{\mathcal{R}}})\times 100\%$, where $\mathcal{R}_k$ is the robust domain of attraction formed by the approximating polynomial of degree $k$. The estimate $\tilde{\mathcal{R}}$ is shown in Fig. \ref{fig-one-0}. The visualized results in Fig. \ref{fig-one-0} indicate that the estimate $\tilde{\mathcal{R}}$ approximates the maximal robust domain of attraction quite well. The relative volume errors are reported in Table \ref{table1}. From Fig. \ref{fig-one-0} and Table \ref{table1} we observe fairly good tightness of estimates since $k=8$.

\begin{table}[h!]
\begin{center}
\begin{tabular}{|l|r|r|r|r|r|r|}
  \hline
      $k$&$8$& $16$\\\hline
       error& 7.99\%& 5.84\%\\\hline
   \end{tabular}
\end{center}
\caption{Relative volume error estimations of computed robust domains of attraction to the maximal robust domain of attraction as a function of the approximating polynomial degree for Example \ref{ex0}. }
\label{table1}
\end{table}

\begin{figure*}[h]
\centering
\setlength\fboxsep{0pt}
\setlength\fboxrule{0.15pt}
\begin{tabular}{cccc}
\fbox{\includegraphics[width=1.6in,height=1.5in]{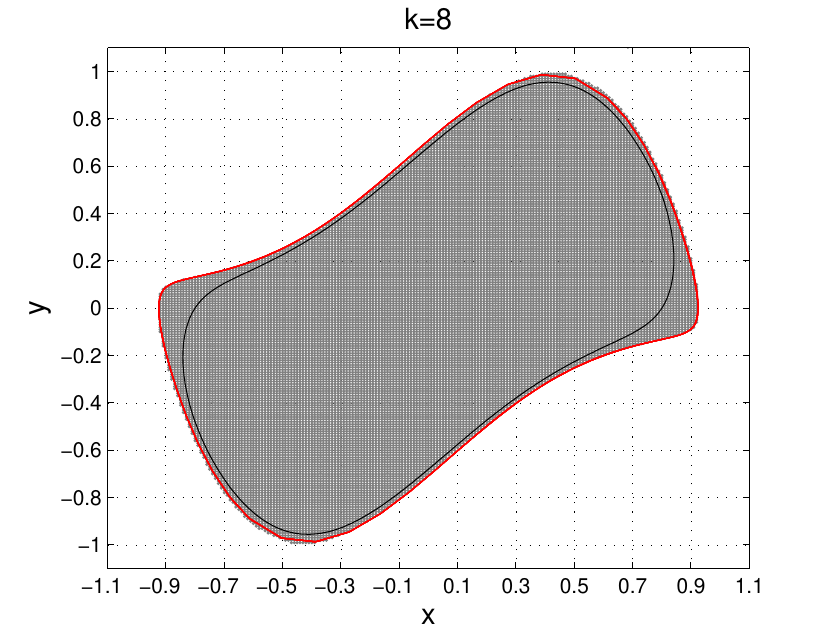}}
%\fbox{\includegraphics[width=1.6in,height=2.2in]{extwo_12.pdf}}
\fbox{\includegraphics[width=1.6in,height=1.5in]{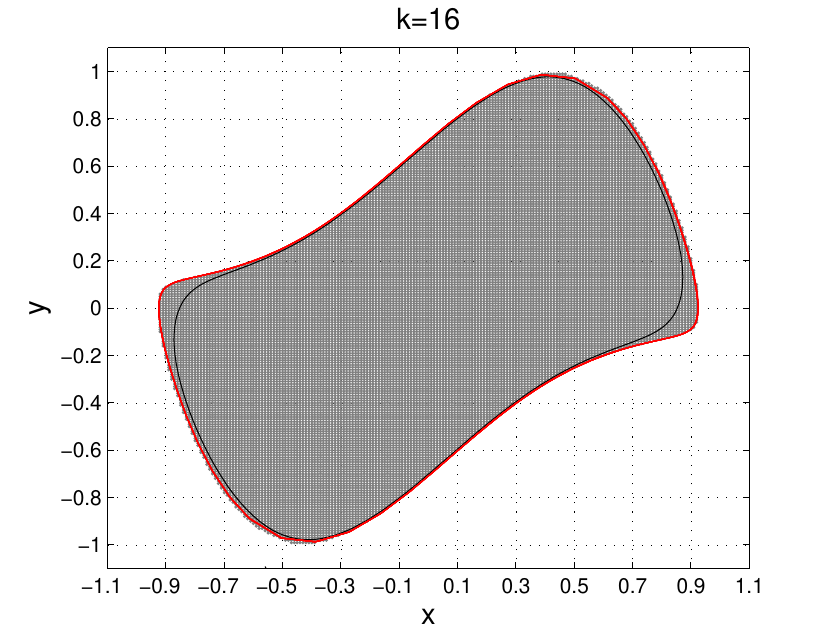}}
\end{tabular}
\caption{An illustration of computed robust domains of attraction for Example \ref{ex0}. Black curve denotes the boundary of the computed robust domain of attraction. Red curve denotes the limit cycle, which is the boundary of the maximal robust domain of attraction. Gray region denotes an estimate $\tilde{\mathcal{R}}$ of the maximal robust domain of attraction, which is computed using simulation techniques.}
\label{fig-one-0}
\end{figure*}
\end{example}

\begin{example}
\label{chemical}
In this example we consider a chemical oscillator from \cite{papachristodoulou2002}. The simplest, but chemically plausible trimolecular reaction is 
\[X \overset{k_1}{\underset{k_{-1}}{\rightleftharpoons}} A, B\overset{k_2}{\rightarrow} Y, 2X+Y\overset{k_3}{\rightarrow} 3X,\]
in which species $X$ is in dynamical equilibrium with species $A$ with a forward rate of reaction $k_1$ and a backward rate of reaction $k_{-1}$, and so on. Using the law of mass action, and non-dimensionalising the equations, we have
\begin{equation*}
\label{ex100}
\begin{array}{lll}
\dot{x}=a-x+x^2y,\\
\dot{y}=b-x^2y,
\end{array}
\end{equation*}
where $x$, $y$ are the non-dimensional concentrations of $X$ and $Y$, and $a$, $b$ are non-negative constant parameters that depend on the concentrations of $A$ and $B$. 

Like \cite{papachristodoulou2002}, we take $a=0.5$ and $b=0.5$. The system has a locally uniformly exponentially stable state $(1,0.5)$. Since $x$ and $y$ are the non-dimensional concentrations of $X$ and $Y$, they are naturally positive, i.e. $x> 0$ and $y> 0$. On the other hand, they should have upper bound on the concentrations. In this paper we impose the inequality constraint $(x-1)^2+4(y-0.5)^2<1$.

After translating the equilibrium $(1,0.5)$ to the origin $(0,0)$ and making $x_1=x,x_2=2y$, we obtain the equivalent system of interest in this example, 
\begin{equation}
\label{ex1000}
\begin{split}
&\dot{x}_1=0.5-(x_1+1)+(x_1+1)^2(\frac{x_2}{2}+0.5)\\
&\dot{x}_2=1-(x_1+1)^2(x_2+1)
\end{split}
\end{equation}
with $\mathcal{X}=\{\bm{x}\mid x_1^2+x_2^2<1\}$.

The origin is a locally uniformly exponentially stable state for system \eqref{ex1000}. Theorem \ref{inner} indicates that the strict one sub-level set of the approximating polynomial $u(\bm{x})$ computed by solving \eqref{sos} is a robust domain of attraction. Plots of computed robust domains of attraction for approximating polynomials of degree $k =4, 8, 12$, are shown in Fig. \ref{fig-one-0_che}. The visualized results in Fig. \ref{fig-one-0_che} further confirm that the strict one sub-level set of the approximating polynomial $u(\bm{x})$ returned by solving \eqref{sos} is indeed a robust domain of attraction. Like Example \ref{ex0}, we use the simulation technique to quantitatively assess the quality of computed robust domains of attraction. The relative volume errors are listed in Table \ref{table1_che}. From Fig. \ref{fig-one-0_che} and Table \ref{table1_che} we observe fairly good tightness of estimates since $k=8$. 

\begin{figure*}[h]
\centering
\setlength\fboxsep{0pt}
\setlength\fboxrule{0.15pt}
\begin{tabular}{cccc}
\fbox{\includegraphics[width=1.5in,height=1.2in]{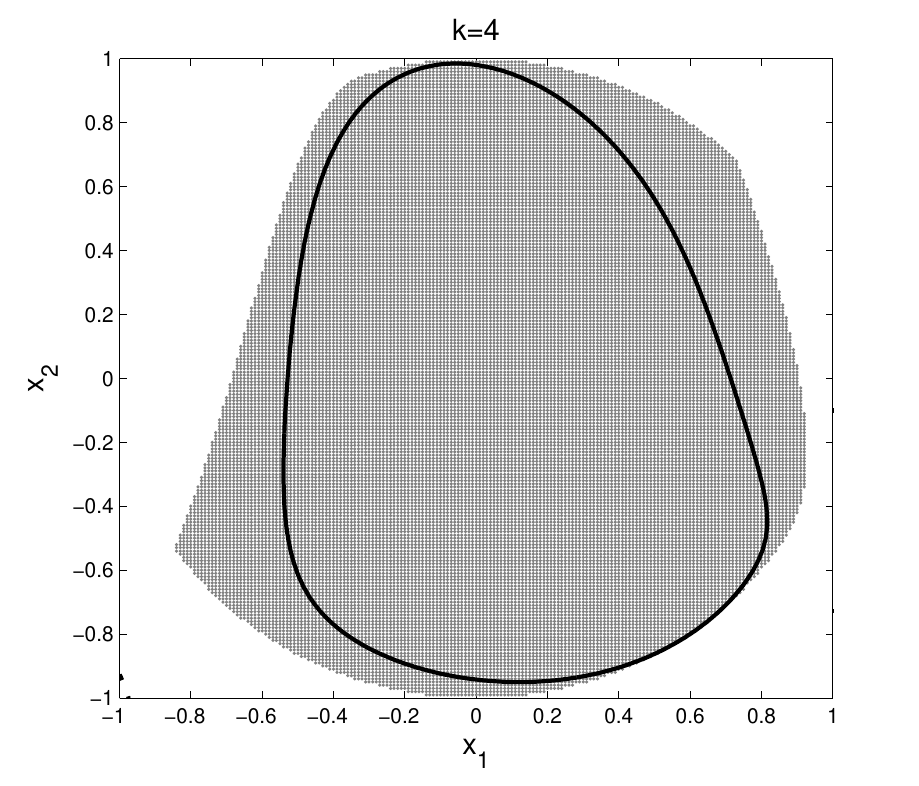}}
\fbox{\includegraphics[width=1.5in,height=1.2in]{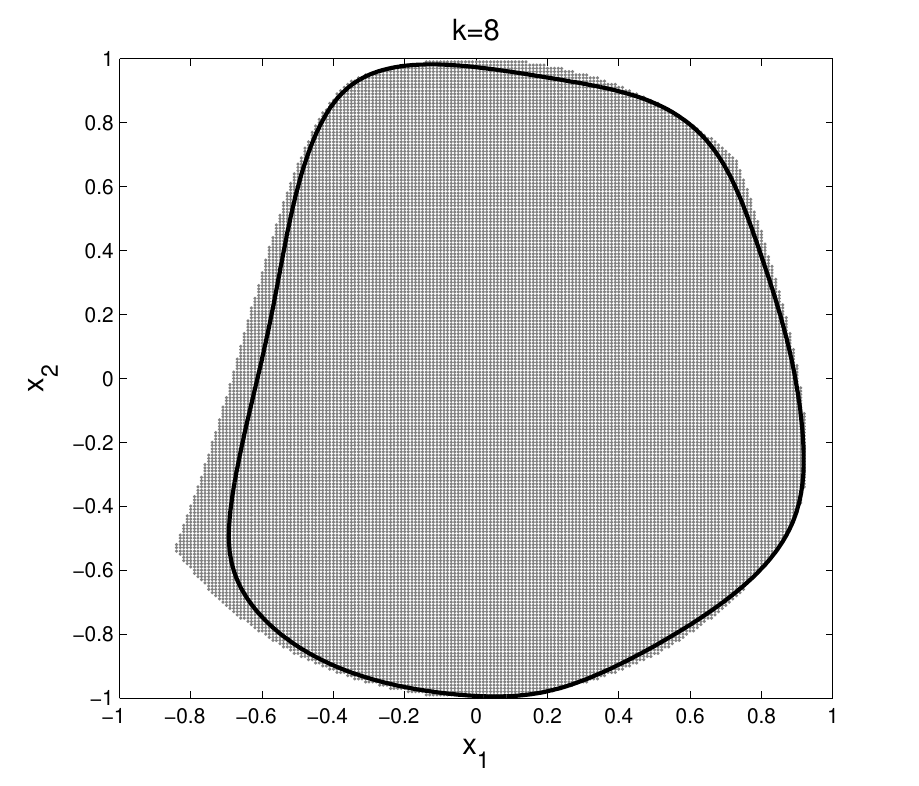}}
\fbox{\includegraphics[width=1.5in,height=1.2in]{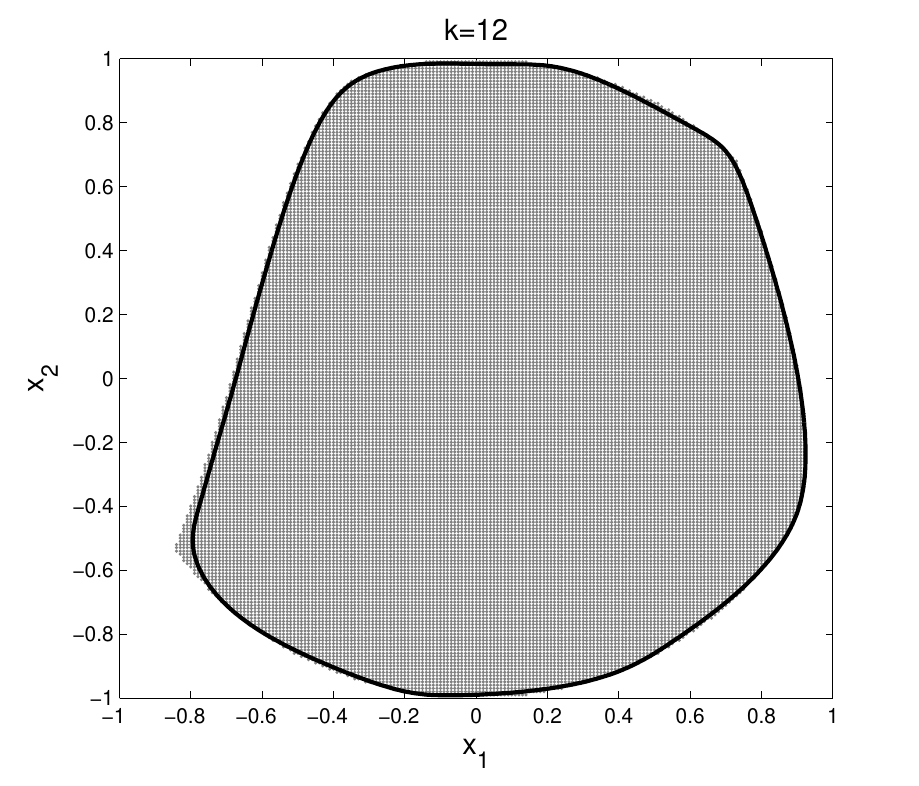}}
\end{tabular}
\caption{An illustration of computed robust domains of attraction for Example \ref{chemical}. Black curve denotes the boundary of the computed robust domain of attraction. Gray region denotes an estimate of the maximal robust domain of attraction, which is computed using simulation techniques.}
\label{fig-one-0_che}
\end{figure*}

\begin{table}[h!]
\begin{center}
\begin{tabular}{|l|r|r|r|r|}
  \hline
      $k$&$4$&$8$&$12$\\\hline
       error &25.02\%& 6.47\%& 1.75\%\\\hline
   \end{tabular}
\end{center}
\caption{Relative volume error estimations of computed robust domains of attraction to the maximal robust domain of attraction as a function of the approximating polynomial degree for Example \ref{chemical}. }
\label{table1_che}
\end{table}

\end{example}

\begin{example}
\label{ex1}
Consider a system from \cite{grune2015}, whose dynamics are described by
\begin{equation}
\label{ex11}
\begin{array}{lll}
\dot{x}=-x+y,\\
\dot{y}=-\frac{1}{10}x-2y-x^2+(d+\frac{1}{10})x^3.
\end{array}
\end{equation}

The origin for system \eqref{ex11} is locally uniformly exponentially stable. In this example $D=[4.9,5.1]$ and $\mathcal{X}=\{\bm{x}\mid x^2+y^2<1\}$. In order to fit \eqref{sos}, we reformulate \eqref{ex11} as the following equivalent system 
\begin{equation*}
\label{ex12}
\begin{array}{lll}
\dot{x}=-x+y,\\
\dot{y}=-\frac{1}{10}x-2y-x^2+(d+5+\frac{1}{10})x^3,
\end{array}
\end{equation*}
where $D=\{d\in \mathbb{R}\mid 100d^2-1\leq 0\}$ and $\mathcal{X}=\{\bm{x}\mid x^2+y^2<1\}$.

Theorem \ref{inner} indicates that the strict one sub-level set of the solution to \eqref{sos} is a robust domain of attraction. Plots of computed robust domains of attraction $\mathcal{R}_{k}$, $k=4,6,10$, are shown in Fig. \ref{fig-one-1}. We also give an estimate of the maximal robust domain of attraction by simulation methods and estimate the relative volume errors as in Example \ref{ex0}. From Table \ref{table2}, which lists the relative volume errors, we observe fairly good tightness of the estimates since $k=4$. 

\begin{table}[h!]
\begin{center}
\begin{tabular}{|l|r|r|r|r|r|}
  \hline
      $k$&$4$&$6$& $10$\\\hline
       error & 8.88\%&6.94\%& 4.98\%\\\hline
   \end{tabular}
\end{center}
\caption{Relative volume error estimations of computed robust domains of attraction to the maximal robust domain of attraction as a function of the approximating polynomial degree for Example \ref{ex1}. }
\label{table2}
\end{table}

\begin{figure*}[h]
\centering
\setlength\fboxsep{0pt}
\setlength\fboxrule{0.15pt}
\begin{tabular}{cccc}
\fbox{\includegraphics[width=1.6in,height=1.2in]{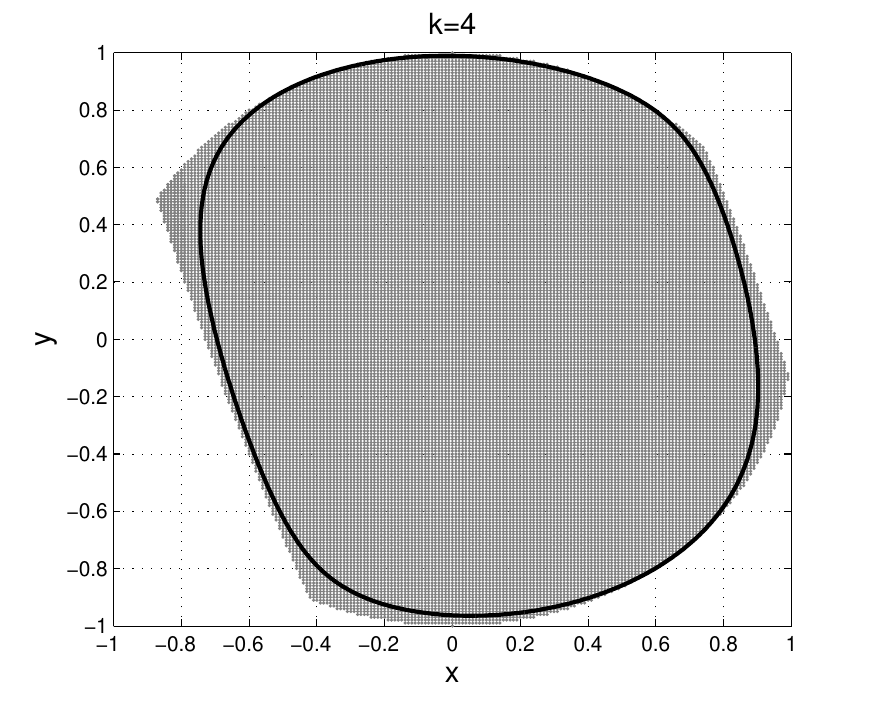}}
\fbox{\includegraphics[width=1.6in,height=1.2in]{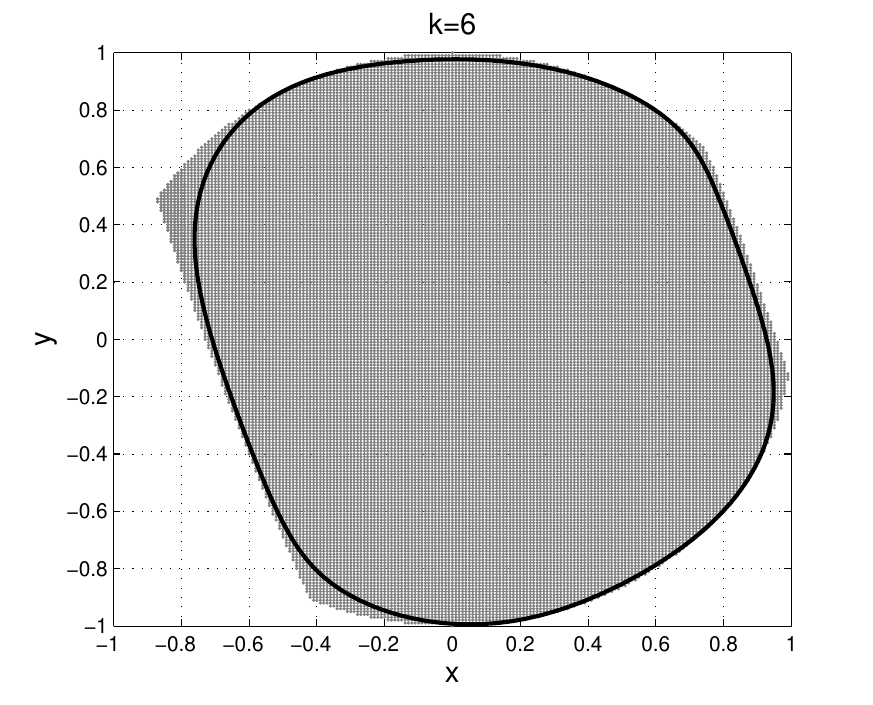}}
%\fbox{\includegraphics[width=1.6in,height=2.2in]{exone_4.pdf}}
\fbox{\includegraphics[width=1.6in,height=1.2in]{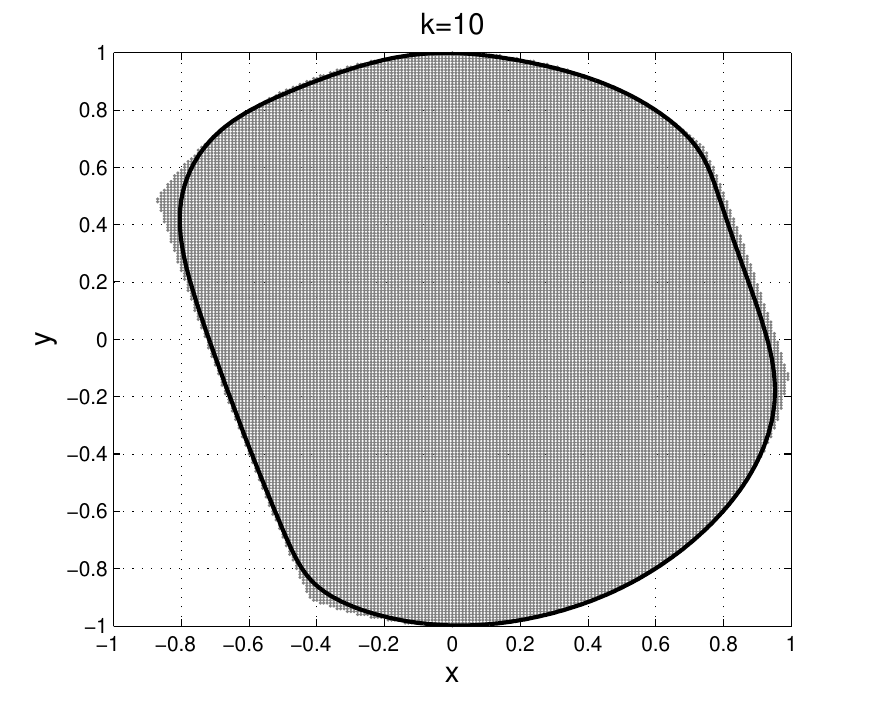}}
%\fbox{\includegraphics[width=1.6in,height=2.2in]{exone_6.pdf}}
\end{tabular}
\caption{An illustration of computed robust domains of attraction for Example \ref{ex1}. Black curve denotes the boundary of the computed domain of attraction. Gray region denotes an estimate of the maximal robust domain of attraction, which is computed using simulation techniques.}
\label{fig-one-1}
\end{figure*}
\end{example}

\begin{example}
\label{ex2}
Consider a seven-dimensional system, which is mainly employed to illustrate the scalability issue of our semi-definite programming based method in dealing with high dimensional system.
\begin{equation}
\label{ex21}
\begin{array}{lll}
&\dot{x}_1=-x_1+0.5x_2,\\
&\dot{x}_2=-x_2+0.4x_3,\\
&\dot{x}_3= -x_3+0.5x_4,\\
&\dot{x}_4= -x_4+0.7x_5,\\
&\dot{x}_5=-x_5+0.5x_6,\\
&\dot{x}_6= -x_6+0.8x_7,\\
&\dot{x}_7= -x_7+10x_1^2+dx_2^2-x_3^2-x_4^2+x_5^2,\\
\end{array}
\end{equation}
where $D=\{d\in \mathbb{R}\mid d^2+0.75-1\leq 0\}$ and $\mathcal{X}=\{\bm{x}\mid \|\bm{x}\|^2<1\}$.

The equilibrium state $\bm{0}$ is locally uniformly exponentially stable. Theorem \ref{inner} indicates that the strict one sub-level set of the solution to \eqref{sos} is a robust domain of attraction. Plots of computed robust domains of attraction for approximating polynomials of degree $k =3,4,5$, on planes $x_1-x_2$ with $x_3=x_4=x_5=x_6=x_7=0$ and $x_1-x_7$ with $x_2=x_3=x_4=x_5=x_6=0$ are shown in Fig. \ref{fig-one-2}. 
In order to shed light on the accuracy of the computed domains of attraction, we use the simulation technique to synthesize coarse estimations of the maximal robust domain of attraction on planes $x_1-x_2$ with $x_3=x_4=x_5=x_6=x_7=0$ and $x_1-x_7$ with $x_2=x_3=x_4=x_5=x_6=0$ by taking initial states in the state spaces $\{\bm{x}\mid \|\bm{x}\|^2<1\wedge x_3=0\wedge x_4=0\wedge x_5=0\wedge x_6=0\wedge x_7=0\}$ and $\{\bm{x}\mid \|\bm{x}\|^2<1\wedge x_2=0\wedge x_3=0\wedge x_4=0\wedge x_5=0\wedge x_6=0\}$, respectively. They are the gray regions in Fig. \ref{fig-one-2}. The relative volume errors on both planes $x_1-x_2$ with $x_3=x_4=x_5=x_6=x_7=0$ and $x_1-x_7$ with $x_2=x_3=x_4=x_5=x_6=0$, which are computed following the way in Example \ref{ex0}, are reported in Table \ref{table71}. 

\begin{table}[h!]
\begin{center}
\begin{tabular}{|l|r|r|r|}
  \hline
      $k$&$3$&$4$&$5$\\\hline
       error($x_1-x_2$) &32.17\%& 16.83\%& 10.57\%\\\hline
       error($x_1-x_7$) &32.35\%& 20.40\%& 11.21\%\\\hline
   \end{tabular}
\end{center}
\caption{Relative volume error estimations of computed robust domains of attraction to the maximal robust domain of attraction on planes $x_1-x_2$ with $x_3=x_4=x_5=x_6=x_7=0$ and $x_1-x_7$ with $x_2=x_3=x_4=x_5=x_6=0$ as a function of the approximating polynomial degree for Example \ref{ex2}. }
\label{table71}
\end{table}

\begin{figure}
\centering
\setlength\fboxsep{0pt}
\setlength\fboxrule{0.15pt}
\begin{tabular}{cccc}
\fbox{\includegraphics[width=2.5in,height=1.5in]{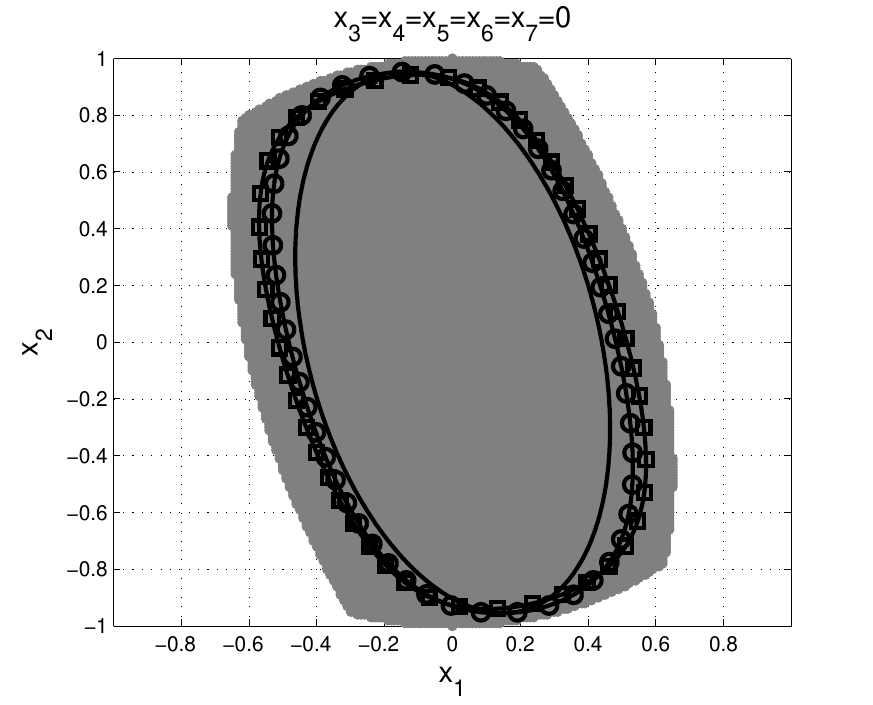}}
\fbox{\includegraphics[width=2.5in,height=1.5in]{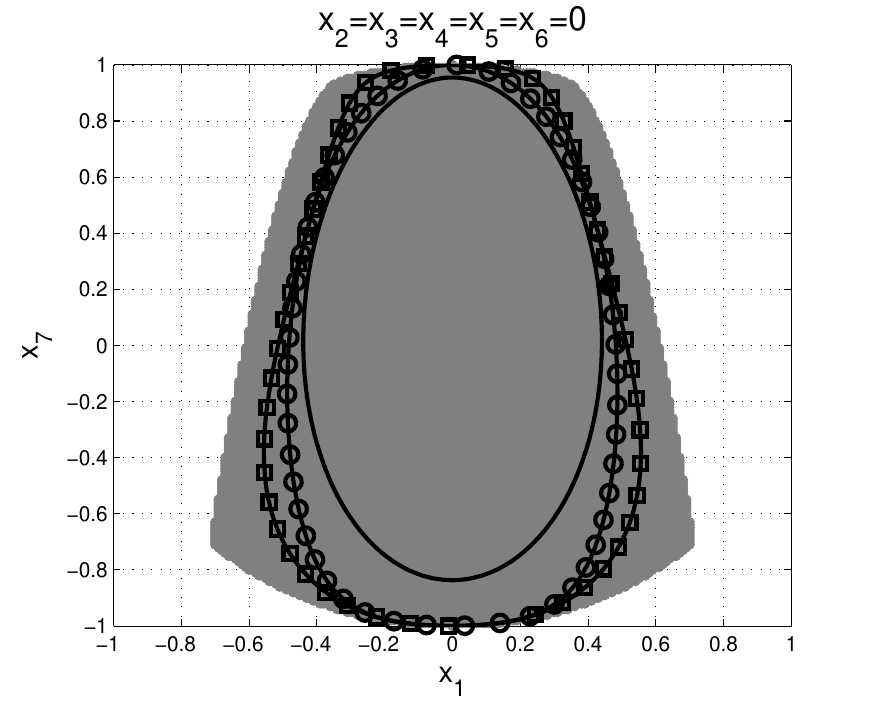}}
\end{tabular}
\caption{An illustration of computed robust domains of attraction for Example \ref{ex2}. Black curve with square marker, black curve with circle marker and black curve denote the boundaries of the robust domains of attraction computed when $k=5$, $4$ and $3$ respectively. Gray region denotes an estimate of the maximal robust domain of attraction.}
\label{fig-one-2}
\end{figure}
\end{example}

Based on Examples \ref{ucd}$\sim$\ref{ex2}, we conclude that approximating polynomials of higher degree would return less conservative robust domains of attraction. Although the size of the semi-definite program in \eqref{sos} grows extremely
fast with the number of state and perturbation variables and the degree of the polynomials in \eqref{sos}, it is worth emphasizing that we are dealing with nonlinear non-convex infinite-dimensional problems by solving a semi-definite programming problem, which is relatively simple to implement. Yet, despite the difficulty of the problems considered, the constructed semi-definite program \eqref{sos} possess solutions whose strict one sub-level sets inner-approximate the interior of the maximal robust domain of attraction in measure under appropriate assumptions according to Theorem \ref{convergence}. In order to improve the scalability issue of our method and further apply it to higher dimensional systems, some techniques such as exploiting the algebraic structure \cite{parrilo2005} of the semi-definition programming \eqref{sos} and using template polynomials such as (scaled-) diagonally-dominant-sums-of-squares polynomials \cite{majumdar2014,Ahmadi17,ahmadi2017} would facilitate such gains. 

In the rest we further give a brief discussion on the other parameters that control the performance of the semi-definite program \eqref{sos} in terms of relative volume error estimations of computed robust domains of attraction to the maximal robust domain of attraction. The parameters related to the degree of polynomials in Table \ref{table} are already discussed either above or in the existing literature pertinent to the sum-of-squares optimization:  polynomials of higher degree in \eqref{sos} would return less conservative robust domains of attraction generally. Thus, we just discuss the parameters $\alpha$, $\delta$ and $R$. Finding an optimal combination of these parameters to obtain the least conservative estimates is not the focus of this discussion and would be discussed in detail in the future work. The parameters $\alpha$ and $R$ could reflect the size of the sets $\mathcal{X}_{\infty}$ and $B(\bm{0},R)$ in \eqref{sos}, respectively. The analysis on these three parameters is based on the other parameter values, i.e., the degrees of polynomials, listed in Table \ref{table}, and is summarized in Table \ref{table01} $\sim$ \ref{table03}. ``-" in Table \ref{table01} $\sim$ \ref{table03} means that \eqref{sos} did not return a feasible solution and thus we did not obtain the relative volume error estimation. For the convenience of comparison we also add the relative volume error estimations listed in Table \ref{table0} $\sim$ \ref{table71} into Table \ref{table01} $\sim$ \ref{table03}.

Table \ref{table01} $\sim$ \ref{table03} indicate that the parameters $\delta$, $\alpha$ and $R$ definitely influence the performance of \eqref{sos} for some cases. Increasing $\delta$ and $\alpha$ and/or decreasing $R$ may result in less conservative estimates, but this does not hold always. This influence is weakening with the degree of approximating polynomials increasing.

\begin{table*}[t!]
\begin{center}
\begin{tabular}{|p{1cm}|p{1cm}|p{1cm}|p{1cm}|p{1cm}|p{1cm}|p{1cm}|}
  \hline
      \multirow{3}{*}{Ex. 4.1}& &$k$&$8$&$10$&$16$&$24$\\
                        &$\delta=3$&error &2.93\%& 2.11\%& 1.75\%& 1.73\%\\
                        &$\delta=1$&error&$13.2\%$&4.48\%&3.41\%&2.85\%\\\hline
              
   \end{tabular}
\end{center}
\begin{center}
\begin{tabular}{|p{1cm}|p{1.68cm}|p{1.68cm}|p{1.75cm}|p{1.75cm}|}
  \hline
      \multirow{3}{*}{Ex. 4.2 }& &$k$&$8$&$16$\\
                        &$\delta=3$&error &4.86\%& 0.53\%\\
                        &$\delta=1$&error &7.99\%& 3.27\%\\\hline
   \end{tabular}
\end{center}
\begin{center}
\begin{tabular}{|p{1cm}|p{1.25cm}|p{1.25cm}|p{1.25cm}|p{1.4cm}|p{1.25cm}|}
  \hline
      \multirow{3}{*}{Ex. 4.3 }&&$k$&$4$&$8$&$12$\\
                       &$\delta=3$&error & -& 33.35\%& 13.13\%\\
                       
                        &$\delta=1$&error & 25.02\%& 6.47\%& 1.75\%\\\hline
   \end{tabular}
\end{center}
\begin{center}
\begin{tabular}{|p{1cm}|p{1.25cm}|p{1.25cm}|p{1.25cm}|p{1.4cm}|p{1.25cm}|}
  \hline
      \multirow{3}{*}{Ex. 4.4 }&&$k$&$4$&$6$&$10$\\
                       &$\delta=3$&error &9.04\%& 6.88\%& 5.35\%\\
                       
                       &$\delta=1$&error&  8.88\%&6.94\%& 4.98\%\\\hline
   \end{tabular}
\end{center}
\begin{center}
\begin{tabular}{|p{1.0cm}|p{0.95cm}|c|c|c|c|}
  \hline
      \multirow{5}{*}{Ex. 4.5}& &$k$&$3$&$4$&$5$\\
                       &\multirow{2}{*}{$\delta=3$} &error($x_1-x_2$) &55.99\%& 42.51\%& 38.16\%\\
       & &error($x_1-x_7$) &57.56\%&48.61\%& 39.47\%\\
        &\multirow{2}{*}{$\delta=1$} &error($x_1-x_2$)  &32.17\%& 16.83\%& 10.57\%\\
          & &error($x_1-x_7$) &32.35\%& 20.40\%&11.21\%\\\hline                                
   \end{tabular}
\end{center}
\caption{An illustration for relative volume error estimations affected by $\delta$. }
\label{table01}
\end{table*}

\begin{table*}[t!]
\begin{center}
\begin{tabular}{|p{1cm}|p{1.5cm}|p{1cm}|p{1cm}|p{1cm}|p{1cm}|p{1cm}|}
  \hline
      \multirow{3}{*}{Ex. 4.1}&&$k$&$8$&$10$&$16$&$24$\\
                        &$\alpha=0.01$ &error &5.73\%& 4.06\%& 3.35\%& 2.76\%\\
                        &$\alpha=10^{-4}$ &error &13.2\%& 4.48\%& 3.41\%& 2.85\%\\        
                        
                        \hline
   \end{tabular}
\end{center}
\begin{center}
\begin{tabular}{|p{1cm}|p{2.4cm}|p{1.65cm}|p{1.65cm}|p{1.65cm}|}
  \hline
      \multirow{3}{*}{Ex. 4.2}&&$k$&$8$&$16$\\
                        &$\alpha=0.1$&error &8.04\%& 3.39\%\\
                        &$\alpha=0.01$&error &7.99\%& 3.27\%\\\hline
   \end{tabular}
\end{center}
\begin{center}
\begin{tabular}{|p{1cm}|p{1.5cm}|p{1.35cm}|p{1.35cm}|p{1.35cm}|p{1.35cm}|}
  \hline
      \multirow{3}{*}{Ex. 4.3}&&$k$&$4$&$8$&$12$\\
                        &$\alpha=0.1$&error &21.00\%& 5.90\%& 1.78\%\\
                        &$\alpha=0.01$&error &25.02\%& 6.47\%& 1.75\%\\\hline
   \end{tabular}
\end{center}
\begin{center}
\begin{tabular}{|p{1cm}|p{1.5cm}|p{1.35cm}|p{1.35cm}|p{1.35cm}|p{1.35cm}|}
  \hline
      \multirow{3}{*}{Ex. 4.4 }&&$k$&$4$&$6$&$10$\\
                       &$\alpha=0.1$&error &8.13\%& 5.57\%& 3.49\%\\
                       
                             &$\alpha=0.01$& error & 8.88\%&6.94\%& 4.98\%\\\hline
   \end{tabular}
\end{center}
\begin{center}
\begin{tabular}{|p{1cm}|p{1.45cm}|c|c|c|c|}
  \hline
      \multirow{5}{*}{Ex. 4.5}& &$k$&$3$&$4$&$5$\\
                      
       & \multirow{2}{*}{$\alpha=0.1$}&error($x_1-x_2$) &32.63\%& 16.80\%&10.10\%\\
        & &error($x_1-x_7$) &32.36\%& 20.13\%& 10.86\%\\
         &\multirow{2}{*}{$\alpha=0.01$} &error($x_1-x_2$)  &32.17\%& 16.83\%& 10.57\%\\
        & &error($x_1-x_7$) &32.35\%& 20.40\%&11.21\%\\\hline                 
   \end{tabular}
\end{center}
\caption{An illustration for relative volume error estimations affected by $\alpha$.}
\label{table02}
\end{table*}

\begin{table*}[t!]
\begin{center}
\begin{tabular}{|p{1cm}|p{1.5cm}|p{1cm}|p{1cm}|p{1cm}|p{1cm}|p{1cm}|}
  \hline
      \multirow{3}{*}{Ex. 4.1}&&$k$&$8$&$10$&$16$&$24$\\
                         &$R=1.1$&error&14.55\%& 7.58\%& 3.75\%& 2.82\%\\
                         
                   &$R=1.01$ &error &13.2\%& 4.48\%& 3.41\%& 2.85\%\\\hline
   \end{tabular}
\end{center}
\begin{center}
\begin{tabular}{|p{1cm}|p{2.4cm}|p{1.65cm}|p{1.65cm}|p{1.65cm}|}
  \hline
      \multirow{3}{*}{Ex. 4.2}&&$k$&$8$&$16$\\
                     &$R=1.5$&error&12.29\%& 6.10\%\\
                     &$R=1.211$&error &7.99\%& 3.27\%\\\hline
   \end{tabular}
\end{center}
\begin{center}
\begin{tabular}{|p{1cm}|p{1.5cm}|p{1.35cm}|p{1.35cm}|p{1.35cm}|p{1.35cm}|}
  \hline
      \multirow{3}{*}{Ex. 4.3}& &$k$&$4$&$8$&$12$\\
                     &$R=1.5$&error&-& 7.28\%& 3.95\%\\
                     &$R=1.01$&error &25.02\%& 6.47\%& 1.75\%\\\hline
   \end{tabular}
\end{center}
\begin{center}
\begin{tabular}{|p{1cm}|p{1.5cm}|p{1.35cm}|p{1.35cm}|p{1.35cm}|p{1.35cm}|}
  \hline
      \multirow{3}{*}{Ex. 4.4 }&&$k$&$4$&$8$&$12$\\
             &$R=1.5$&error & 8.20\%& 5.06\%& 4.46\%\\
                       
                             &$R=1.01$& error & 8.88\%&6.94\%& 4.98\%\\\hline
   \end{tabular}
\end{center}
\begin{center}
\begin{tabular}{|p{1cm}|p{1.45cm}|c|c|c|c|}
  \hline
      \multirow{5}{*}{Ex. 4.5}&&$k$&$3$&$4$&$5$\\
                       &\multirow{2}{*}{$R=1.5$}&error($x_1-x_2$) &32.76\%& 18.21\%& 9.77\%\\
      & &error($x_1-x_7$) &32.79\%& 20.64\%& 9.99\%\\
      &\multirow{2}{*}{$R=1.01$} &error($x_1-x_2$)  &32.17\%& 16.83\%& 10.57\%\\
       & &error($x_1-x_7$) &32.35\%& 20.40\%&11.21\%\\\hline                 
   \end{tabular}
\end{center}
\caption{An illustration for relative volume error estimations affected by $R$. }
\label{table03}
\end{table*}

\section{Conclusion}
\label{con}
In this paper a semi-definite programming based method was proposed for synthesizing robust domains of attraction for state-constrained perturbed polynomial systems. The semi-definite program, which falls within the convex programming framework and can be solved in polynomial time via interior-point methods, was constructed from a generalized Zubov's equation. Under appropriate  assumptions the existence of solutions to the constructed semi-definite program is guaranteed and there exists a sequence of solutions such that their strict one sub-level sets inner-approximate the interior of the maximal robust domain of attraction in measure. Finally, we evaluated the performance of the method on five examples.

\bibliographystyle{siamplain}
\bibliography{reference}

\end{document}